\documentclass[10pt]{article}
\usepackage{graphicx}
\usepackage{dcolumn}
\usepackage{bm,color}
\usepackage{amsmath, latexsym, amssymb, amscd, amsfonts, amsthm, epsfig, mathrsfs, graphicx, url,verbatim}
\usepackage[english]{babel}
\usepackage[braket]{qcircuit}

\DeclareFixedFont{\sfracFont}{U}{euf}{b}{n}{7pt}
\newtheorem{theorem}{Theorem}
\newtheorem{lemma}{Lemma}
\newtheorem{definition}{Definition}
\newtheorem{assumption}{Assumption}
\newtheorem{corollary}{Corollary}
\newtheorem{remark}{Remark}

\newtheoremstyle{mytheo}
  {16pt}
  {0pt}
  {}
  {}
  {\bfseries}
  {:}
  {.5em}
  {}

\theoremstyle{mytheo}
\newtheorem{example}{Example}

\begin{document}
\title{Discretisation of quantum feedback networks for implementation on a quantum computer}

\author{Luc Bouten and John Gough}


\date{\today}
\maketitle

\begin{abstract}
Quantum Feedback Network Theory (also known colloquially as the $SLH$-framework) is a powerful tool for modeling  systems in quantum optics. This paper describes a method for discretising $SLH$-networks such that they can be implemented on a quantum computer. Building on \cite[Thm 1]{BvH08}, we establish strong convergence, uniformly on compact time intervals, of the discretised system to the continuous $SLH$-network as the 
discretisation mesh goes to zero.
\end{abstract}

\maketitle

\section{Introduction}

The starting point of our investigation is  
a Hudson-Parthasarathy quantum stochastic differential equation (QSDE) 
\cite{HuP84}. It describes a unitary 
process that can be interpreted as a Markovian approximation of the interaction 
of a system, e.g.\ an atom, a collection of atoms or optical elements such as beam 
splitters or mirrors, with the electromagnetic field \cite{AFLu90, Gou05}. 
As we will see, the HP-equation is characterized by its triple of coefficients $(S,L,H)$ 
where $S$ is a scattering-matrix of system operators describing the effect on the system of scattering photons between different field-channels, $L$ is a vector of coupling-operators of the system to the field channels, and  $H$ is the Hamiltonian describing the internal evolution of the system. 

In the Hudson-Parthasarathy theory, the electromagnetic field is modeled as a collection of channels supporting photons that move through the channel at the speed of light. The photons interact with the system at the origin. In this way, we can think of a field channel as an \emph{input}, the photons that are flying towards the system at the origin, and an \emph{output}, the photons that have already interacted with the system at the origin and are flying away. In quantum optics, this picture has been introduced by Gardiner and Collett \cite{GC84} and is known under the name input-output formalism. Photons move at the speed of light and, for convenience, we set $c=1$; we may therefore identify the distance of a photon to the system at the origin with the time it takes to get to the origin, interpreting negative times as the time that has elapsed since the photon passed the origin.  
 
Let us now look at a situation where we have $N$ systems at the origin (i.e.~at $t=0$) each connected to $m_i$ field channels ($i =1,\ldots,N$). That is, the $i$th system has $m_i$ inputs and $m_i$ outputs. A natural step is then to connect some of the outputs to some of the inputs. In this way, creating internal lines in which photons can fly from one 
system to another, itself included.  This situation was first introduced by Yanagisawa and Kimura for a specific example system in \cite{YaK03a,YaK03b} and studied at the level of transfer functions. Closely related is also 
the work by Gardiner \cite{Gar93} and Carmichael \cite{Car93a} on cascaded systems.

A first attempt at a rigorous general mathematical definition of the situation described above was provided by Gough and James  \cite{GoJ09b, GoJ09}. They define a quantum feedback network by simply taking the sum of the Chebotarev-Gregoratti Hamiltonians \cite{Che97, Gre01} of the subcomponents, creating inner lines by connecting inputs to outputs, and gathering all boundary conditions of the sub-components to define the domain of the network Hamiltonian. Unfortunately, they leave the essential self-adjointness of the network Hamiltonian on this domain open as a conjecture \cite{GoJ09b}.    
Due to the standard notation $S$, $L$ and $H$ for the coefficients of a Hudson-Parthasarathy QSDE, 
the quantum feedback networks of Gough and James came to be known as $SLH$-networks, see also 
the review \cite{CKS17} and the references therein.

To circumvent the problems with the essential self-adjointness of the network Hamiltonian on its domain, we resort to an alternate definition of the time evolution of the network. We simply work on the level of the unitary co-cycle originating from the QSDE, which we periodically interrupt to transfer the output of channels that are connected to an input channel, to the respective input channel. This is done by defining a so-called braid map $B_\sigma$ for a given time delay $\sigma \ge0$. We use the co-cycle property to split the evolution in a part before and a part after the interruption. The interruption itself is represented by the unitary braid map $B_\sigma $.  
This rigorous definition of an $SLH$-network is a new contribution by this paper.

There is extensive literature on discrete approximation of QSDEs by two-level systems: 
weak convergence was covered by Parthasarathy \cite{Par88}, a significant step forward was 
provided by Lindsay and Parthasarathy \cite{LiP88} showing weak convergence of the quantum flow, 
and Attal and Pautrat \cite{AtP06} proved strong convergence uniformly on compact time 
intervals for very specific discrete noise models.  Attal and Pautrat \cite{AtP06} cover the multiple channel case 
by using $m+1$ dimensional systems: one vector representing the vacuum and the other $m$ vectors 
each representing an excitation in the $m$ respective channels.    
In quantum computer simulations of QSDEs \cite{ViB19, BVS19}, however, 
each channel is usually represented by its own copy of a qubit in each time step. 
That is, each channel is represented by a string of qubits. 
In this situation, in each time step, the noise is given by a $2^m$ dimensional system, 
which is outside the Attal and Pautrat framework for $m>1$. 
In \cite{BvH08}, a theory is developed for approximating QSDEs using 
strong convergence uniformly on compact time intervals for a very rich class of noise models and 
QSDE coefficients (even unbounded coupling operators are allowed).

The main result in this paper, Theorem \ref{theorem main result}, 
is an extension of the convergence theorem \cite[Thm 1]{BvH08} to the setting of $SLH$-networks. 
Theorem \ref{theorem main result} can be used to show strong convergence uniformly on compact 
time intervals of the time evolution of a digital twin of an $SLH$-network, running on a quantum computer, 
to the dynamics of the real physical $SLH$-network as the discretisation mesh goes to zero. In practice though, 
the easiest way to use Theorem \ref{theorem main result} is through its Corollary \ref{corollary}.  
Corollary \ref{corollary} asserts that proving convergence for each of the components of an $SLH$-network separately, using 
the convergence theorem \cite[Thm 1]{BvH08}, automatically leads to convergence of the whole $SLH$-network 
(provided the technical condition Assumption \ref{assumption multiples of tau} is in force). Loosely speaking, it states that 
the composition of a network from its components commutes with its discretisation.

The convergence results presented in this paper open 
the door to quantum hardware simulations 
for a large class of quantum optical systems. All systems that can 
be represented as an $SLH$-network, including both cavity and circuit QED, 
can in principle be discretised and implemented on a quantum computer.
The results in this paper guarantee that as the discretisation mesh gets 
finer and finer, the discrete approximation running on the quantum computer 
resembles the real continuous physics better and better. 
Of course, this assumes that at some point in the future sufficiently 
large and sufficiently reliable quantum computers will become available.
Note, however, that $SLH$-networks that can be simulated using only a limited 
amount of qubits can already be studied on classical simulators 
of quantum computers, see e.g.\ \cite{BVS19}, 
rendering the results in this paper immediately relevant. 

The remainder of this paper is organized as follows. 
Section \ref{section SLH triple} introduces the main objects: 
it defines an $SLH$-triple and introduces the unitary co-cycles 
that govern the time evolution of the  network 
components as solutions to the QSDE for the respective network node. 
Section \ref{section SLH-networks} defines a network as a collection of components with connections 
between certain outputs and inputs. Furthermore, it introduces the braid map that transfers the 
output of channels that are connected to an input channel, to the respective 
input channel. Subsequently the section introduces the time evolution of the 
connected network. Section \ref{section convergence theorem} contains the 
definition of the generator associated with an $SLH$-network and it defines 
the discretised time evolution, represented on the space of the continuous 
time network. It then turns to the main result: Theorem \ref{theorem main result} 
which demonstrates the convergence of the discrete time evolution to 
the continuous time evolution of the network as the discretisation 
mesh goes to zero. Section \ref{section intermezzo} is a short intermezzo showing 
how to extend to the case of unbounded coefficients. Section \ref{section SLH on the quantum computer} shows how 
to use the main result, Thm  \ref{theorem main result}, in the setting of a simulation on a quantum computer.
Sections \ref{section examples components} and \ref{section examples networks} 
provide examples, showing how to implement a simple cavity QED simulation. 
The last section, Section \ref{section proofs}, contains the proof of the main result.

\section{The quantum stochastic calculus}\label{section SLH triple}

We start our investigation of $SLH$-networks by first studying a single component of such a network with inputs and outputs that are still unconnected.
Such a single component of an $SLH$-network consists of a quantum system represented on some separable complex Hilbert space $\mathcal{H}$ and $m$ input-output channels  that all interact with each other at the origin. We call $m$ the multiplicity of the component.

For an interval $I\subset \mathbb{R}$, we define the \emph{symmetric Fock space with multiplicity} $m$ as
\begin{equation}\label{def Fock space}
\mathcal{F}_I = \mathbb{C} \oplus \bigoplus_{n =1}^\infty L^2(I; \mathbb{C}^m)^{\otimes_s n}.
\end{equation}
Here, $L^2(I; \mathbb{C}^m)$ stands for the Hilbert space of  all quadratically integrable 
functions on $I$ with values in $\mathbb{C}^m$. We have used the symmetric tensor product which means that the $n^{th}$ layer in the direct sums describes $n$ bosonic particles, i.e.\ $n$ particles with symmetrized wave functions. 
Note that the Fock space $\mathcal{F}_I$ describes a field of bosons, which we will always take to be a field of photons. The $m$ input-output channels are represented on the symmetric Fock space $\mathcal{F}_\mathbb{R}$ and should be interpreted as $m$ channels 
in the electromagnetic field. We will often shorten $\mathcal{F}_\mathbb{R}$
to $\mathcal{F}$.
The entire network-component is then represented on the Hilbert space $\mathcal{H}\otimes\mathcal{F}$.

For $f \in L^2(I; \mathbb{C}^m)$, we define the \emph{exponential vector} $e(f) \in \mathcal{F}_I$ as
\begin{equation}\label{def exponential vector}
e(f) = 1 \oplus \bigoplus_{n= 1}^\infty \frac{1}{\sqrt{n!}}f^{\otimes n}.
\end{equation}
We call the linear span of the exponential vectors the \emph{exponential domain} which is dense in $\mathcal{F}_I$.

We are now ready to introduce for all  $1\le i,j \le m$ and $t\ge 0$ the fundamental noises $A^i_t,\ A^{i*}_t$ and $\Lambda^{ij}_t$ on the exponential domain of $\mathcal{F}$
by (see also \cite{HuP84, Par92, BaL00})
  \begin{equation}\label{def fundamental noises}\begin{split}
  &A^i_t e(f) = \left(\int_0^t f_i(s)ds\right)\, e(f),\\
  &\big\langle e(g), A^{i*}_t e(f)\big\rangle  = 
  \left(\int_0^t\overline{g}_i(s)ds\right)\, \big\langle e(g),
  e(f)\big\rangle, \\
  &\big\langle e(g), \Lambda^{ij}_t e(f)\big\rangle  = 
  \left(\int_0^t\overline{g}_i(s)f_j(s)ds\right)\, 
  \big\langle e(g),e(f)\big\rangle, 
  \end{split}\end{equation}
for all $f$ and $g$ in $L^2(\mathbb{R}; \mathbb{C}^m)$. $A^i_t$ and $A^{i*}_t$ are called the \emph{annihilation} and \emph{creation} processes, respectively. The processes 
$\Lambda^{ij}_t$ are called \emph{gauge} processes.

For an interval $I \subset \mathbb{R}$, we denote by $\chi_I$ the \emph{indicator function of} $I$, i.e., the function that is 1 on I and 0 elsewhere in $\mathbb{R}$.
For all $f \in L^2(\mathbb{R}; \mathbb{C}^m)$ and
$I \subset \mathbb{R}$, we can now define $f_I$ by
\begin{equation*}
f_{I}(x) = f(x)\chi_I(x), \ \ \mbox{for all}\ \ x \in \mathbb{R}.
\end{equation*}
We will use the shorthand $f_{t)} = f_{(-\infty, t)}$ and 
$f_{[t} = f_{[t, \infty)}$ in the following. There exists 
a unique unitary isomorphism $\iota$ from 
$\mathcal{F}$ to $\mathcal{F}_{(-\infty,t)}\otimes\mathcal{F}_{[t,\infty)}$ 
such that (see e.g.\ \cite[Prop 19.6]{Par92}) 
\begin{equation*}
\iota\big(e(f)\big) = e(f_{t)})\otimes e(f_{[t}), 
\end{equation*}
for all $f \in L^2(\mathbb{R}; \mathbb{C}^m)$ and $t \ge 0$.
It is this so-called \emph{continuous tensor product structure} of the Fock space $\mathcal{F}$, that forms the starting point for the quantum stochastic calculus of Hudson and Parthasarathy \cite{HuP84, Par92}. In the following, we will simply identify $\mathcal{F}$ and $\mathcal{F}_{(-\infty,t)}\otimes\mathcal{F}_{[t,\infty)}$, i.e., to keep notation light, we will no longer write the unitary map $\iota$ between them.
A process $L_s,\, (s\ge 0)$ on $\mathcal{H}\otimes \mathcal{F}$ is called \emph{adapted} if $L_s$ acts nontrivially only on $\mathcal{H}\otimes \mathcal{F}_{(-\infty, s)}$ and 
is the identity on $\mathcal{H}\otimes \mathcal{F}_{[s, \infty)}$ for all $s\ge 0$.

It is possible \cite{HuP84}  to define stochastic integrals of adapted processes $L_s$ against the fundamental noises, i.e., to give meaning to the expression $X_t = X_0 + \int_0^tL_sdM_s$ where $M_s$ is one of the fundamental noises $A^i_s,\ A^{i*}_s$ 
or $\Lambda^{ij}_s$. The expression can be written in shorthand as $dX_t = L_tdM_t$. 
The calculus with which these stochastic integrals can be manipulated in calculations is due to Hudson and Parthasarathy \cite{HuP84}. 
The calculus consists of the following. Suppose $X_t$ and $Y_t$ are stochastic integrals, i.e., $dX_t = L^1_tdM^1_t$ and $dY_t = L^2_tdM^2_t$ where $L^1$ and $L^2$ are adapted processes and $M^1$ and $M^2$ are fundamental noises, then the product $X_tY_t$ is itself a stochastic integral. Moreover, the product $X_tY_t$ satisfies the following 
quantum Ito rule, (integration by parts rule)
  \begin{equation*}
  d(X_t Y_t) = X_tdY_t + (dX_t)Y_t + dX_tdY_t, 
  \end{equation*}
where to evaluate $dX_tdY_t$ we use that the 
increment $dM_t$ of a fundamental noise commutes 
with all adapted processes, and products $dM^1_tdM^2_t$ 
are given by the following quantum It\^o table \cite{HuP84}  
\begin{center}
{\large \begin{tabular} {l|lll}
$dM^1\backslash dM^2$ & $dA^{i*}_t$ & $d\Lambda^{ij}_t$ & $dA^i_t$ \\
\hline 
$dA^{k*}_t$ & $0$ & $0$ & $0$ \\
$d\Lambda^{kl}_t$ & $\delta_{li}dA^{k*}_t$ & $\delta_{li}d\Lambda^{kj}_t$ & $0$  \\
$dA^k_t$ & $\delta_{ki}dt$ & $\delta_{ki}dA^j_t$ & $0$ 
\end{tabular} }
\end{center}
and all products $dM_tdt$ and $dtdM_t$ are zero.
As an example, suppose $dX_t = L^1_tdA^i_t$ and 
$dY_t = L^2_tdA^{i*}_t$, then 
$d(X_tY_t) = X_tL^2_tdA^{i*}_t + L^1_tY_tdA^{i}_t + L^1_tL^2_tdt$.

We now define:
\begin{definition}
Let $m$ be an element of $\mathbb{N}\backslash \{0\}$.
Let $\mathcal{H}$ be a separable Hilbert space. 
An $SLH$-triple with multiplicity $m$ acting on $\mathcal{H}$ 
is a triple $(S,L,H)$, where  $S$ is a collection 
of bounded operators $S_{ij}, \ 1 \le i,j \le m$ on $\mathcal{H}$, 
such that
\begin{equation*}
  \sum_{j=1}^m S_{ji}^*S_{jl} = \delta_{il}, \ \ \ \ \sum_{j=1}^m S_{ij}S_{lj}^* = \delta_{il},
\end{equation*}
$L$ is a collection of bounded operators $L_i, 1\le i\le m$ on 
$\mathcal{H}$ and $H$ is a self-adjoint bounded operator 
on $\mathcal{H}$.
\end{definition}

Using the definition of the quantum stochastic integral  \cite{HuP84}, we can 
write down the following quantum stochastic 
differential equation (QSDE) on $\mathcal{H}\otimes \mathcal{F}$:
\begin{equation}\label{definition Ut}\begin{split}
    dU_t = \Bigg\{\sum_{i,j = 1}^m\big(S_{i j}-\delta_{ij}\big) d \Lambda^{ij}_t + 
      \sum_{i=1}^m L_i dA^{i*}_{t} -\sum_{i,j = 1}^m L^*_i S_{i j}dA^j_t 
    -\frac{1}{2}\sum_{i= 1}^m L^*_iL_i dt - iHdt\Bigg\}U_t, 
\end{split}\end{equation} 
with initial condition $U_0 =I$ and $t\ge 0$. Here $(S, L, H)$ is 
an $SLH$-triple with multiplicity $m$, acting on $\mathcal{H}$.
It follows from \cite{HuP84} that an adapted 
process $U_t$ that solves
equation  \eqref{definition Ut} exists and is 
unique. Furthermore, it follows from \cite{HuP84}
that the solution to equation  \eqref{definition Ut} is 
unitary if and only if the coefficients of the 
equation are as in equation  \eqref{definition Ut} 
and form an $SLH$-triple. 

We are now ready to introduce the time evolution of a single component of a network. We 
let $\theta_t$ be the \emph{left shift} on $L^2(\mathbb{R}; \mathbb{C}^m)$, i.e.\
\begin{equation*}
\theta_t(f)(x) = f(x+t)\ \ \ \  \mbox{for all}\ x \in \mathbb{R}.
\end{equation*}
We denote by $\Theta_t: \mathcal{F} \to \mathcal{F}$ the second quantisation of the map $\theta_t$. $\Theta_t$ represents the free time evolution of the $m$ input-output channels. The photons simply fly along the real axis from positive to negative in units such that $c$, the speed of light, equals $1$ which allows us to identify time $t$ and position $x$ along the channel.

The quantum system with Hilbert space $\mathcal{H}$ is 
positioned at the origin and perturbs the free 
evolution $\Theta_t$ in a way that is completely determined 
by the $SLH$-triple associated to the network-component. The $SLH$-triple 
of the network-component determines via equation  \eqref{definition Ut}
an adapted unitary process $U_t$ that is a 
co-cycle with respect to the shift $\Theta_t$, i.e.
\begin{equation}\label{cocycle Ut}
U_{t+s} = \Theta_{-s}U_t\Theta_{s} U_s.
\end{equation}
Note that $U_t$ is not a one-parameter group 
of unitaries. We can however, define a one-parameter 
group of unitaries $\hat{U}_t$ by
\begin{equation}\label{definition Uhat}
\hat{U}_t =  \left\{ \begin{array}{ll}
 \Theta_t U_t & \mbox{\ \ \ if \ } t \ge 0 \\
 U^*_{-t} \Theta_t & \mbox{\ \ \ if \ } t < 0 
  \end{array}\right. .
\end{equation}


\section{$SLH$-networks}\label{section SLH-networks}

Suppose that we have $N$ separate components with initial spaces $\mathcal{H}_1,\ldots, \mathcal{H}_N$. We next include channels with multiplicities $m_1,\ldots, m_N$, symmetric Fock spaces $\mathcal{F}_{1},\ldots, \mathcal{F}_{N}$ and $SLH$-triples $(S^1, L^1, H^1),\ldots, (S^N, L^N, H^N)$. We can now introduce:
\begin{equation*}\begin{split}
&\mathcal{H} = \mathcal{H}_1\otimes\ldots\otimes\mathcal{H}_N, \\
&m = m_1 + \ldots+m_N, \\
&\mathcal{F} = \mathcal{F}_1 \otimes\ldots\otimes\mathcal{F}_N.
\end{split}\end{equation*}
As $\mathcal{F}_i$ is the symmetric second-quantisation of $L^2(\mathbb{R}; \mathbb{C}^{m_i})$, we can again use \cite[Prop 19.6]{Par92} to exploit the canonical isometry between the tensor products of the Fock spaces and the symmetric second-quantisation of the direct sum of the initial spaces to conclude that
\begin{equation*}
\mathcal{F} = \mathbb{C}\oplus \bigoplus_{n=1}^\infty L^2(\mathbb{R}; \mathbb{C}^m)^{\otimes_s n}.
\end{equation*} 
That is, the combined system of all the components has underlying Hilbert space $\mathcal{H}\otimes\mathcal{F}$, where $\mathcal{F}$ has multiplicity $m$. By ampliation with the identity, we can now extend all operators in the $SLH$-triples of the components and all the quantum noises to the space $\mathcal{H}\otimes\mathcal{F}$. We re-label the noises such that the labels run from $1$ to $m$. That is, the noises labeled with $1,\ldots,m_1$ are the noises that act non-trivially on $\mathcal{F}_1$, the noises labeled with $m_1+1, \ldots,m_1+m_2$ are the noises that act non-trivially on $\mathcal{F}_2$ and so on up to the noises labeled by $(\sum_{i=1}^{N-1}m_i) +1,\ldots,m$ 
which are the noises that act non-trivially on $\mathcal{F}_N$.

\begin{definition}\label{definition SLH composition}
The $SLH$-triple $(S,L,H)$ of the network that consists of $N$ components with initial spaces $\mathcal{H}_1,\ldots, \mathcal{H}_N$, multiplicities $m_1,\ldots, m_N$, symmetric Fock spaces $\mathcal{F}_{1},\ldots, \mathcal{F}_{N}$ and $SLH$-triples $(S^1, L^1, H^1),\ldots, (S^N, L^N, H^N)$, is given by 
\begin{equation*}
S =
\begin{bmatrix}
S_{11} & \ldots & S_{1m} \\
\vdots & & \vdots\\
S_{m1}& \ldots & S_{mm}
\end{bmatrix} = 
\begin{bmatrix}
S^1 & 0 & \ldots & 0 \\
0 & S^2 & \ldots & 0 \\
\vdots & \vdots & & \vdots \\
0 & 0&\ldots&S^N
\end{bmatrix},\ \ \
L = \begin{bmatrix} 
L_1 \\
\vdots\\
L_m
\end{bmatrix}
= 
\begin{bmatrix}
L^1\\
\vdots \\
L^N
\end{bmatrix},\ \ \
H = \sum_{i=1}^N H^i.
\end{equation*} 
\end{definition}

The time evolution of the complete network, consisting of the $N$ components (without 
connections between inputs and outputs which will be introduced below), is then governed by the co-cycle $U_t$ which satisfies equation  \eqref{definition Ut} where the $SLH$-triple $(S,L,H)$ is given by Definition \ref{definition SLH composition}. It is easy to see that $U_t = U_t^1 \otimes \ldots \otimes U_t^N$. 

\begin{figure}
    \centering
    \includegraphics[width=0.75\linewidth]{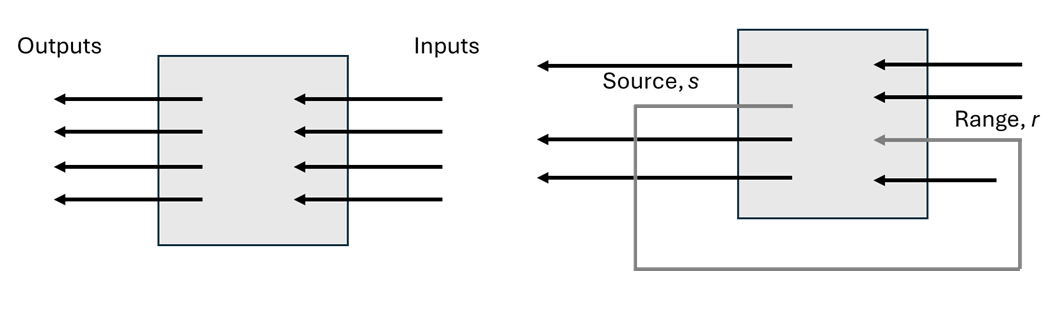}
    \caption{On the left, we have a multiple-input multiple-output system. There are the same number $m$ of inputs as outputs. An SLH network is then formed by feeding some of the outputs back in as inputs. Each connection reduces the total number of external inputs (and outputs) by 1. We take $s$ to denote the label of an output (source) and $r$ to denote the label of an input (range).}
    \label{fig:placeholder}
\end{figure}

Our next step is to create a network with connections by connecting some (not necessarily all) of the outputs to some of the inputs of the network. An output can be connected only to a single input, and \textit{vice versa}. Each feedback connection $\mathsf{c}$ is a triple $(r, s, \xi)$ of an input $r$ (\textit{range}) and an output $s$ (\textit{source}), i.e., $s, r \in \{1,\ldots,m\}$. So the output $s$ is fed back into the network as input to $r$ and the corresponding time delay is $\xi  >0$ (equal to the physical line length of the connection path divided by the speed of light). We shall denote by $\mathcal{C}$ the set of all connections in the system.  The shortest line length in the system by $\xi_{\text{min}}$ and the longest line length in the system by 
$\xi_{\text{max}}$, i.e.,
\begin{eqnarray}
    \xi_{\text{min}} =\mbox{min}\big\{ \xi ; \ (s,r,\xi) \in \mathcal{C}\big\},\qquad \xi_{\text{max}} =\mbox{max}\big\{ \xi ; \ (s,r,\xi) \in \mathcal{C}\big\}. 
\end{eqnarray}
Often, it will be convenient to assume the following:

\begin{assumption}\label{assumption multiples of tau}
All line lengths $\xi$ in the network are integer multiples of a fixed amount $\tau>0$.
\end{assumption}

\subsection{The Braid Map}
We will make use of the following convention.
For fixed connection $\mathsf{c}=(r,s,\xi )$, we place each of the labels $i=1, \cdots, m$ into one of the following four types:
\textit{
\begin{enumerate}
    \item $i$ is neither $r$ nor $s$;
    \item $r$ and $s$ are different, and $i$ equals $r$;
    \item $r$ and $s$ are different, and $i$ equals $s$;
    \item $s$ and $r$ coincide, and $i$ equals the common value.
\end{enumerate}
}
Note that the type depends on the connection $\mathsf{c}$: a label may be different types for different connections in $\mathcal{C}$.

\begin{definition}\label{definition braid map}
Let $0 < \sigma \le \xi_{\text{min}}$. 
We define a unitary map $b_\sigma(\mathsf{c}):\ L^2(\mathbb{R}; \mathbb{C}^m) \to L^2(\mathbb{R}; \mathbb{C}^m)$ by defining its vector components $\big(b_\sigma (\mathsf{c})f\big)_i$ for $i = 1,\dots, m$.
The $i$th component is defined according to which of the four types the index $i$ belongs to for the given connection $\mathsf{c}$:
\begin{enumerate}
\item if $i \neq r$ and $i \neq s$: 
\begin{equation*}
\big(b_\sigma (\mathsf{c}) f\big)_ i(t) =  f_i(t),\qquad\qquad\qquad\qquad\qquad\qquad\qquad\qquad \forall t \in \mathbb{R},
 \end{equation*}
\item if $i = r \neq s$:
\begin{equation*}
\big(b_\sigma (\mathsf{c})f\big)_{r}(t) = f_{s}\big(t-\xi\big)\chi_{[\xi,\, \xi+\sigma )}(t) + f_{r}(t) \chi_{[\xi,\, \xi+\sigma )^c}(t),\qquad \qquad\forall t \in \mathbb{R},\qquad 
\end{equation*}
\item if $i = s \neq r$:
\begin{equation*}
\big(b_\sigma (\mathsf{c})f\big)_{s}(t) = f_{r}\big(t + \xi\big)\chi_{[0,\sigma )}(t) + f_{s}(t)\chi_{[0,\sigma )^c}(t),\qquad\qquad\quad \forall t \in \mathbb{R}, 
\end{equation*}
\item if $i = s= r$:
\begin{equation*}\begin{split}
\big(b_\sigma (\mathsf{c})f\big)_i(t) =  &f_i\big(t-\xi\big)\chi_{[\xi,\, \xi+\sigma )}(t) + f_i\big(t + \xi\big)\chi_{[0,\sigma )}(t)\\
&+\,  f_i(t)\Big(\chi_{(-\infty,0)}(t) + \chi_{[\sigma , \xi)}(t)  + \chi_{[\xi+\sigma , \infty)}(t) \Big) \qquad \forall t \in \mathbb{R}.
\end{split}\end{equation*}
\end{enumerate}
Note that $b_\sigma (\mathsf{c})$ is a unitary map on $L^2(\mathbb{R}; \mathbb{C}^m)$ and that $b_\sigma (\mathsf{c}_1)$ commutes with $b_\sigma (\mathsf{c}_2)$ for all $\mathsf{c}_1, \mathsf{c}_2 \in \mathcal{C}$. We 
now define the \textbf{one-particle braid map} on $L^2(\mathbb{R}; \mathbb{C}^m)$ as
\begin{equation*}
b_\sigma  = \prod_{\mathsf{c}\in \mathcal{C}} b_\sigma  (\mathsf{c}) .
\end{equation*}
Furthermore, we may include the $\sigma =0$ case by taking $b_0$ to be the identity map on $ L^2(\mathbb{R})$. 
For $0\le \sigma \le \xi_{\text{min}}$, let $B_\sigma : \mathcal{F} \to \mathcal{F}$ be the second-quantisation of $b_\sigma $, and this defines a unitary. We call $B_\sigma $ the \textbf{braid map}. We naturally extend $B_\sigma $ to $\mathcal{H}\otimes \mathcal{F}$, by ampliation with the identity on $\mathcal{H}$.
\end{definition}

\subsection{Time evolution of the network}
For $z \in \mathbb{R}$, we shall denote by $\lfloor z\rfloor$ the integer part of $z$, i.e., the largest element in $\mathbb{N}$ that is smaller than or equal to $z$.

\begin{definition}\label{definition time evolution network}
Let $0 < \sigma \le \xi_{\text{min}}$. We define a unitary cocycle $\mathcal{U}_t$  on 
$\mathcal{H}\otimes \mathcal{F}$ by
\begin{equation*}
\mathcal{U}_t = \Theta_{\lfloor \tfrac{t}{\sigma } \rfloor \sigma }^* B_{t- \lfloor \tfrac{t}{\sigma } \rfloor \sigma }  U_{t- \lfloor \tfrac{t}{\sigma } \rfloor \sigma } \Theta_{\lfloor \tfrac{t}{\sigma } \rfloor \sigma }\overleftarrow{\prod_{i = 1}^{\lfloor \tfrac{t}{\sigma } \rfloor}}\Theta_{(i-1)\sigma }^* B_\sigma   U_\sigma  \Theta_{(i-1)\sigma }.
\end{equation*}
Note that the definition of $\mathcal{U}_t$ does not depend on which $\sigma  \in (0,\xi_{\text{min}}]$ has been used.
\end{definition}
Note that $\mathcal{U}_t$ is the co-cycle that governs the time evolution of the connected network. The physical time evolution of the network is given by the following one-parameter group of unitaries
\begin{equation}
\label{definition CalUhat}
\hat{\mathcal{U}}_t =  \left\{ \begin{array}{ll}
 \Theta_t \mathcal{U}_t & \mbox{\ \ \ if \ } t \ge 0 \\
 \mathcal{U}^*_{-t} \Theta_t & \mbox{\ \ \ if \ } t < 0 
  \end{array}\right. .
\end{equation}


\section{Convergence theorem for $SLH$-networks}\label{section convergence theorem}

In the previous section, we have seen that an $SLH$-network is completely determined by its $SLH$-triple: $(S,L,H)$ and its set of connections $\mathcal{C}$. Below, we will introduce a discrete repeated interaction model and we will show under which conditions  such a model converges to the time evolution of equation \eqref{definition Ut}. The convergence follows from \cite[Thm 1]{BvH08} in which a central role is played by the generator associated to the network. 

\begin{definition}\label{definition generator}
Let $\alpha$ and $\beta$ be elements of  $\mathbb{C}^m$ and let $(S,L,H)$ be an $SLH$-triple  with multiplicity $m$. The generator $\mathscr{L}^{\alpha\beta}: \mathcal{H} \to \mathcal{H}$ associated to a network with $SLH$-triple $(S,L,H)$ is given by
\begin{equation*}\begin{split}
    \mathscr{L}^{\alpha\beta}  = \sum_{i,j =1}^m \alpha^*_j S_{ij}^* \beta_i  +
                                                 \sum_{i=1}^m \beta_i L_i^*  -
                                                  \sum_{i,j=1}^m  \alpha_j^*S_{ij}^*L_i 
                                                 -\frac{\sum_{i=1}^m\big(|\alpha|^2_i + |\beta|^2_i + L^*_iL_i \big)}{2} +iH.
\end{split}\end{equation*}
\end{definition}

Let $\tau >0$ be the length of some time interval. If Assumption \ref{assumption multiples of tau} is in force, we will take $\tau$ from that Assumption, otherwise we can choose it freely. For $k\in \mathbb{N}$, we define
\begin{equation}\label{equation Fk} 
 \mathcal{F}^k = \mathcal{F}_{[0,\,\tau /2^k)}.
\end{equation}
Using the continuous tensor product structure of the Fock space, we can write
\begin{equation}
    \mathcal{F} =  \bigotimes_{l \in \mathbb{Z}} \mathcal{F}_{\big[ \frac{l\, \tau}{2^k},\, \frac{(l+1)\tau}{2^k} \big)} = \bigotimes_\mathbb{Z} \mathcal{F}^k,
\end{equation}
since $\mathcal{F}^k = \mathcal{F}_{[0,\,\tau \, 2^{-k})}$
is isomorphic to $\mathcal{F}_{[\, l \tau 2^{-k},\, (l+1)\tau 2^{-k})}$
for every $l \in \mathbb{Z}$ in a canonical way. Let $\mathcal{H}^k$ be a subspace of the 
initial space $\mathcal{H}$. $\mathcal{H}^k$ represents the initial space in the discretisation, i.e.\ as $k$ goes to infinity $\mathcal{H}^k$ will converge to $\mathcal{H}$. Often, however, $\mathcal{H}^k$ will simply be equal to $\mathcal{H}$ for all $k$ in our examples. Let $\mathcal{K}^k$ be a subspace of $\mathcal{F}^k$. $\mathcal{K}^k$ will act as the \emph{discrete noise space} in the following. 

Note that $\mathcal{H}^k\otimes \mathcal{K}^k$ is a subspace of $\mathcal{H}\otimes\mathcal{F}^k$. Later on, we will see that $\mathcal{H}^k\otimes\mathcal{K}^k$ is the range of the embedding of our quantum computer 
Hilbert space into $\mathcal{H}\otimes \mathcal{F}^k \subset \mathcal{H}\otimes\mathcal{F}$. 
We have chosen to introduce the discrete systems as a subsystem of our continuous description because it allows us to not carry around the notation associated with the 
embeddings in the formulation of the convergence theorems that will follow below.

\begin{definition}\label{definition discrete evolution}
Let $R^k$ be a unitary on $\mathcal{H}^k\otimes \mathcal{K}^k$ 
that describes the interaction between the initial system and the discrete 
noise in a single time step. We extend $R^k$ to $\mathcal{H}\otimes\mathcal{F}$ 
by setting it equal to the identity on the orthogonal 
complement of $\mathcal{H}^k\otimes\mathcal{K}^k$.
We now define recursively
\begin{equation}\begin{split}\label{definition Rkt}
      R^k_t{ } &= I ,\qquad \qquad\qquad\qquad\qquad\qquad\qquad \ t\in [0,\, \tfrac{\tau}{2^k}),\\
      R^k_t{ } &=   \left(\Theta^*_{\tfrac{(l-1)\tau}{2^k}}R^k\Theta_{\tfrac{(l-1)\tau}{2^{k}}}\right)R^k_{\tfrac{(l-1)\tau}{2^k}}, \qquad
                    t\in [\tfrac{l\tau}{2^k},\, \tfrac{(l+1)\tau}{2^k}),\qquad l\in \mathbb{N}\backslash \{0\}.
\end{split}\end{equation}
Furthermore, for $\psi,\varphi \in \mathcal{F}^k$, we define $R^{k;\psi,\varphi}: \mathcal{H}\to\mathcal{H}$ such that
\begin{equation}\label{equation Rkpsiphi}
	\langle u,R^{k;\psi\varphi}v\rangle = \frac{\langle u\otimes\psi, R^{k*}\,v\otimes\varphi\rangle}{\|\psi\|\,\|\varphi\|} \qquad
	\forall\,u,v\in\mathcal{H}.
\end{equation}
\end{definition}
Note that $\Theta^*_{(l-1)\tau 2^{-k}}R^k\Theta_{(l-1)\tau 2^{-k}}$ can only act 
non-trivially on $\mathcal{H} \otimes \mathcal{F}_{[ (l-1)\tau2^{-k},\, l\tau 2^{-k})}$. (That is, equation \eqref{definition Rkt} defines a repeated interaction of 
the initial system with consecutive time-slices of the field.)

We are now ready for the convergence theorem for 
the case of $SLH$-networks with no connections, i.e., $\mathcal{C}=\emptyset$. The 
theorem is a special case of \cite[Thm 1]{BvH08} in the 
sense that we allow only for bounded operators in the $SLH$-triple. 

\begin{theorem}{\bf \cite[Thm 1]{BvH08}}\label{theorem BoutenvanHandel}
The following conditions are equivalent.
\begin{enumerate}
\item For all $\alpha,\beta\in\mathbb{C}^m$,
$u \in \mathcal{H}$ there exist $u^{k}\in\mathcal{H}$
and $\psi^{k},\varphi^{k}\in\mathcal{F}^k$ such that
$$
	u^{k}\xrightarrow{k\to\infty}u,\qquad
	(\psi^{k})^{\otimes 2^k}\xrightarrow{k\to\infty}
	e(\alpha \chi_{[0,\tau)}),\qquad
	(\varphi^{k})^{\otimes 2^k}\xrightarrow{k\to\infty}
	e(\beta \chi_{[0,\tau)}),
$$
and
$$
	\frac{2^k}{\tau}(R^{k;\psi^k\varphi^k}-I)u^k
	\xrightarrow{k\to\infty}\mathscr{L}^{\alpha\beta}u.
$$
\item
For every $\psi\in\mathcal{H}\otimes\mathcal{F}$
$$
	\lim_{k\to\infty}	\|R_t^{k}\psi-U_t\psi\|=0.
$$
\item
For every $T<\infty$ and $\psi\in\mathcal{H}\otimes\mathcal{F}$
$$
	\lim_{k\to\infty}\sup_{0\le t\le T}
	\|R_t^{k}\psi-U_t\psi\|=0.
$$
\end{enumerate}
\end{theorem}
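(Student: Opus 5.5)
This theorem is the special case of \cite[Thm 1]{BvH08} with bounded $SLH$-coefficients and no connections. The plan is therefore to check that the objects of Definitions \ref{definition generator} and \ref{definition discrete evolution} match those of \cite{BvH08}, and to reproduce the Trotter--Kato style argument underlying that result.

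Two implications are immediate. First, (3)$\Rightarrow$(2) is trivial. Second, for (2)$\Rightarrow$(1) I take $\psi^k=\varphi^k$-type slices $e(\alpha\chi_{[0,\tau 2^{-k})})$ and $e(\beta\chi_{[0,\tau 2^{-k})})$. Then $(\psi^k)^{\otimes 2^k}=e(\alpha\chi_{[0,\tau)})$ exactly, by the continuous tensor product structure. The remaining condition on $\frac{2^k}{\tau}(R^{k;\psi^k\varphi^k}-I)u^k$ is not obviously implied by (2) alone, and I expect this direction to be the subtle one. I would obtain it through a semigroup argument. The matrix elements of $U_t$ between exponential vectors of constant functions define the contraction semigroup
\begin{equation*}
T^{\alpha\beta}_t = e^{-t(|\alpha|^2+|\beta|^2)/2}\,\text{-normalised compression } \langle e(\alpha\chi_{[0,t)}),U_t^{*}\,e(\beta\chi_{[0,t)})\rangle,
\end{equation*}
whose generator is $\mathscr{L}^{\alpha\beta}$; this follows from the quantum Itô rule applied to \eqref{definition Ut}. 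In the same way, the discrete compressions $(R^{k;\psi^k\varphi^k})^{n}$ give a discrete semigroup. Under (2), the discrete semigroup converges to $T_t^{\alpha\beta}$ for every $t$, and the converse direction of the Trotter--Kato/Chernoff theorem then yields strong graph convergence of the discrete generators $\frac{2^k}{\tau}(R^{k;\psi^k\varphi^k}-I)$ to $\mathscr{L}^{\alpha\beta}$. This gives sequences $u^k\to u$ as required.

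The main direction is (1)$\Rightarrow$(3). First, by Chernoff's product formula, condition (1) gives convergence of $(R^{k;\psi^k\varphi^k})^{\lfloor t2^k/\tau\rfloor}$ to $e^{t\mathscr{L}^{\alpha\beta}}$ uniformly on compacts. Since $\mathscr{L}^{\alpha\beta}$ is bounded, the relevant core issues disappear. Next, I extend this to matrix elements of $R^k_t$ and $U_t$ between product vectors $u\otimes e(f)$ and $v\otimes e(g)$, where $f,g$ are step functions on the dyadic grid. This uses adaptedness and the cocycle structure: the matrix element factorises into a product of compressed semigroups over the pieces where $f,g$ are constant. The resulting weak convergence on a total set, combined with the fact that all operators are unitary, gives weak convergence everywhere. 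Finally, unitarity upgrades weak convergence to strong convergence via $\|R^k_t\psi-U_t\psi\|^2=2\|\psi\|^2-2\,\mathrm{Re}\langle U_t\psi,R^k_t\psi\rangle$.

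Uniformity in $t\in[0,T]$ comes from equicontinuity. The map $t\mapsto U_t\psi$ is strongly continuous, and $R^k_t$ is piecewise constant. For step vectors, the increments of $R^k_t\psi$ over one time step are controlled by the uniform bound in (1) on the discrete generators. These increments therefore tend to zero uniformly, and a standard $\varepsilon/3$ argument over a finite grid of times finishes the proof.

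The main obstacle I expect is the factorisation step. One must show that the compression of $R^k_t$ against exponential vectors of dyadic step functions really equals a product of the single-step operators $R^{k;\psi\varphi}$. This requires care that $\psi^k,\varphi^k$ are only approximately exponential slices, and that $R^k$ is the identity off $\mathcal{H}^k\otimes\mathcal{K}^k$. I would handle the approximation with the telescoping estimate $\|\prod A_i-\prod B_i\|\le\sum\|A_i-B_i\|$ for contractions.
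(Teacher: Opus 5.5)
The paper gives no argument of its own here. It defers to \cite[Thm 1]{BvH08}, noting only that condition 2 is an intermediate step of the $1\Rightarrow3$ proof there and that rescaling to general $\tau$ is trivial. Your overall architecture follows the Trotter--Kato strategy of that reference, and your $2\Rightarrow1$ argument is sound (it needs $R^{k*}_t\to U_t^*$ strongly, which follows from condition 2 by unitarity). The genuine gap is the step you flagged yourself. The telescoping bound $\|\prod A_i-\prod B_i\|\le\sum\|A_i-B_i\|$ over roughly $t2^k/\tau$ factors needs per-step errors $o(2^{-k})$. The hypothesis, however, controls only the $2^k$-fold tensor power. That power fixes the normalised slice $\psi^k/\|\psi^k\|$ only to within $o(2^{-k/2})$, and only up to a phase $\omega_k$ with $\omega_k^{2^k}\to1$. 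In fact no argument can close this gap under the literal hypothesis. Take $\mathcal{H}=\mathbb{C}$, $m=1$, $(S,L,H)=(1,0,0)$ (so $U_t=I$), $\mathcal{K}^k=\mathcal{F}^k$, $R^k=e^{2\pi i2^{-k}}I$, $u^k=u$, $\psi^k=e^{-2\pi i2^{-k}}e(\alpha\chi_{[0,\tau2^{-k})})$ and $\varphi^k=e(\beta\chi_{[0,\tau2^{-k})})$. Then $(\psi^k)^{\otimes2^k}=e(\alpha\chi_{[0,\tau)})$ and $R^{k;\psi^k\varphi^k}=\exp\bigl(\tau2^{-k}(\bar\alpha\beta-\tfrac{|\alpha|^2+|\beta|^2}{2})\bigr)$, so condition 1 holds. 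Yet $R^k_{\tau/2}=-I\neq U_{\tau/2}$. Condition 1 must therefore be read as controlling all partial powers, e.g.\ $(\psi^k)^{\otimes\lfloor t2^k/\tau\rfloor}\to e(\alpha\chi_{[0,t)})$ for every $t$. This holds for exact slices, and for the truncated vectors $1\oplus\alpha\chi_{[0,\tau2^{-k})}$ of the examples by the computation in Lemma \ref{lemma converg ex}. With that reading, compare whole products rather than single steps. Since $R^{k*}_t$ is unitary, $\langle u\otimes\Psi,R^{k*}_t\,v\otimes\Phi\rangle$ is Lipschitz in $(\Psi,\Phi)$ on bounded sets, uniformly in $k$. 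It therefore suffices that the tensor products of the $\psi^k$'s and $\varphi^k$'s over the dyadic pieces of $f$ and $g$ converge to $e(f)$ and $e(g)$.

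Your uniformity argument is also insufficient. A finite-grid $\varepsilon/3$ argument needs $\sup_{|t-s|\le\delta}\|R^k_t\psi-R^k_s\psi\|$ to be small uniformly in $k$, and one-step control does not give this. In the full space a single step moves a vector by order $2^{-k/2}$: the $L\,dA^*$ part of $R^k$ creates a photon with amplitude of order $\sqrt{\tau2^{-k}}$, cf.\ the $M^k$ of the examples. Summing over the $\delta2^k/\tau$ steps of a window therefore gives only $O(\delta\,2^{k/2})$. Condition 1 controls compressed matrix elements, not vector increments, and it provides no uniform operator bound. The fix is the device the paper itself uses in the $2\Rightarrow3$ part of Theorem \ref{theorem main result}. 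Set $\hat R^k=\Theta_{\tau2^{-k}}R^k$, so that $(\hat R^k)^l=\Theta_{l\tau2^{-k}}R^k_{l\tau2^{-k}}$. Condition 2 then becomes pointwise strong convergence of $(\hat R^k)^{\lfloor t2^k/\tau\rfloor}$ to the unitary group $\hat U_t$. The implication $2\Rightarrow3$ of Theorem \ref{theorem Kurtz} upgrades this to uniform convergence on compacts, and unitarity and strong continuity of $\Theta$ transfer it back to $R^k_t$.
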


\begin{proof}
See \cite{BvH08}. Condition 2 is not explicitly stated in \cite{BvH08}. Inspection of the proof, however, shows that the proof of the implication $1 \Rightarrow 3$ goes via condition 2. The proof of the implication $3 \Rightarrow 2$ is trivial. Theorem 1 in $\cite{BvH08}$ is only for $\tau =1$. However, inspection of the proof in \cite{BvH08} shows that the rescaling to handle a general $\tau>0$ is trivial.
\end{proof}

We now turn to the situation where we do have connections in the network, i.e., $\mathcal{C} \neq \emptyset$. Recall that $\xi_{\text{min}}$ denotes the shortest line length in the network. 

\begin{definition}\label{definition discrete evolution network}
Assume that Assumption \ref{assumption multiples of tau} holds. Let $0 < \sigma \le \xi_{\text{min}}$ such that $\sigma $ is an integer multiple of $\tfrac{\tau}{2^k}$. For every $k \in \mathbb{N}\backslash \{0\}$, 
we define a unitary $\mathcal{R}^k_t$ on 
$\mathcal{H}\otimes \mathcal{F}$ by
\begin{equation*}
\mathcal{R}_t^k = \Theta_{\lfloor \tfrac{t}{\sigma } \rfloor \sigma }^* B_{t- \lfloor \tfrac{t}{\sigma } \rfloor \sigma }  R^k_{t- \lfloor \tfrac{t}{\sigma } \rfloor \sigma } \Theta_{\lfloor \tfrac{t}{\sigma } \rfloor \sigma }\overleftarrow{\prod_{i = 1}^{\lfloor \tfrac{t}{\sigma } \rfloor}}\Theta_{(i-1)\sigma }^* B_\sigma  R^k_\sigma  \Theta_{(i-1)\sigma }.
\end{equation*}
Note that the definition of $\mathcal{R}^k_t$ does not depend on which integer multiple of $\tfrac{\tau}{2^k}$ was used for $\sigma \in (0,\xi_{\text{min}}]$.
\end{definition}

We are now ready to state our main result:

\begin{theorem} \label{theorem main result}
Assume that Assumption \ref{assumption multiples of tau} holds. The following conditions are equivalent.
\begin{enumerate}
\item For all $\alpha,\beta\in\mathbb{C}^m$,
$u \in \mathcal{H}$ there exist $u^{k}\in\mathcal{H}$
and $\psi^{k},\varphi^{k}\in\mathcal{F}^k$ such that
$$
	u^{k}\xrightarrow{k\to\infty}u,\qquad
	(\psi^{k})^{\otimes 2^k}\xrightarrow{k\to\infty}
	e(\alpha \chi_{[0,\tau)}),\qquad
	(\varphi^{k})^{\otimes 2^k}\xrightarrow{k\to\infty}
	e(\beta \chi_{[0,\tau)}),
$$
and
$$
	\frac{2^k}{\tau}(R^{k;\psi^k\varphi^k}-I)u^k
	\xrightarrow{k\to\infty}\mathscr{L}^{\alpha\beta}u.
$$
\item
For every $\psi\in\mathcal{H}\otimes\mathcal{F}$
$$
	\lim_{k\to\infty}
	\|\mathcal{R}_t^{k}\psi-\mathcal{U}_t\psi\|=0.
$$
\item
For every $T<\infty$ and $\psi\in\mathcal{H}\otimes\mathcal{F}$
$$
	\lim_{k\to\infty}\sup_{0\le t\le T}
	\|\mathcal{R}_t^{k}\psi-\mathcal{U}_t\psi\|=0.
$$
\end{enumerate}
\end{theorem}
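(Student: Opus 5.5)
We reduce Theorem \ref{theorem main result} to Theorem \ref{theorem BoutenvanHandel}. Condition 1 is literally the same in both theorems, since it involves only $R^k$, $\mathcal{H}$ and the generator of the unconnected triple $(S,L,H)$; braid maps and connections do not enter. So it suffices to show that condition 2 (resp.\ 3) of Theorem \ref{theorem main result} is equivalent to condition 2 (resp.\ 3) of Theorem \ref{theorem BoutenvanHandel}. The implication $3\Rightarrow 2$ is trivial. It therefore remains to prove two implications: that uniform convergence $R^k_t\to U_t$ on compacts gives uniform convergence $\mathcal{R}^k_t\to\mathcal{U}_t$ on compacts (this yields $1\Rightarrow 3$), and that pointwise convergence $\mathcal{R}^k_t\to\mathcal{U}_t$ gives pointwise convergence $R^k_t\to U_t$ (this yields $2\Rightarrow 1$).

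\emph{Forward direction.} Fix $\sigma=\xi_{\text{min}}$, which by Assumption \ref{assumption multiples of tau} is an integer multiple of $\tau/2^k$ for all large $k$. For $t\le T$ write $n=\lfloor t/\sigma\rfloor\le T/\sigma$. Both $\mathcal{R}^k_t$ and $\mathcal{U}_t$ are products of at most $\lfloor T/\sigma\rfloor+1$ factors of the form $\Theta^*_{a}B_{b}X\Theta_{a}$, where $X=R^k_{b}$ or $X=U_{b}$ and $b\in[0,\sigma]$. All $\Theta$'s and $B$'s are unitaries independent of $k$. We telescope: a difference $A_1\cdots A_n-A'_1\cdots A'_n$ of products of unitaries satisfies
\[
\|(A_1\cdots A_n-A'_1\cdots A'_n)\psi\|\le\sum_j\|(A_j-A'_j)A'_{j+1}\cdots A'_n\psi\|.
\]
Each term has the form $\|(R^k_b-U_b)\phi\|$ with $\phi=\Theta_a A'_{j+1}\cdots A'_n\psi$, and here $\phi$ depends continuously on $t$ for fixed indices. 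The difficulty is that $\phi$ varies with $t$. Strong convergence uniform in $b\in[0,\sigma]$ for each fixed vector is given by Theorem \ref{theorem BoutenvanHandel}(3). By equicontinuity (the operators $R^k_b-U_b$ have norm at most $2$), this upgrades to uniformity over vectors in a compact set. The set $\{\Theta_a A'_{j+1}\cdots A'_n\psi : t\in[0,T]\}$ is compact, because $t\mapsto\mathcal{U}_t\psi$ and its partial products are strongly continuous, and $a$, $b$, $B_b$ depend strongly continuously on $t$ on each piece $[n\sigma,(n+1)\sigma)$. On the closure of each piece this gives a compact image. This yields condition 3.

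\emph{Reverse direction.} Assume $\mathcal{R}^k_t\psi\to\mathcal{U}_t\psi$ for every $t$ and $\psi$. For $t\in[0,\sigma]$ we have $\mathcal{R}^k_t=B_tR^k_t$ and $\mathcal{U}_t=B_tU_t$, where for $t=\sigma$ the formula gives $\Theta_\sigma^*B_0R^k_0\Theta_\sigma B_\sigma R^k_\sigma=B_\sigma R^k_\sigma$. Since $B_t$ is unitary and independent of $k$, it follows that $R^k_t\psi\to U_t\psi$ for $t\le\sigma$. For larger $t$ we use the cocycle structure. By \eqref{cocycle Ut} and the analogous discrete identity $R^k_{t+s}=\Theta^*_sR^k_t\Theta_sR^k_s$, valid when $s$ is a multiple of $\tau/2^k$ (immediate from \eqref{definition Rkt}), we have $U_t=\prod\Theta^*_{(i-1)\sigma}U_\sigma\Theta_{(i-1)\sigma}$ times a remainder, and similarly for $R^k_t$. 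A telescoping argument with strong convergence of each factor then gives $R^k_t\to U_t$ strongly for all $t$. One subtlety is that the discrete cocycle identity requires the splitting points to lie on the grid. We handle this by taking $\sigma$ a multiple of $\tau$, namely $\sigma=\xi_{\text{min}}$ itself, which lies on every dyadic grid. We then apply Theorem \ref{theorem BoutenvanHandel}, $2\Rightarrow1$.

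The main obstacle is the uniformity in $t$ in the forward direction. This is exactly where compactness of the orbit of vectors and the equi-boundedness of the unitaries are used, together with the discontinuities of $B_b$ at the endpoints of the pieces, which are handled by working on closed pieces separately.
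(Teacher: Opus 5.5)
Your proof is correct. Your reverse direction is essentially the paper's proof of $2\Rightarrow1$: the paper splits at $\sigma=\tau$ instead of $\sigma=\xi_{\text{min}}$, and phrases your telescoping as an iterated ``$\varphi_k\to\varphi$'' argument.

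The real difference is how you obtain uniformity in $t$. The paper proves $1\Rightarrow2$ only pointwise in $t$, via the implication $1\Rightarrow2$ of Theorem~\ref{theorem BoutenvanHandel} and the same factor-by-factor argument. It then derives $2\Rightarrow3$ from a second application of Kurtz's theorem (Theorem~\ref{theorem Kurtz}). On the grid, the network evolution is a power of the single unitary $\hat{\mathcal R}^k=\Theta_{\tau/2^k}B_{\tau/2^k}R^k$, since $(\hat{\mathcal R}^k)^l=\Theta_{l\tau/2^k}\mathcal R^k_{l\tau/2^k}$. This is compared with the unitary group $\hat{\mathcal U}_t$, first at dyadic multiples of $\tau$ and then, by strong continuity and density, at all $t$.

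You instead take the uniformity already contained in condition 3 of Theorem~\ref{theorem BoutenvanHandel} and transport it through the finite factorisation of $\mathcal R^k_t$ and $\mathcal U_t$. Your route is more elementary and more self-contained. It avoids the dyadic-density step. It also never has to handle the fact that $\mathcal R^k_t$ is not constant between grid points, because the factor $B_{t-\lfloor t/\sigma\rfloor\sigma}$ varies continuously; the paper's appeal to Kurtz does not spell this out. In exchange, the paper's route presents the discretised network itself as a product-formula approximation of $\hat{\mathcal U}_t$, and proves $2\Rightarrow3$ at the network level without going back through condition 1.

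Your compactness and equicontinuity step is not needed. On a piece $[n\sigma,(n+1)\sigma)$, the only $t$-dependent factor of $\mathcal U_t$ and $\mathcal R^k_t$ is the leftmost one. So every vector $\Theta_aA'_{j+1}\cdots A'_n\psi$ in your telescoping bound is independent of $t$ on that piece, contrary to your remark that it varies. Only the index $b\in[0,\sigma)$ moves, and condition 3 of Theorem~\ref{theorem BoutenvanHandel} with horizon $\sigma$ covers that directly.
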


\begin{proof}
See section \ref{section proofs}.
\end{proof}
\vline

\begin{corollary}\label{corollary}
Let us return to the situation in Section \ref{section SLH-networks} where $N$ 
components with $SLH$-triples $(S^j,L^j,H^j)$ for $j =1,\ldots,N$ and associated 
unitary cocycles $U^j_t$ for $j =1,\ldots,N$ make up a network with $SLH$-triple $(S,L,H)$, 
given by Definition \ref{definition SLH composition}, 
connection set $\mathcal{C}$, and associated unitary cocycle $U_t$.  
Let $R^{k;j}$ be 
the one-time-step unitary of 
Definition \ref{definition discrete evolution}
for the separate components, $j=1,\ldots,N$. Let $R^{k;j}_t$ be 
defined for the separate components as in equation  \eqref{definition Rkt}  of 
Definition \ref{definition discrete evolution}.
Suppose that we can 
prove, e.g.\ by proving condition 1 of Theorem \ref{theorem BoutenvanHandel} and 
using the implication $1 \Rightarrow 2$ of Theorem \ref{theorem BoutenvanHandel}, that
\begin{equation*}
	\lim_{k\to\infty}
	\|R_t^{k;j}\psi-U^j_t\psi\|=0, \qquad \forall \psi \in \mathcal{H}_j\otimes\mathcal{F}_j,\qquad j = 1,\ldots,N.
\end{equation*}
Since $U_t = U^1_t\otimes\ldots\otimes U^N_t$ and $R_t^{k} = R_t^{k;1}\otimes \ldots\otimes R_t^{k;N}$, 
we immediately have 
\begin{equation*}
\lim_{k\to\infty}
	\|R_t^{k}\psi-U_t\psi\|=0, \qquad \forall \psi \in \mathcal{H}\otimes\mathcal{F}.
\end{equation*}
We can now use the implication $2\Rightarrow 1$ of Theorem \ref{theorem BoutenvanHandel}, 
followed by the implication $1 \Rightarrow 3$ of Theorem \ref{theorem main result} to 
find for every connection set $\mathcal{C}$ and $T <\infty$
\begin{equation*}
	\lim_{k\to\infty}\sup_{0\le t\le T}
	\|\mathcal{R}_t^{k}\psi-\mathcal{U}_t\psi\|=0, \qquad \forall \psi \in \mathcal{H}\otimes\mathcal{F}.
\end{equation*}
That is, convergence of the components implies convergence of the composite network 
for every possible set of connections for which Assumptions \ref{assumption multiples of tau} holds.
\end{corollary}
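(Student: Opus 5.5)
We prove the corollary in three steps: factorise the network objects over the components, show that strong convergence survives tensor products, and then chain the two theorems. The only genuine analytic content is the second step; the rest is bookkeeping together with the two implications quoted in the statement.

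\emph{Step 1: factorisation.} We identify $\mathcal{H}\otimes\mathcal{F}$ with $\bigotimes_{j=1}^N(\mathcal{H}_j\otimes\mathcal{F}_j)$ through the canonical reordering unitary, using $\mathcal{F}=\mathcal{F}_1\otimes\cdots\otimes\mathcal{F}_N$ from \cite[Prop 19.6]{Par92}. Under this identification the shift factorises as $\Theta_t=\Theta^1_t\otimes\cdots\otimes\Theta^N_t$, because $\theta_t$ on $L^2(\mathbb{R};\mathbb{C}^m)$ is the direct sum of the shifts on the $L^2(\mathbb{R};\mathbb{C}^{m_j})$. The same splitting of time slices gives $\mathcal{F}^k=\mathcal{F}^k_1\otimes\cdots\otimes\mathcal{F}^k_N$.

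The factorisation $U_t=U^1_t\otimes\cdots\otimes U^N_t$ follows from uniqueness of solutions of \eqref{definition Ut}. Indeed, the right-hand side is adapted, and by the quantum It\^o rule it solves the QSDE with the block-diagonal $SLH$-triple of Definition \ref{definition SLH composition}: the cross terms $dM^{(j)}dM^{(l)}$ for $j\neq l$ vanish, since noises of different channels have zero It\^o products.

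For the discretisation we take the network one-step unitary to be $R^k=R^{k;1}\otimes\cdots\otimes R^{k;N}$. Here each $R^{k;j}$ is first extended by the identity to $\mathcal{H}_j\otimes\mathcal{F}_j$, which fixes the extension of $R^k$ to $\mathcal{H}\otimes\mathcal{F}$ consistently. Since the conjugations $\Theta^*_{(l-1)\tau2^{-k}}(\cdot)\Theta_{(l-1)\tau2^{-k}}$ factorise, an induction on $l$ in the recursion \eqref{definition Rkt} gives $R^k_t=R^{k;1}_t\otimes\cdots\otimes R^{k;N}_t$ for all $t\ge0$.

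\emph{Step 2: tensor products of strongly convergent unitaries.} Fix $t$. First take a product vector $\psi=\psi_1\otimes\cdots\otimes\psi_N$ and telescope:
\begin{equation*}
R^k_t\psi-U_t\psi=\sum_{j=1}^N U^1_t\psi_1\otimes\cdots\otimes U^{j-1}_t\psi_{j-1}\otimes\big(R^{k;j}_t-U^j_t\big)\psi_j\otimes R^{k;j+1}_t\psi_{j+1}\otimes\cdots\otimes R^{k;N}_t\psi_N .
\end{equation*}
All the operators $U^l_t$ and $R^{k;l}_t$ are unitary, so the $j$-th term has norm $\|(R^{k;j}_t-U^j_t)\psi_j\|\prod_{l\neq j}\|\psi_l\|$. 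This tends to $0$ by the hypothesis. By linearity, convergence then holds on the algebraic span of product vectors, which is dense.

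For a general $\psi$, given $\varepsilon>0$ we pick $\phi$ in that span with $\|\psi-\phi\|<\varepsilon$. Because all the operators have norm one,
\begin{equation*}
\|R^k_t\psi-U_t\psi\|\le 2\varepsilon+\|R^k_t\phi-U_t\phi\|,
\end{equation*}
so $\limsup_k\|R^k_t\psi-U_t\psi\|\le2\varepsilon$. Hence $R^k_t\to U_t$ strongly for every $t$.

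\emph{Step 3: chaining the theorems.} The conclusion of Step 2 is exactly condition 2 of Theorem \ref{theorem BoutenvanHandel} for the network triple $(S,L,H)$. The implication $2\Rightarrow1$ of that theorem yields condition 1. Condition 1 of Theorem \ref{theorem BoutenvanHandel} is verbatim condition 1 of Theorem \ref{theorem main result}: it involves only $R^k$, $\mathscr{L}^{\alpha\beta}$ and the field, not $\mathcal{C}$. Since Assumption \ref{assumption multiples of tau} holds, the implication $1\Rightarrow3$ of Theorem \ref{theorem main result} gives uniform convergence $\mathcal{R}^k_t\to\mathcal{U}_t$ on $[0,T]$ for every connection set $\mathcal{C}$.

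The main point needing care is Step 1: one has to check that the identity-extensions and the time-slice decompositions of $\mathcal{F}^k$ are compatible with the reordering unitary. The density argument in Step 2 is routine.
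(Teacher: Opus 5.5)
Your proposal is correct and follows the same route as the paper. You factorise $U_t$ and $R^k_t$ over the components, deduce condition 2 of Theorem \ref{theorem BoutenvanHandel} for the network triple, and chain its implication $2\Rightarrow 1$ with the implication $1\Rightarrow 3$ of Theorem \ref{theorem main result}. Your uniqueness argument for $U_t$, the induction for $R^k_t$ (with $R^k$ taken as the tensor product of the identity-extended $R^{k;j}$), and the telescoping-plus-density argument supply the details that the paper dismisses as ``easy to see'' and ``immediate''.
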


The goal of the simulation is to approximate the final state of the $SLH$-network after $t$ seconds of time evolution in the quantum computer. This means we also need to approximate the initial state of the network in the quantum computer. In our simulations 
we will only use coherent initial states, i.e., exponential vectors normalised to unity: $\psi(f) = \exp(-\tfrac{1}{2}\|f\|^2)e(f)$ for some $f\in L^2(\mathbb{R}; \mathbb{C}^m)$. Other states such as positive temperature states or squeezed states can also be treated in this framework by going to a different representation of the field algebra of observables, see e.g.\ \cite{Pet90}. The following Lemma shows how to approximate coherent states. 

\begin{lemma}\label{lemma converg ex}
Let $\mathsf{N} \in \mathbb{N}$. Let $f$ be an element of $L^2\big([0,\mathsf{N}\tau ]; \mathbb{C}^m\big)$. Let $f^0, f^1, f^2,\ldots$ be a sequence of elements in $L^2\big([0,\mathsf{N}\tau]; \mathbb{C}^m\big)$ such that:
\begin{enumerate}
\item for each $k\in \mathbb{N}:\ f^k$ is a simple function with respect to the $k^{th}$ 
dyadic mesh, i.e., there exist $\alpha^k_1,\alpha^k_2,\dots,\alpha^k_{\mathsf{N}2^k} \in \mathbb{C}^m$ such that
\begin{equation}\label{eq fk}
f^k =  \sum_{l=1}^{\mathsf{N}2^k} \alpha^k_l \chi_{\left[\frac{(l-1)\tau}{ 2^k},\, \frac{l\,\tau}{2^k}\right)},
\end{equation}
\item 
\begin{equation*}
   \left\| f- f^k\right\|_2  \xrightarrow{k\to\infty} 0.
\end{equation*}
\end{enumerate}
If we define
\begin{equation*}
  \psi^k_ l= 1 \oplus \alpha^k_l
                  \chi_{\left[\frac{(l-1)\tau}{2^k},\, \frac{l\tau}{2^k}\right)},\qquad  l = 1,2\ldots,\mathsf{N}2^k,  
\end{equation*}
then 
\begin{equation*}
  \left\|e(f) -\psi^k_1\otimes\psi^k_2\otimes\ldots\otimes\psi^k_{\mathsf{N}2^k} \right\|\xrightarrow{k\to\infty} 0.
\end{equation*}
\end{lemma}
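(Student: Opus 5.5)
The plan is to split the error with the triangle inequality:
\begin{equation*}
\big\|e(f) - \textstyle\bigotimes_l \psi^k_l\big\| \le \|e(f)-e(f^k)\| + \big\|e(f^k) - \textstyle\bigotimes_l\psi^k_l\big\|.
\end{equation*}
We will show that each term tends to zero.

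The first term is handled by continuity of the exponential map. Using $\langle e(g),e(h)\rangle = \exp\langle g,h\rangle$ we have
\begin{equation*}
\|e(f)-e(f^k)\|^2 = e^{\|f\|^2}+e^{\|f^k\|^2}-2\,\mathrm{Re}\, e^{\langle f,f^k\rangle}.
\end{equation*}
Since $f^k\to f$ in $L^2$, we have $\|f^k\|^2\to\|f\|^2$ and $\langle f,f^k\rangle\to\|f\|^2$, so the right-hand side tends to $0$.

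For the second term, write $\chi_l$ for the indicator of the $l$-th dyadic interval, which has length $\tau 2^{-k}$. By the continuous tensor product structure, $e(f^k)=\bigotimes_l e(\alpha^k_l\chi_l)$. The vector $\psi^k_l$ is exactly the projection of $e(\alpha^k_l\chi_l)$ onto the vacuum and one-particle layers of $\mathcal{F}_{[(l-1)\tau2^{-k},\,l\tau2^{-k})}$. Set $a_l=\|\alpha^k_l\chi_l\|^2=|\alpha^k_l|^2\tau2^{-k}$. Then
\begin{equation*}
\langle e(\alpha^k_l\chi_l),\psi^k_l\rangle=\|\psi^k_l\|^2=1+a_l,\qquad \|e(\alpha^k_l\chi_l)\|^2=e^{a_l}.
\end{equation*}
Since the tensor factors factorise the inner products, this gives
\begin{equation*}
\big\|e(f^k)-\textstyle\bigotimes_l\psi^k_l\big\|^2 = \prod_l e^{a_l} - \prod_l(1+a_l) = e^{\|f^k\|^2}-\prod_l(1+a_l),
\end{equation*}
using $\sum_l a_l=\|f^k\|^2$.

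The main step is to show that $\prod_l(1+a_l)\to e^{\|f\|^2}$. Using $\log(1+a)\ge a-a^2/2$ for $a\ge 0$, together with $\prod_l(1+a_l)\le e^{\sum_l a_l}$, we get
\begin{equation*}
e^{\|f^k\|^2}\big(1-e^{-\frac12\sum_l a_l^2}\big)\ \ge\ e^{\|f^k\|^2}-\prod_l(1+a_l)\ \ge\ 0.
\end{equation*}
Since $e^{\|f^k\|^2}$ is bounded, it therefore suffices to prove $\max_l a_l\to0$, because then $\sum_l a_l^2\le \max_l a_l\cdot\|f^k\|^2\to0$. This is the only genuine obstacle, since the $\alpha^k_l$ need not be uniformly bounded. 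We handle it using uniform integrability coming from $L^2$ convergence:
\begin{equation*}
a_l=\|f^k\chi_l\|^2\le 2\|(f^k-f)\chi_l\|^2+2\|f\chi_l\|^2\le 2\|f^k-f\|^2+2\sup_{|I|\le\tau2^{-k}}\int_I|f|^2.
\end{equation*}
The first term tends to $0$ by assumption 2. The second tends to $0$ by absolute continuity of the integral of the integrable function $|f|^2$, since the interval lengths shrink to zero. Hence $\max_l a_l\to0$, and combining the two parts of the triangle inequality proves the lemma.
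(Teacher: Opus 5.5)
Your proof is correct and follows the same route as the paper. Both use the triangle inequality through $e(f^k)$, continuity of $f\mapsto e(f)$, and the factorised identity $\|e(f^k)-\bigotimes_l\psi^k_l\|^2=e^{\|f^k\|^2}-\prod_l(1+a_l)$. The paper simply asserts that this last quantity tends to zero. Your bound via $\log(1+a)\ge a-a^2/2$, together with $\max_l a_l\to 0$ (from $L^2$ convergence and absolute continuity of $\int|f|^2$), supplies the justification it omits.
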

\begin{proof}
Note that
\begin{equation*}\begin{split}
&\big\|e(f^k)- \psi^k_1\otimes\psi^k_2\otimes\ldots\otimes\psi^k_{\mathsf{N}2^k}\big\|^2 = \\
&\Big\langle e(f^k)- \psi^k_1\otimes\psi^k_2\otimes\ldots\otimes\psi^k_{\mathsf{N}2^k},\,
e(f^k)- \psi^k_1\otimes\psi^k_2\otimes\ldots\otimes\psi^k_{\mathsf{N}2^k}\Big\rangle = \\ 
&\left\|e(f^k)\right\|^2 + \left\|\psi^k_1\otimes\psi^k_2\otimes\ldots\otimes\psi^k_{\mathsf{N}2^k}\right\|^2 - 
2 \mbox{Re}\Big\langle e(f^k),\,\psi^k_1\otimes\psi^k_2\otimes\ldots\otimes\psi^k_{\mathsf{N}2^k}\Big\rangle   = \\ 
&\exp\left(\sum_{l=1}^{\mathsf{N}2^k}\|\alpha^k_l\|^2\frac{\tau}{2^k}\right) + \prod_{l=1}^{\mathsf{N}2^k}\left(1+ \|\alpha^k_l\|^2\frac{\tau}{2^k}\right)
- 2\prod_{l=1}^{\mathsf{N}2^k}\left(1+ \|\alpha^k_l\|^2\frac{\tau}{2^k}\right) \xrightarrow{k\to\infty} 0.
\end{split}\end{equation*}
Furthermore, $e(f)$ depends continuously on $f$ \cite[Proof Prop 19.6]{Par92}, i.e. $\|e(f)-e(f^k)\| \xrightarrow{k\to\infty} 0$. 
This means that we can now conclude
\begin{equation*}
\left\|e(f)- \psi^k_1\otimes\psi^k_2\otimes\ldots\otimes\psi^k_{\mathsf{N}2^k}\right\| \le  
\left\| e(f) - e(f^k)\right\| + \left\|e(f^k) -\psi^k_1\otimes\psi^k_2\otimes\ldots\otimes\psi^k_{\mathsf{N}2^k}\right\|
\xrightarrow{k\to\infty} 0.
\end{equation*}
\end{proof}

Note that if we take $\alpha_l^k = \displaystyle \frac{2^k}{\tau}\int_{ \frac{(l-1)\tau} {2^k} }^{\frac{l\tau}{2^k}}f(s)\,ds$, then $f^k$ as defined in 
equation  \eqref{eq fk} satisfies the second condition $\left\| f- f^k\right\|_2  \xrightarrow{k\to\infty} 0$. That is: a sequence 
$f^0,f^1,f^2,\ldots$ with the properties stated in Lemma \ref{lemma converg ex} exists.


\section{Technical intermezzo: unbounded coefficients}\label{section intermezzo}

In this technical intermezzo we are going to investigate a different 
setting than in the rest of the article\footnote{Readers that are not 
primarily interested in the unbounded are advised to skip this section 
as this is of no consequence for the following sections.}. Here, we 
will relax the condition that the operators of the $SLH$-triple 
are bounded. To this end we assume the following:

\begin{assumption}\label{assumption domains}
$\mathcal{D} \subset \mathcal{H}$ is dense in $\mathcal{H}$. The operators 
$S_{ij},\ 1\le i,j \le m$, $L_i,\ 1\le i\le m$ and $H$ are defined on $\mathcal{D}$ and 
are such that the following quantum stochastic differential equation has a unique 
unitary co-cycle solution
\begin{equation}\begin{split}\label{equation Utstar}
&dU_t^* = U_t^* \left\{\sum_{i,j =1}^m (S_{ji}^* -\delta_{ij})d\Lambda^{ij}_t - \sum_{i,j =1}^m S_{ij}^*L_i dA_t^{j*} + 
\sum_{i=1}^m L_i^*dA^i_t - \frac{1}{2}\sum_{i=1}^m L_i^*L_idt + iHdt\right\},\\
&U^*_0 = I,
\end{split}\end{equation} 
which is strongly continuous in $t$ and which extends to $\mathcal{H}$.
\end{assumption}

\begin{remark}
The reason for changing from $U_t$ to $U_t^*$ is that 
in the equation for $U_t^*$, equation \eqref{equation Utstar}, 
$U_t^*$ is on the left of the coefficients, i.e.\ the coefficients 
act immediately on vectors. This makes it generally much easier to prove existence 
of a unitary co-cycle that satisfies the QSDE \eqref{equation Utstar}, see \cite{Fag90} \cite{LiW06}.
\end{remark}

We have the following connection with the generator from 
Definition \ref{definition generator}. The proof can be found 
in \cite{BHS08}. Lemma \ref{lemma gens} also makes it clear why we used 
the hitherto unexplained $R^{k*}$ instead 
of $R^{k}$ in equation \eqref{equation Rkpsiphi}.

\begin{lemma}
\label{lemma gens}
For $\alpha,\beta\in\mathbb{C}^m$, define $T_t^{\alpha\beta}:
\mathcal{H}\to\mathcal{H}$ such that
\begin{equation*}
        \left\langle u,T_t^{\alpha\beta}v\right\rangle = e^{-\sum_{i=1}^m(|\alpha_i|^2+
                |\beta_i|^2)t/2}
        \left\langle u\otimes e(\alpha \chi_{[0,t]}),
        U^*_t\,v\otimes e(\beta \chi_{[0,t]})\right\rangle,
        \qquad \forall\,u,v\in\mathcal{H},~t\ge 0.
\end{equation*}
Then $T_t^{\alpha\beta}$ is a strongly continuous contraction
semigroup on $\mathcal{H}$, and the generator
$\mathscr{L}^{\alpha\beta}$ of this semigroup satisfies
$\mathrm{Dom}(\mathscr{L}^{\alpha\beta})\supset\mathcal{D}$
such that for $u\in\mathcal{D}$
\begin{equation*}
      \mathscr{L}^{\alpha\beta}u  = \left(\sum_{i,j =1}^m \alpha^*_j S_{ij}^* \beta_i  +
                                                 \sum_{i=1}^m \beta_i L_i^*  -
                                                  \sum_{i,j=1}^m  \alpha_j^*S_{ij}^*L_i 
                                                 -\frac{\sum_{i=1}^m\big(|\alpha|^2_i + |\beta|^2_i + L^*_iL_i \big)}{2} +iH\right)u.
\end{equation*}
\end{lemma}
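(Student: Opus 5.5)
The plan is to derive all three claims from the cocycle property of $U_t$ and the factorisation of exponential vectors, and then to read off the generator from the matrix elements of the QSDE \eqref{equation Utstar}.

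\textbf{Semigroup property.} Taking adjoints in \eqref{cocycle Ut} gives
\begin{equation*}
U^*_{s+t} = U^*_s\,\Theta_s^*U^*_t\Theta_s .
\end{equation*}
Using the continuous tensor product structure, we write
\begin{equation*}
e(\beta\chi_{[0,s+t]}) = e(\beta\chi_{[0,s)})\otimes e(\beta\chi_{[s,s+t]}),
\end{equation*}
and similarly for $\alpha$. Here $\Theta_s$ carries $e(\beta\chi_{[s,s+t]})$ to $e(\beta\chi_{[0,t]})$. The operator $\Theta_s^*U_t^*\Theta_s$ acts only on $\mathcal{H}\otimes\mathcal{F}_{[s,s+t)}$, and $U_s^*$ acts only on $\mathcal{H}\otimes\mathcal{F}_{[0,s)}$. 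Take the matrix element of $U^*_{s+t}$ between $u\otimes e(\alpha\chi_{[0,s+t]})$ and $v\otimes e(\beta\chi_{[0,s+t]})$, and insert an orthonormal basis of $\mathcal{H}$ between the two factors. Together with the normalising exponential $e^{-\sum_i(|\alpha_i|^2+|\beta_i|^2)(s+t)/2}$, this shows that the matrix element factorises as $\langle u, T^{\alpha\beta}_sT^{\alpha\beta}_t v\rangle$. Finally $T_0^{\alpha\beta}=I$ because $U_0=I$.

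\textbf{Contraction and strong continuity.} Since $U_t^*$ is unitary,
\begin{equation*}
|\langle u\otimes e(\alpha\chi_{[0,t]}),U_t^*\,v\otimes e(\beta\chi_{[0,t]})\rangle|\le \|u\|\,\|v\|\,e^{\sum_i(|\alpha_i|^2+|\beta_i|^2)t/2},
\end{equation*}
so $\|T^{\alpha\beta}_t\|\le1$. Weak continuity at $t=0$ follows from two facts: $U_t^*$ is strongly continuous (Assumption \ref{assumption domains}), and $t\mapsto e(\alpha\chi_{[0,t]})$ is norm continuous. For a semigroup of contractions, weak continuity implies strong continuity.

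\textbf{Generator on $\mathcal{D}$.} For $u\in\mathcal{H}$ and $v\in\mathcal{D}$, I would use the weak (matrix-element) form of the QSDE \eqref{equation Utstar}. Because $U_t^*$ stands to the left of the coefficients, those coefficients act directly on $v$. The matrix elements between exponential vectors then replace the noise increments as follows:
\begin{equation*}
d\Lambda^{ij}_t\mapsto \bar\alpha_i\beta_j\,dt,\qquad dA^{j*}_t\mapsto\bar\alpha_j\,dt,\qquad dA^i_t\mapsto\beta_i\,dt .
\end{equation*}
Write $E_t=\langle u\otimes e(\alpha\chi_{[0,t]}),U_t^*\,v\otimes e(\beta\chi_{[0,t]})\rangle$. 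This gives
\begin{equation*}
E_t = \langle u,v\rangle e^{\bar\alpha\cdot\beta\, t}+\int_0^t \langle u\otimes e(\alpha\chi_{[0,s]}),U_s^*\,(Mv)\otimes e(\beta\chi_{[0,s]})\rangle\, e^{\bar\alpha\cdot\beta(t-s)}\,ds ,
\end{equation*}
where $M$ is the coefficient operator with the noises substituted as above. Multiplying by the normalisation, the term $-\sum_i\bar\alpha_i\beta_i$ coming from the $-\delta_{ij}$ in the gauge coefficient cancels against the $e^{\bar\alpha\cdot\beta t}$ produced by $\langle e(\alpha\chi_{[0,t]}),e(\beta\chi_{[0,t]})\rangle$. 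What remains is
\begin{equation*}
\langle u,T_tv\rangle-\langle u,v\rangle=\int_0^t\langle u,T_s\,\mathscr{L}^{\alpha\beta}v\rangle ds ,
\end{equation*}
with $\mathscr{L}^{\alpha\beta}$ given by the displayed formula; reindexing $S^*_{ji}$ as $S^*_{ij}$ is the relabelling needed to match it. By strong continuity, $(T_tv-v)/t\to\mathscr{L}^{\alpha\beta}v$ weakly. Since the weak generator of a $C_0$-semigroup coincides with its strong generator, this yields $v\in\mathrm{Dom}(\mathscr{L}^{\alpha\beta})$ and the claimed formula.

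\textbf{Main obstacle.} I expect the hardest step to be justifying this Itô-type matrix-element identity when the coefficients are unbounded. I would rely on the fact that a solution of \eqref{equation Utstar} is, by definition, a weak solution on $\mathcal{D}\,\underline{\otimes}\,$(exponential domain), which is exactly the identity needed above.
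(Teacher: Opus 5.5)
Your proof is correct. The adjoint cocycle identity with the continuous tensor product factorisation gives the semigroup law, unitarity gives contractivity, and weak continuity at $0$ gives strong continuity. The matrix-element form of \eqref{equation Utstar}, including the cancellation of the $\bar\alpha\cdot\beta$ terms, gives $\langle u,T_tv\rangle-\langle u,v\rangle=\int_0^t\langle u,T_s\mathscr{L}^{\alpha\beta}v\rangle\,ds$ for $v\in\mathcal{D}$. The paper gives no argument of its own and only defers to \cite{BHS08}; your self-contained proof takes the standard route for this lemma.
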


Definition \ref{definition discrete evolution} 
remains exactly the same as before.  Naturally, we take $U_t = (U_t^*)^*$ with $U_t^*$ given 
by equation \eqref{equation Utstar}.
We can now formulate the unbounded version of Theorem \ref{theorem BoutenvanHandel}:

\begin{theorem}{\bf \cite[Thm 1]{BvH08}}\label{theorem BoutenvanHandel unbounded}
Assume that Assumption \ref{assumption domains} holds, and let 
$\mathcal{D}^{\alpha\beta}\subset 
\mathrm{Dom}(\mathscr{L}^{\alpha\beta})$ be a core for
$\mathscr{L}^{\alpha\beta}$, $\alpha,\beta\in\mathbb{C}^m$.  Then the
following conditions are equivalent.
\begin{enumerate}
\item For all $\alpha,\beta\in\mathbb{C}^m$,
$u\in\mathcal{D}^{\alpha\beta}$ there exist $u^{k}\in\mathcal{H}$
and $\psi^{k},\varphi^{k}\in\mathcal{F}^k$ such that
$$
	u^{k}\xrightarrow{k\to\infty}u,\qquad
	(\psi^{k})^{\otimes 2^k}\xrightarrow{k\to\infty}
	e(\alpha \chi_{[0,\tau)}),\qquad
	(\varphi^{k})^{\otimes 2^k}\xrightarrow{k\to\infty}
	e(\beta \chi_{[0,\tau)}),
$$
and
$$
	\frac{2^k}{\tau}(R^{k;\psi^k\varphi^k}-I)u^k
	\xrightarrow{k\to\infty}\mathscr{L}^{\alpha\beta}u.
$$
\item
For every $\psi\in\mathcal{H}\otimes\mathcal{F}$
$$
	\lim_{k\to\infty}	\|R_t^{k}\psi-U_t\psi\|=0.
$$
\item
For every $T<\infty$ and $\psi\in\mathcal{H}\otimes\mathcal{F}$
$$
	\lim_{k\to\infty}\sup_{0\le t\le T}
	\|R_t^{k}\psi-U_t\psi\|=0.
$$
\end{enumerate}
\end{theorem}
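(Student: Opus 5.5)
The statement is Theorem \ref{theorem BoutenvanHandel unbounded}, which is \cite[Thm 1]{BvH08} in essentially its original generality. The plan is therefore to reduce it to that result rather than to reprove it, and to check carefully that the hypotheses match.

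\textbf{Translating the hypotheses.} Assumption \ref{assumption domains} is exactly the standing hypothesis of \cite{BvH08}:
\begin{itemize}
\item there is a dense domain $\mathcal{D}$ on which the coefficients are defined;
\item the adjoint QSDE \eqref{equation Utstar} has a unique strongly continuous unitary cocycle solution $U_t^*$.
\end{itemize}
Lemma \ref{lemma gens} then identifies the vacuum-coherent matrix elements $T_t^{\alpha\beta}$ of $U_t^*$ as a strongly continuous contraction semigroup. Its generator $\mathscr{L}^{\alpha\beta}$ agrees on $\mathcal{D}$ with the formula of Definition \ref{definition generator}.

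The quantity $R^{k;\psi\varphi}$ in \eqref{equation Rkpsiphi} is built from $R^{k*}$, which matches the use of $U_t^*$ in Lemma \ref{lemma gens}. Condition 1 thus says that the one-step maps $\mathcal{H}\ni v\mapsto R^{k;\psi^k\varphi^k}v$ converge, after rescaling by the step size $\tau 2^{-k}$, to the generator on a core. This is precisely the hypothesis of \cite[Thm 1]{BvH08}.

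\textbf{Proving the implications.}
\begin{enumerate}
\item[(a)] \emph{$1\Rightarrow 2$.} The discrete one-step maps $(R^{k;\psi^k\varphi^k})^{n}$ are contractions. I would approximate $T^{\alpha\beta}_t$ by $(R^{k;\psi^k\varphi^k})^{\lfloor t2^k/\tau\rfloor}$ via the Trotter--Kato/Chernoff theorem, using the core $\mathcal{D}^{\alpha\beta}$. This gives weak convergence of $R^{k*}_t$ to $U_t^*$ against products of coherent-type vectors. Taking products over time slices with piecewise constant $\alpha,\beta$ extends this to a total set, and then to all of $\mathcal{H}\otimes\mathcal{F}$. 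Since all operators are unitary, weak convergence upgrades to strong convergence: $\|R^k_t\psi-U_t\psi\|^2=2\|\psi\|^2-2\,\mathrm{Re}\langle R^k_t\psi,U_t\psi\rangle$.
\item[(b)] \emph{$2\Rightarrow 3$.} Uniformity on $[0,T]$ comes from strong continuity of $U_t$ together with the cocycle structure: the discrete evolution is constant on mesh intervals, and $\|U_t-U_s\|$ tends to zero strongly. A $3\epsilon$ argument on a finite grid of $[0,T]$ finishes it.
\item[(c)] \emph{$3\Rightarrow 2$.} This is trivial.
\item[(d)] \emph{$2\Rightarrow 1$.} I would take
\[
\psi^k=\varphi^k=\text{the normalised restriction of } e(\alpha\chi_{[0,\tau 2^{-k})}).
\]
Strong convergence gives convergence of the semigroups, and the converse direction of Trotter--Kato then yields convergence of generators on the core.
\end{enumerate}

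\textbf{Remaining bookkeeping.} Two small points have to be handled as in the proof of Theorem \ref{theorem BoutenvanHandel}. First, the rescaling from $\tau=1$ to general $\tau$ is harmless. Second, condition 2 is the intermediate step in the $1\Rightarrow3$ proof of \cite{BvH08}.

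\textbf{Main obstacle.} The hard step is the passage from convergence of the reduced semigroups to convergence of the full cocycle on $\mathcal{H}\otimes\mathcal{F}$ when the coefficients are unbounded. The core condition and the Chernoff product formula must replace the bounded Taylor estimates. This is exactly what \cite{BvH08} handles, so I would cite it and only verify that Assumption \ref{assumption domains} together with Lemma \ref{lemma gens} places us within its hypotheses.
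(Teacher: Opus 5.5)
At the top level you follow the paper. Its proof of this theorem is just a citation of \cite[Thm 1]{BvH08} plus the two remarks you also make: condition 2 occurs inside the $1\Rightarrow 3$ argument there, and general $\tau$ is a rescaling of $\tau=1$.

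Your step (d) is a useful addition, because those remarks leave the route out of condition 2 implicit. It works for three reasons. First, $(R^{k;\psi\varphi})^n$ is exactly the normalised reduced matrix element of $R^{k*}_{n\tau 2^{-k}}$ against $\psi^{\otimes n}$ and $\varphi^{\otimes n}$. Second, strong convergence of the unitaries $R^k_t$ gives strong convergence of $R^{k*}_t$. Third, Theorem \ref{theorem Kurtz} ($2\Rightarrow 1$) then yields condition 1 on the core. You need to fix the choice of vectors, though:
\begin{itemize}
\item take $\psi^k=e(\alpha\chi_{[0,\tau2^{-k})})$ and $\varphi^k=e(\beta\chi_{[0,\tau2^{-k})})$, not $\psi^k=\varphi^k$, which would give $\mathscr{L}^{\alpha\alpha}$;
\item do not normalise them, since condition 1 requires $(\psi^k)^{\otimes 2^k}\to e(\alpha\chi_{[0,\tau)})$ itself; the normalisation is already built into $R^{k;\psi\varphi}$.
\end{itemize}

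Step (b) does not work as written. With a grid $t_0<\dots<t_N$ fixed independently of $k$, a $3\epsilon$ argument needs $\|R^k_t\psi-R^k_{t_j}\psi\|$ small for $|t-t_j|<\delta$, uniformly in $k$. Piecewise constancy only gives this when $|t-t_j|<\tau 2^{-k}$, while a grid cell contains roughly $\delta 2^k/\tau$ jumps. The cocycle property reduces the problem to showing that $\sup_{0\le s\le\delta}\|R^k_s\chi-\chi\|$ is small for all large $k$. That is itself a uniform-convergence statement, and it does not follow from pointwise convergence plus continuity of $U_t$: the missing ingredient is equicontinuity in $k$.

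The paper supplies this via Kurtz's theorem, in the proof of $2\Rightarrow 3$ of Theorem \ref{theorem main result} (here with $B=I$). One writes $\Theta_{n\tau 2^{-k}}R^k_{n\tau 2^{-k}}=(\Theta_{\tau 2^{-k}}R^k)^n$ as powers of a single contraction and compares them with the strongly continuous unitary group $\hat U_t=\Theta_t U_t$. Theorem \ref{theorem Kurtz} ($2\Rightarrow 3$), whose proof goes through resolvent/generator convergence, then upgrades pointwise to locally uniform convergence. Alternatively, drop (b): your (d) gives $2\Rightarrow 1$ and \cite{BvH08} gives $1\Rightarrow 3$, which already closes the cycle.
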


\begin{proof}
See \cite{BvH08}. Condition 2 is not explicitly stated in \cite{BvH08}. Inspection of the proof, however, shows that the proof of the implication $1 \Rightarrow 3$ goes via condition 2. The proof of the implication $3 \Rightarrow 2$ is trivial. Theorem 1 in $\cite{BvH08}$ is only for $\tau =1$. However, inspection of the proof in \cite{BvH08} shows that the rescaling to handle a general $\tau>0$ is trivial.
\end{proof}

We leave Definition \ref{definition braid map}, Definition \ref{definition discrete evolution} 
and Definition \ref{definition discrete evolution network} exactly as before.
We use $U_t = (U_t^*)^*$ in Definition \ref{definition time evolution network}, where $U_t^*$ is given 
by equation \eqref{equation Utstar}. Apart 
from that, however, we leave Definition \ref{definition time evolution network} exactly as before.
We are now ready to generalize our main result to the unbounded case:

\begin{theorem} \label{theorem main result unbounded}
Assume that Assumptions \ref{assumption multiples of tau} and \ref{assumption domains} hold, and let 
$\mathcal{D}^{\alpha\beta}\subset 
\mathrm{Dom}(\mathscr{L}^{\alpha\beta})$ be a core for
$\mathscr{L}^{\alpha\beta}$, $\alpha,\beta\in\mathbb{C}^m$.  The
following conditions are equivalent. 
\begin{enumerate}
\item For all $\alpha,\beta\in\mathbb{C}^m$,
$u\in\mathcal{D}^{\alpha\beta}$ there exist $u^{k}\in\mathcal{H}$
and $\psi^{k},\varphi^{k}\in\mathcal{F}^k$ such that
$$
	u^{k}\xrightarrow{k\to\infty}u,\qquad
	(\psi^{k})^{\otimes 2^k}\xrightarrow{k\to\infty}
	e(\alpha \chi_{[0,\tau)}),\qquad
	(\varphi^{k})^{\otimes 2^k}\xrightarrow{k\to\infty}
	e(\beta \chi_{[0,\tau)}),
$$
and
$$
	\frac{2^k}{\tau}(R^{k;\psi^k\varphi^k}-I)u^k
	\xrightarrow{k\to\infty}\mathscr{L}^{\alpha\beta}u.
$$
\item
For every $\psi\in\mathcal{H}\otimes\mathcal{F}$
$$
	\lim_{k\to\infty}
	\|\mathcal{R}_t^{k}\psi-\mathcal{U}_t\psi\|=0.
$$
\item
For every $T<\infty$ and $\psi\in\mathcal{H}\otimes\mathcal{F}$
$$
	\lim_{k\to\infty}\sup_{0\le t\le T}
	\|\mathcal{R}_t^{k}\psi-\mathcal{U}_t\psi\|=0.
$$
\end{enumerate}
\end{theorem}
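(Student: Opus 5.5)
The plan is to reduce Theorem \ref{theorem main result unbounded} to Theorem \ref{theorem BoutenvanHandel unbounded}, in the same way Theorem \ref{theorem main result} reduces to Theorem \ref{theorem BoutenvanHandel}. The key point is that the connections enter only through the braid maps $B_\sigma$ and the shifts $\Theta_s$. Both are fixed unitaries that do not depend on $k$ and do not involve the (possibly unbounded) coefficients. It is therefore enough to prove that conditions 2 and 3 of Theorem \ref{theorem main result unbounded} are equivalent to conditions 2 and 3 of Theorem \ref{theorem BoutenvanHandel unbounded}. Condition 1 is literally the same in both theorems, so the equivalence with it is inherited.

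\emph{Step 1} ($2$ of Theorem \ref{theorem BoutenvanHandel unbounded} $\Rightarrow$ $2$ and $3$ here). Fix $\sigma$ equal to an integer multiple of $\tau$ with $\sigma\le\xi_{\text{min}}$; by Assumption \ref{assumption multiples of tau} such a $\sigma$ exists, and it is a multiple of $\tau/2^k$ for every $k$. Fix $t\le T$. The operator $\mathcal{R}^k_t$ is a product of at most $\lfloor T/\sigma\rfloor+1$ factors of the form $\Theta^*_{a}B_{b}R^k_{b}\Theta_{a}$, and $\mathcal{U}_t$ has the same form with $U_b$ in place of $R^k_b$. A telescoping argument, using unitarity of every factor, gives
\[
\|\mathcal{R}^k_t\psi-\mathcal{U}_t\psi\|\le \sum_{i}\big\|(R^k_{b_i}-U_{b_i})\,\Theta_{a_i}\mathcal{U}_{a_i}\psi\big\|,
\]
where $\mathcal{U}_{a_i}\psi$ denotes the already-evolved vector. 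Each term tends to zero by condition 2 of Theorem \ref{theorem BoutenvanHandel unbounded}, applied at the fixed time $b_i\in[0,\sigma]$ to the fixed vector $\Theta_{a_i}\mathcal{U}_{a_i}\psi$. This proves pointwise convergence.

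For uniformity in $t\in[0,T]$, I would split off the last, fractional factor. For fixed $k$, $t$ ranges over finitely many blocks, and within the final block we need $\sup_{b\in[0,\sigma]}\|(R^k_b-U_b)\phi\|\to0$ uniformly over the finitely many vectors $\phi=\Theta_{j\sigma}\mathcal{U}_{j\sigma}\psi$ with $j\le T/\sigma$. This follows from condition 3 of Theorem \ref{theorem BoutenvanHandel unbounded}. The factor $B_b$ sitting in front is unitary and independent of $k$, so it does not interfere with this bound.

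\emph{Step 2} (the converse, $2$ here $\Rightarrow$ $2$ of Theorem \ref{theorem BoutenvanHandel unbounded}). For $0\le t\le\sigma$ we have $\lfloor t/\sigma\rfloor=0$ when $t<\sigma$, so $\mathcal{R}^k_t=B_tR^k_t$ and $\mathcal{U}_t=B_tU_t$. Since $B_t$ is unitary, $\|R^k_t\psi-U_t\psi\|=\|\mathcal{R}^k_t\psi-\mathcal{U}_t\psi\|\to0$. The case $t=\sigma$ gives $B_\sigma R^k_\sigma$, which is handled the same way. For $t>\sigma$, I would use the cocycle identities: $U_{t+s}=\Theta_{-s}U_t\Theta_sU_s$, together with the analogous exact identity for $R^k$ at dyadic times, and the fact that both $\Theta_{\sigma}^*R^k_\sigma\Theta_\sigma$ and $\Theta_\sigma^*U_\sigma\Theta_\sigma$ are strongly continuous in the sense needed. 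These reduce $t>\sigma$ to finitely many intervals of length at most $\sigma$ by the same telescoping argument as in Step 1. This gives condition 2 of Theorem \ref{theorem BoutenvanHandel unbounded}, and hence condition 1.

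The implication $3\Rightarrow2$ in the present theorem is trivial.

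The main obstacle I expect is that the cocycle identity for $R^k_t$ holds exactly only at times that are multiples of $\tau/2^k$. Between grid points, $R^k_t$ is piecewise constant while $U_t$ moves. I would handle this with the strong continuity of $U_t$ from Assumption \ref{assumption domains}, which makes $U_t$ uniformly continuous on compacts when applied to a fixed vector. The mesh tends to zero, so the mismatch between $t$ and its grid point contributes a vanishing error.

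In the unbounded setting, boundedness is never used. Only unitarity of $R^k$, $U_t$, $B_\sigma$ and $\Theta_s$ enters, together with strong continuity. So the argument transfers verbatim from the bounded case of Theorem \ref{theorem main result}.
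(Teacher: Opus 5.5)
Your proposal is correct. Its pointwise parts follow the paper's argument. Step~1's telescoping bound is the paper's $1\Rightarrow 2$, which iterates ``$\varphi_k\to\varphi$ implies $R^k_\sigma\varphi_k\to U_\sigma\varphi$'' block by block. Step~2 is the paper's $2\Rightarrow 1$, which also takes $\sigma=\tau$, uses $R^k_s=B_s^*\mathcal{R}^k_s$ and $U_s=B_s^*\mathcal{U}_s$ for $0\le s\le\tau$, and uses the block factorisations of $R^k_t$ and $U_t$.

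The genuine difference is how you get uniformity in $t$.
\begin{itemize}
\item The paper proves $2\Rightarrow 3$ at the network level. It writes the evolution at grid points as powers of the single unitary $\hat{\mathcal{R}}^k=\Theta_{\tau/2^k}B_{\tau/2^k}R^k$. It then extends convergence from times $r\tau$ with $r$ dyadic to all $t$ via strong continuity of $\hat{\mathcal{U}}_t$, and applies Kurtz's theorem (Theorem~\ref{theorem Kurtz}).
\item You go $1\Rightarrow 3$ directly, taking uniformity from condition~3 of Theorem~\ref{theorem BoutenvanHandel unbounded}. Only the single compact interval $[0,\sigma]$ and the finitely many $k$-independent vectors $\Theta_{j\sigma}\mathcal{U}_{j\sigma}\psi$, $0\le j\le T/\sigma$, enter, and $B_b$ drops out by unitarity.
\end{itemize}

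What each route buys:
\begin{itemize}
\item Your route is more elementary and bounds the network error explicitly by finitely many component errors.
\item It also avoids two checks the paper needs. One is that $\hat{\mathcal{U}}_t$ is a strongly continuous group. The other is the gap between $\mathcal{R}^k_t$ and its value at the nearest grid point: the factor $B_{t-\lfloor t/\sigma\rfloor\sigma}$ moves continuously between grid points, so $\mathcal{R}^k_t$ is not literally a power of $\hat{\mathcal{R}}^k$.
\item The paper's route derives network-level uniformity from network-level pointwise convergence without passing back through the component theorem. Both routes ultimately rest on Kurtz's result.
\end{itemize}

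Finally, the ``main obstacle'' you anticipate in Step~2 does not arise. Because $\sigma$ is a multiple of $\tau$, every split point $j\sigma$ lies on each dyadic grid. Hence $R^k_{j\sigma+b}=\Theta^*_{j\sigma}R^k_b\Theta_{j\sigma}R^k_{j\sigma}$ holds exactly for all $b\in[0,\sigma]$, and you only ever compare $R^k_b$ with $U_b$ at the same $b$. No continuity of $U_t$ is needed there, nor any continuity property of the fixed operators $\Theta^*_\sigma R^k_\sigma\Theta_\sigma$.
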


\begin{proof}
See section \ref{section proofs}. The proof is the same as 
the proof of Theorem \ref{theorem main result}, but whenever 
Theorem \ref{theorem BoutenvanHandel} is used in the proof, 
one needs to invoke the unbounded version: Theorem \ref{theorem BoutenvanHandel unbounded}.
\end{proof}

\begin{remark} {\bf \cite[Remark 4]{BvH08}}
Theorem \ref{theorem BoutenvanHandel unbounded} 
and Theorem \ref{theorem main result unbounded} 
are the most powerful when $\mathcal{D}$ is a core
for all $\mathscr{L}^{\alpha\beta}$, $\alpha,\beta\in\mathbb{C}^m$.  We can 
then choose $\mathcal{D}^{\alpha\beta}\subset\mathcal{D}$, with the 
important consequence that this puts the explicit expression for 
$\mathscr{L}^{\alpha\beta}$ in Lemma \ref{lemma gens} at our disposal.  
Typically existence and uniqueness
proofs for the solution of equation \eqref{equation Utstar} already imply that 
$\mathcal{D}$ is a core for $\mathscr{L}^{\alpha\beta}$, see, e.g.,
\cite{Fag90,LiW06} and \cite[remark 4]{BHS08} for further comments.
In connection to \cite[Remark 4]{BvH08}, we remark that a 
careful reading of the proof of Theorems \ref{theorem main result}
and \ref{theorem main result unbounded} 
reveals that it only depends on the fact that $U_t$ is a cocycle with 
respect to the shift just like \cite[Thm 1]{BvH08}. Therefore, Theorem \ref{theorem main result unbounded}, just like 
\cite[Thm 1]{BvH08}, could be expressed in even greater generality. 
\end{remark}

\begin{corollary}
Suppose we have $N$ components, all satisfying Assumption \ref{assumption domains}. 
This leads to  $N$ unitary cocycles $U^1_t,\dots, U^N_t$,  where $U^j_t := (U^{j*}_t)^*$ for $ 1\le i \le N$).
We define a unitary cocyle $U_t := U^1_t \otimes \ldots \otimes U^N_t$. 
Let $R^{k;j}$ be the one-time-step unitary of 
Definition \ref{definition discrete evolution}
for the separate components, $j=1,\ldots,N$. Let $R^{k;j}_t$ be 
defined for the separate components, $j=1,\ldots,N$, as in equation  \eqref{definition Rkt}  of 
Definition \ref{definition discrete evolution}.
Suppose that we can 
prove, e.g.\ by proving condition 1 of Theorem \ref{theorem BoutenvanHandel unbounded} and 
using the implication $1 \Rightarrow 2$ of Theorem \ref{theorem BoutenvanHandel unbounded}, that
\begin{equation*}
	\lim_{k\to\infty}
	\|R_t^{k;j}\psi-U^j_t\psi\|=0, \qquad \forall \psi \in \mathcal{H}_j\otimes\mathcal{F}_j,\qquad j = 1,\ldots,N.
\end{equation*}
Since $U_t = U^1_t\otimes\ldots\otimes U^N_t$ and $R_t^{k} = R_t^{k;1}\otimes \ldots\otimes R_t^{k;N}$, 
we immediately have 
\begin{equation*}
\lim_{k\to\infty}
	\|R_t^{k}\psi-U_t\psi\|=0, \qquad \forall \psi \in \mathcal{H}\otimes\mathcal{F}.
\end{equation*}
We use the implication $2\Rightarrow 1$ of Theorem \ref{theorem BoutenvanHandel unbounded}
with $\mathcal{D}^{\alpha\beta} = \mathrm{Dom}(\mathscr{L}^{\alpha\beta})$, for $\alpha,\beta\in\mathbb{C}^m$, 
followed by the implication $1 \Rightarrow 3$ of Theorem \ref{theorem main result unbounded} 
with $\mathcal{D}^{\alpha\beta} = \mathrm{Dom}(\mathscr{L}^{\alpha\beta})$, for $\alpha,\beta\in\mathbb{C}^m$, 
to find for every connection set $\mathcal{C}$ and $T <\infty$
\begin{equation*}
	\lim_{k\to\infty}\sup_{0\le t\le T}
	\|\mathcal{R}_t^{k}\psi-\mathcal{U}_t\psi\|=0, \qquad \forall \psi \in \mathcal{H}\otimes\mathcal{F},
\end{equation*}
where $\mathcal{R}_t^{k}$ is defined by Definition \ref{definition discrete evolution network} and $\mathcal{U}_t$ is defined 
via Definition \ref{definition time evolution network} 
with $U_t = U^1_t\otimes\ldots\otimes U^N_t$.
That is, if Assumption \ref{assumption domains} is satisfied for all components, 
convergence of the components implies convergence of the composite network 
for every possible set of connections for which Assumption \ref{assumption multiples of tau} holds .
\end{corollary}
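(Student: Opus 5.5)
The plan is to verify the two claims in the chain of the Corollary that are not direct citations, and then let the cited theorems do the rest. Those two claims are the factorisation $R^k_t = R^{k;1}_t\otimes\ldots\otimes R^{k;N}_t$ and the passage from componentwise to joint strong convergence. After that the argument is the implication $2\Rightarrow 1$ of Theorem \ref{theorem BoutenvanHandel unbounded} followed by $1\Rightarrow 3$ of Theorem \ref{theorem main result unbounded}.

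\emph{Factorisation.} I would use the canonical identification $\mathcal{F}^k = \mathcal{F}^k_1\otimes\ldots\otimes\mathcal{F}^k_N$, where $\mathcal{F}^k_j=\mathcal{F}_{j,[0,\tau 2^{-k})}$ (again \cite[Prop 19.6]{Par92}).
\begin{itemize}
\item The composite one-step unitary is $R^k = R^{k;1}\otimes\ldots\otimes R^{k;N}$. It lives on $\mathcal{H}^k\otimes\mathcal{K}^k$, with $\mathcal{H}^k=\bigotimes_j\mathcal{H}^k_j$ and $\mathcal{K}^k=\bigotimes_j\mathcal{K}^k_j$, up to the ordering of tensor factors.
\item Each component $R^{k;j}$ is extended by the identity on the orthogonal complement of its own space. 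I will take the composite extension to be the tensor product of these componentwise extensions, which is how $R^k$ should be read in the Corollary. This is a slight abuse of Definition \ref{definition discrete evolution}, but it is harmless since $R^k$ is unitary either way.
\item The second quantised shift factorises as $\Theta_t=\Theta^1_t\otimes\ldots\otimes\Theta^N_t$.
\item Induction on $l$ in \eqref{definition Rkt} then gives $R^k_t=\bigotimes_j R^{k;j}_t$.
\end{itemize}

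\emph{Joint strong convergence.} First take a product vector $\psi=\psi_1\otimes\ldots\otimes\psi_N$. Telescope, replacing one factor $R^{k;j}_t$ by $U^j_t$ at a time. Every operator involved is unitary, so
$$\|R^k_t\psi-U_t\psi\|\le\sum_{j}\|R^{k;j}_t\psi_j-U^j_t\psi_j\|\prod_{i\ne j}\|\psi_i\|\to 0.$$
By linearity this extends to the algebraic tensor product, which is dense. For general $\psi$, choose $\psi'$ in the algebraic tensor product with $\|\psi-\psi'\|<\varepsilon$. Since $\|R^k_t-U_t\|\le 2$ uniformly in $k$, an $\varepsilon/3$ argument gives $\|R^k_t\psi-U_t\psi\|\to 0$. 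Thus condition 2 of Theorem \ref{theorem BoutenvanHandel unbounded} holds for the composite.

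\emph{Main obstacle.} The chain needs Assumption \ref{assumption domains} to hold for the composite $U_t=\bigotimes_j U^j_t$ itself, and the Corollary only assumes it for the components. I would proceed as follows.
\begin{itemize}
\item Take $\mathcal{D}$ to be the algebraic tensor product of the $\mathcal{D}_j$, which is dense.
\item $U^*_t=\bigotimes_j U^{j*}_t$ is a strongly continuous unitary cocycle, because each factor is.
\item On $\mathcal{D}$ (tensored with exponential vectors), $U^*_t$ satisfies \eqref{equation Utstar} with the block coefficients of Definition \ref{definition SLH composition}. This follows from the quantum It\^o rule: noises belonging to different components have vanishing It\^o products, so the product of the component QSDEs is the composite QSDE.
\item Uniqueness is the delicate point. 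If it cannot be shown directly, I would note that only cocycle structure is actually used, as the Remark after Theorem \ref{theorem main result unbounded} points out.
\end{itemize}

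\emph{Core and generator.} The semigroup of Lemma \ref{lemma gens} factorises: $T^{\alpha\beta}_t=\bigotimes_j T^{\alpha_j\beta_j}_t$, because the exponential vectors and the normalising exponential factor both split over components. So $\mathscr{L}^{\alpha\beta}$ is the closed generator of a $C_0$ contraction semigroup, and $\mathcal{D}^{\alpha\beta}=\mathrm{Dom}(\mathscr{L}^{\alpha\beta})$ is trivially a core for it. Applying $2\Rightarrow 1$ of Theorem \ref{theorem BoutenvanHandel unbounded}, and then $1\Rightarrow 3$ of Theorem \ref{theorem main result unbounded} under Assumption \ref{assumption multiples of tau}, gives the uniform convergence claimed for $\mathcal{R}^k_t\psi$.
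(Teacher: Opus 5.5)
Your proposal is correct and follows the paper's argument. You factorise $R^k_t$ and $U_t$ over the components to obtain condition 2 of Theorem \ref{theorem BoutenvanHandel unbounded} for the composite. You then apply its implication $2\Rightarrow 1$ with $\mathcal{D}^{\alpha\beta}=\mathrm{Dom}(\mathscr{L}^{\alpha\beta})$, followed by $1\Rightarrow 3$ of Theorem \ref{theorem main result unbounded}. The paper leaves several of your additions implicit, and you handle them correctly. These are the telescoping/density step and your reading of the composite $R^k$. That reading in fact fits Definition \ref{definition discrete evolution} verbatim if you take $\mathcal{H}^k=\mathcal{H}$ and $\mathcal{K}^k=\mathcal{F}^k$. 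They also include your observation that only the strongly continuous unitary cocycle structure of $\bigotimes_j U^j_t$ is needed, so uniqueness for the composite QSDE can be bypassed, as the Remark following Theorem \ref{theorem main result unbounded} indicates.
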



\section{Simulation of $SLH$-networks on a quantum computer}\label{section SLH on the quantum computer}

We shall emulate the SLH network with an assembly of qubits which 
we functionally split into two groups: the hardware 
(with underlying quantum computer Hilbert space $\mathbb{H}$); 
and the noise/signals (with underlying one-time-step quantum computer Hilbert space $\mathbb{K}$).

Let $\mathbb{H} = (\mathbb{C}^2)^{\otimes p}$ be the $p$ 
qubits in the quantum computer Hilbert space with which we describe 
the initial system of the network. Note that in principle we can let $p$  
depend on $k$, i.e.\ as the mesh gets finer and finer, we can approximate 
the initial system better and better. Note that, 
although from this point onwards, we only consider bounded 
operators, the initial space $\mathcal{H}$ can still be infinite 
dimensional, i.e.\ it is possible that it needs to be 
approximated.  In the examples in this paper, however, we 
will always take $p$ independent of $k$.   

Let $\mathbb{K} = (\mathbb{C}^2)^{\otimes q}$ be 
$q$ further qubits in the quantum computer Hilbert space 
with which we describe the noise 
of the $m$ input-output channels in one time step.
The typical case we will be studying is when $q$ equals $m$.
Note that if $m \ge 2$, this is different from the 
situation in \cite{AtP06} which uses $m+1$ dimensional 
noise (one excited state per channel + the vacuum). We 
like to use one qubit per channel, though, (excited state + vacuum per channel) 
because later on we will connect the components together and 
the different channels will then be split up and moved to 
a following component. The separate qubits can move to 
different components in this process. Since we 
use (at least) one qubit per channel and we wish to 
to describe more than just one channel, we use
the more general results of \cite{BvH08}
throughout this article.

Further note that we could choose to describe a particular channel or channels
with more qubits in one time step than the others. 
In that way we could differentiate in the time-scales 
that are present in the different channels. Instead of 
making the time step smaller to catch the fastest time-scale (and as a result needing more 
qubits for all channels), we could keep the time step 
constant and just describe the "fast" channels more 
accurately. Naturally, in such a description 
direct scattering (via the $S_{ij}$ operators of the 
$SLH$-triple) between "fast" and "slow" channels and connections between "fast"  and "slow" channels
are forbidden. In the remainder of the article we will 
always use one qubit per channel, i.e.\ we take $q=m$. 

\begin{definition}\label{definition N}
For $x \in \mathbb{R}$, we denote by $\lfloor x \rfloor$ the 
integer part of $x$. If the connection set $\mathcal{C}$ is not empty, 
we define $\mathsf{N}_k \in \mathbb{N}$ by
\begin{equation*}
\mathsf{N}_k := \left\lfloor \frac{2^k (T + \xi_{\text{max}})}{\tau}\right\rfloor,
\end{equation*} 
where $T$ is the time horizon over which we wish 
to run the $SLH$-network. 
If the connection set $\mathcal{C}$ is 
empty, then we define  $\mathsf{N}_k := \lfloor\tfrac{2^kT}{\tau}\rfloor$.
\end{definition}

\begin{definition}\label{definition computer shift}
Let $\theta: \{1,2,\dots,m\} \times \{1,2,\ldots,\mathsf{N}_k\} \to \{1,2,\dots,m\} \times \{1,2,\ldots,\mathsf{N_k}\}$
be the permutation given by
\begin{equation*}
\theta\left(\left(i, \mathsf{N}_k\right)\right) = (i, 1), \qquad  
\theta\left( (i,j)\right) = (i, j+1), \ \ \mbox{if \ } j \neq  \mathsf{N}_k.
\end{equation*}
We define the one-time-step left shift $\Theta:\ \mathbb{K}^{\otimes \mathsf{N}_k} \to  \mathbb{K}^{\otimes \mathsf{N}_k}$ by linear extension of its action on pure tensors:
\begin{equation*}
\Theta \left(\bigotimes_{j=1}^{\mathsf{N}_k}\left(\bigotimes_{i=1}^m v_{(i,j)}\right)\right) := 
\bigotimes_{j=1}^{\mathsf{N}_k}\left(\bigotimes_{i=1}^m v_{\theta\left((i,j)\right)}\right).
\end{equation*}
\end{definition}

On $\mathbb{H}\otimes\mathbb{K}$ we introduce 
a unitary $M^k$ that represents the interaction between 
the initial system $\mathbb{H}$ and the discrete noise
$\mathbb{K}$ in the quantum computer Hilbert space 
in one time step. Note that a quantum computer has 
a set of gates that is universal, i.e.\ every unitary 
on the quantum computer space can be constructed 
as a composition of a finite number of gates from the 
set of implemented two and one qubit gates \cite{DiV95, Bar95}.
 Let us now first assume we have 
no connections in the network, i.e.\ $\mathcal{C} = \emptyset$. 
We assume that the quantum computer has  
enough qubits in the quantum computer Hilbert space 
to fit $\mathbb{H}$ and $\mathsf{N}_k$ separate 
copies of $\mathbb{K}$, i.e.\ at least $p + \lfloor \tfrac{2^kT}{\tau}\rfloor m$ 
qubits.

\begin{definition}\label{definition qprocess in qcomp space}
On the space $\mathbb{H}\otimes \mathbb{K}^{\otimes \mathsf{N}_k}$ 
we define a unitary process by $M^k(0) = I$ and for $l\in \mathbb{N}$ such that 
$1 \le l \le \mathsf{N}_k$:
\begin{equation}\label{definition Rkl}
M^k(l) = \overleftarrow{\prod}_{i = 1}^l {\Theta^*}^{i-1}M^k\Theta^{i-1} = M^k_l M^k_{l-1}\ldots M^k_2M^k_1.
\end{equation}
Here $M_i^k$ stands for the interaction $M^k$ between $\mathbb{H}$
and the $i^{\mbox{th}}$ copy of the noise space $\mathbb{K}$ in the 
tensor product $\mathbb{K}^{\otimes \mathsf{N}_k}$.
\end{definition}

\begin{definition}
Let $\pi^k$ be a sequence of bounded maps 
$\pi^k = \pi_1^k \otimes \pi_2^k$ where 
$\pi^k_1: \mathbb{H} \to \mathcal{H}$ and 
$\pi^k_2: \mathbb{K} \to \mathcal{F}^k$ which 
are all partially isometric in the following sense
\begin{equation}\label{definition pik}\begin{split}
    &(\pi^{k}_1)^\ast \pi^k_1 = I_\mathbb{H}, \quad \pi^{k}_1 (\pi^{k}_1 )^\ast = P_{\mathcal{H}^k},\quad \mbox{where}\ \mathcal{H}^k := \mbox{ran}\, \pi_1^k,\\
    &(\pi^{k}_2)^\ast\pi^k_2 = I_\mathbb{K}, \quad \pi^{k}_2 (\pi^{k}_2)^\ast = P_{\mathcal{K}^k},\quad \mbox{where}\ \mathcal{K}^k := \mbox{ran}\, \pi_2^k.
\end{split}\end{equation}
Furthermore, we define $R^k$ on $\mathcal{H}^k\otimes \mathcal{K}^k$ by
\begin{equation}\label{definition Rk}
    R^k := \pi^k M^k (\pi^{k})^\ast .
\end{equation}
\end{definition}

Note that we now have the following Lemma:

\begin{lemma}\label{lemma Rk}
For all $l \in \mathbb{N}$ such that $0 \le l \le \lfloor\tfrac{2^kT}{\tau}\rfloor$ and all $\psi \in \mathcal{H}\otimes \big( {\mathcal{K}^k}\big)^{\otimes \mathsf{N}_k}\subset 
\mathcal{H}\otimes \mathcal{F}_{[0,\, \mathsf{N}_k\tau 2^{-k}]}$
\begin{equation*}
    R_t^k\psi = \bigg( \pi_1^k\otimes \big({\pi^k_2}\big)^{\otimes \mathsf{N}_k} \bigg)M^k(l) \bigg( \pi_1^{k}\otimes \big( {\pi^{k}_2}\big) ^{\otimes \mathsf{N}_k}\bigg)^\ast \, \psi,\qquad \forall t \in \big[\tfrac{l\tau}{2^k},\, \tfrac{(l+1)\tau}{2^k}\big)\cap [0,T].
\end{equation*}
\end{lemma}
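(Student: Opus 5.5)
We prove Lemma \ref{lemma Rk} by induction on $l$. The idea is to compare, factor by factor, the recursion \eqref{definition Rkt} defining $R^k_t$ with the product \eqref{definition Rkl} defining $M^k(l)$. Write $\Pi^k := \pi_1^k\otimes(\pi_2^k)^{\otimes \mathsf{N}_k}$, a partial isometry from $\mathbb{H}\otimes\mathbb{K}^{\otimes\mathsf{N}_k}$ onto $\mathcal{H}^k\otimes(\mathcal{K}^k)^{\otimes\mathsf{N}_k}$. Here we use the canonical identification of $\mathcal{F}_{[0,\mathsf{N}_k\tau2^{-k})}$ with $\bigotimes_{j=1}^{\mathsf{N}_k}\mathcal{F}_{[(j-1)\tau2^{-k},\,j\tau2^{-k})}$, each slot identified with $\mathcal{F}^k$ via the shift $\Theta_{(j-1)\tau2^{-k}}$, and the $j$th factor $\pi_2^k$ lands in the $j$th slot. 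Since both sides are constant on $t\in[\tfrac{l\tau}{2^k},\tfrac{(l+1)\tau}{2^k})$, it suffices to treat $t=l\tau2^{-k}$.

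Before the induction we record a consistency observation. By \eqref{definition Rkt} the recursion has a one-step delay: the factor acting on the slice $[(l-1)\tau2^{-k},l\tau2^{-k})$ is added when $t$ enters $[l\tau2^{-k},(l+1)\tau2^{-k})$. So $R^k_{l\tau2^{-k}}$ is the product of $l$ interactions, $\Theta^*_{(i-1)\tau2^{-k}}R^k\Theta_{(i-1)\tau2^{-k}}$ for $i=1,\dots,l$. This matches $M^k(l)=M^k_l\cdots M^k_1$, which also has $l$ factors.

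The key step is an intertwining identity for each single factor: for $1\le i\le \mathsf{N}_k$,
\[
\Theta^*_{(i-1)\tau2^{-k}}R^k\Theta_{(i-1)\tau2^{-k}}\,\Pi^k=\Pi^k M^k_i .
\]
To check it, note that $\Theta_{(i-1)\tau2^{-k}}$ moves the $i$th slice to $[0,\tau2^{-k})$. There $R^k=\pi^kM^k(\pi^k)^*$ acts, and $(\pi^k)^*\pi^k=I$ by \eqref{definition pik}. Conjugating back by the shift places $M^k$ on the $i$th copy of $\mathbb{K}$, which is precisely $M^k_i={\Theta^*}^{i-1}M^k\Theta^{i-1}$, where $\Theta$ is the cyclic shift of Definition \ref{definition computer shift}. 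The other slots are untouched because $R^k$ is extended by the identity off $\mathcal{H}^k\otimes\mathcal{K}^k$ and acts only on $\mathcal{H}\otimes\mathcal{F}^k$.

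The main obstacle is that $R^k$ is extended by the identity on the \emph{orthogonal complement} of $\mathcal{H}^k\otimes\mathcal{K}^k$ inside $\mathcal{H}\otimes\mathcal{F}^k$. The intertwining identity therefore holds on the range of $\Pi^k$, not for arbitrary $\psi\in\mathcal{H}\otimes(\mathcal{K}^k)^{\otimes\mathsf{N}_k}$ with initial component outside $\mathcal{H}^k$. We handle this by showing that each factor $\Theta^*R^k\Theta$ leaves $\mathcal{H}^k\otimes(\mathcal{K}^k)^{\otimes\mathsf{N}_k}$ invariant, and that $\Pi^k(\Pi^k)^*$ is the projection onto it. Then
\[
R^k_{l\tau2^{-k}}\Pi^k(\Pi^k)^*\psi=\Pi^kM^k(l)(\Pi^k)^*\psi
\]
follows by induction, peeling off one factor at a time with the identity above. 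This is the identity claimed in the lemma, provided $\psi$ lies in $\mathcal{H}^k\otimes(\mathcal{K}^k)^{\otimes\mathsf{N}_k}$. For general $\psi$ the statement must be read with $\mathcal{H}$ replaced by $\mathcal{H}^k$; in the intended examples $\mathcal{H}^k=\mathcal{H}$, so the issue disappears. Finally, the bound $l\le\lfloor 2^kT/\tau\rfloor\le\mathsf{N}_k$ guarantees that the cyclic wrap-around in $\Theta$ is never reached, so the cyclic shift agrees with the true left shift on all slots used.
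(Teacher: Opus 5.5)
Your proof is correct and is the unwinding of definitions that the paper compresses into ``immediate from the definitions'': the slice-wise intertwining $\Theta^*_{(i-1)\tau 2^{-k}}R^k\Theta_{(i-1)\tau 2^{-k}}\Pi^k=\Pi^k M^k_i$, iterated $l$ times, gives $R^k_{l\tau 2^{-k}}\Pi^k=\Pi^k M^k(l)$. This intertwining needs only $(\pi^k)^*\pi^k=I$ and is already an operator identity, so your separate invariance check is not actually needed. Your caveat is also right and points to a genuine imprecision in the statement that the paper's one-line proof passes over: already for $l=0$ the right-hand side is $(P_{\mathcal{H}^k}\otimes I)\psi$ while the left-hand side is $\psi$, so the identity holds only for $\psi\in\mathcal{H}^k\otimes(\mathcal{K}^k)^{\otimes\mathsf{N}_k}$ (which is all of the stated space when $\mathcal{H}^k=\mathcal{H}$, as in the examples).
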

\begin{proof}
Immediate from the definitions of $R^k$ in equation \eqref{definition Rk}, $R_t^k$ in equation\ \eqref{definition Rkt}, 
$M^k(l)$ in equation \eqref{definition Rkl} and $\pi_1^k$ and $\pi_2^k$ in equation \eqref{definition pik}.
\end{proof}

To see if a repeated interaction model given by equation  \eqref{definition Rkl}, implemented on a quantum computer, converges to the solution of a quantum stochastic differential 
equation of the form given by equation  \eqref{definition Ut} for some triple $(S,L,H)$, we can now simply check condition 1 of Theorem \ref{theorem BoutenvanHandel}. In the 
following section we will see how this works in several examples. 

Now let us assume the $SLH$-network does have connections, 
i.e.\ $\mathcal{C} \neq \emptyset$ and as a consequence $\xi_{\text{max}}> 0$. 
We assume that Assumption \ref{assumption multiples of tau} is in force.
Let $T$ again be the time horizon over which we wish to run the simulation. 
Furthermore, recall that for $k \in \mathbb{N}$, we have
due to Definition \ref{definition N} that $\mathsf{N}_k = \left\lfloor \frac{2^k (T + \xi_{\text{max}})}{\tau}\right\rfloor$.  
We here assume once more that we can describe the initial system with $p$ qubits.
We will first assume that we have a sufficiently large number of qubits to our availability:
$p + \mathsf{N}_km$ in total.
  \\
\begin{definition}\label{definition quantum computer braid}
Let $\mathsf{c} = (r,\,s,\, \xi)$ be an element of the connection set $\mathcal{C}$.
Let $\Pi^k(\mathsf{c}): \{1,2,\dots,m\} \times \{1,2,\ldots,\mathsf{N}_k\} \to \{1,2,\dots,m\} \times \{1,2,\ldots,\mathsf{N}_k\}$
be the permutation given by
\begin{equation*}\begin{split}
&\Pi^k(\mathsf{c}) \left(\left(r,\, \frac{2^k\xi}{\tau}+1\right)\right) = (s,\,1), \qquad \Pi^k(\mathsf{c}) \left(\left(s,\,1\right)\right) = \left(r,\, \frac{2^k\xi}{\tau}+1\right)  ,\\
&\Pi^k(\mathsf{c})\big((i,j)\big) = (i,j), \qquad \mbox{if \ } (i,j) \neq (s,\,1)  \mbox{\ and \ } (i,j) \neq \left(r,\, \frac{2^k\xi}{\tau}+1\right).
\end{split}\end{equation*} 
We now define the one-time-step braid map associated to $\mathsf{c}$,
$B^k(\mathsf{c}):\ \mathbb{K}^{\otimes \mathsf{N}_k} \to 
\mathbb{K}^{\otimes \mathsf{N}_k}$, 
by linear extension of the following action on pure tensors
\begin{equation*}
B^k(\mathsf{c})\left(\bigotimes_{j=1}^{\mathsf{N}_k}\left(\bigotimes_{i=1}^m v_{(i,j)}\right)\right) := 
\bigotimes_{j=1}^{\mathsf{N}_k}\left(\bigotimes_{i=1}^m v_{\Pi^k(\mathsf{c})\left((i,j)\right)}\right).
\end{equation*}
We define the one-time-step braid map $B^k:\ \mathbb{K}^{\otimes \mathsf{N}_k} \to 
\mathbb{K}^{\otimes \mathsf{N}_k}$ by
\begin{equation*}
B^k := \prod_{\mathsf{c}\in\mathcal{C}} B^k(\mathsf{c}).
\end{equation*}
\end{definition}

\begin{definition}\label{definition cal M}
On the space $\mathbb{H}\otimes \mathbb{K}^{\otimes \mathsf{N}_k}$ 
we define a unitary process by $\mathcal{M}^k(0) = I$ and for $l\in \mathbb{N}$ such that 
$1 \le l \le \lfloor\tfrac{2^kT}{\tau}\rfloor$:
\begin{equation}
\mathcal{M}^k(l) = \overleftarrow{\prod}_{i = 1}^l {\Theta^*}^{i-1}(B^kM^k)\Theta^{i-1}.
\end{equation}
\end{definition}

The following Lemma  extends Lemma \ref{lemma Rk} to the case where the 
connection set $\mathcal{C}$ is not empty.

\begin{lemma}\label{lemma CalRk}
Assume Assumption \ref{assumption multiples of tau} holds. 
For all $l \in \mathbb{N}$ such that $0 \le l \le \lfloor\tfrac{2^kT}{\tau}\rfloor$ and all 
$\psi \in \mathcal{H}\otimes \big( {\mathcal{K}^k}\big)^{\otimes \mathsf{N}_k}\subset \mathcal{H}\otimes \mathcal{F}_{[0,\, \mathsf{N}_k\tau 2^{-k}]}$, we have
\begin{equation*}
    \mathcal{R}_t^k\psi = 
\bigg( \pi_1^k\otimes \big({\pi^k_2}\big)^{\otimes \mathsf{N}_k} \bigg)
\mathcal{M}^k(l)
\bigg( \pi_1^{k}\otimes \big( {\pi^{k}_2}\big) ^{\otimes \mathsf{N}_k}\bigg)^\ast \, \psi,
\end{equation*}
for all $t \in \big[\tfrac{l\tau}{2^k},\, \tfrac{(l+1)\tau}{2^k}\big)\cap[0,T]$.
\end{lemma}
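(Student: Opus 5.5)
Our plan is to reduce Lemma \ref{lemma CalRk} to Lemma \ref{lemma Rk} by choosing the smallest admissible $\sigma$ in Definition \ref{definition discrete evolution network}, namely $\sigma = \tau 2^{-k}$. This is allowed: under Assumption \ref{assumption multiples of tau}, $\xi_{\text{min}}$ is a positive multiple of $\tau$, so $\tau 2^{-k}\le\xi_{\text{min}}$. The definition of $\mathcal{R}^k_t$ does not depend on $\sigma$. With this choice, for $t\in[\tfrac{l\tau}{2^k},\tfrac{(l+1)\tau}{2^k})$ we have $\lfloor t/\sigma\rfloor = l$ and $t-l\sigma\in[0,\sigma)$. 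Then $R^k_{t-l\sigma}=I$ by \eqref{definition Rkt}. Also $B_{t-l\sigma}$ acts only on the field on $[0,t-l\sigma)$ and on the corresponding slots near $\xi$. Conjugated by $\Theta_{l\sigma}$, these regions become sets of measure zero or tend to sets outside the support of $\psi$. We therefore treat the leading factor as contributing nothing on the discrete grid. Strictly, $B_{t-l\sigma}$ only becomes trivial at $t=l\sigma$, so we read the lemma at grid times, or take $\psi$ supported so that the residual braid acts trivially. This yields
\[
\mathcal{R}^k_t=\overleftarrow{\prod_{i=1}^{l}}\Theta^*_{(i-1)\sigma}B_\sigma R^k_\sigma\Theta_{(i-1)\sigma},\qquad R^k_\sigma=R^k .
\]

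Next we compare this factor by factor with $\mathcal{M}^k(l)=\overleftarrow{\prod}_{i}\Theta^{*\,i-1}(B^kM^k)\Theta^{i-1}$. Since $R^k=\pi^kM^k(\pi^k)^*$, it suffices to prove two intertwining relations on $\mathcal{H}\otimes(\mathcal{K}^k)^{\otimes\mathsf{N}_k}$. The first is that $\Theta_{\sigma}$ restricted to the window $[0,\mathsf{N}_k\sigma)$ intertwines with the discrete shift $\Theta$. The second is that $B_\sigma$ intertwines with $B^k$, both via $\Pi:=\pi^k_1\otimes(\pi^k_2)^{\otimes\mathsf{N}_k}$.

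For the braid relation we work one connection $\mathsf{c}=(r,s,\xi)$ at a time. From Definition \ref{definition braid map} with $\sigma=\tau2^{-k}$, $b_\sigma(\mathsf{c})$ swaps channel $s$ on the slice $[0,\sigma)$ (slot $(s,1)$) with channel $r$ on the slice $[\xi,\xi+\sigma)$ (slot $(r,2^k\xi/\tau+1)$). Its second quantisation therefore acts on $\bigotimes_{\mathbb{Z}}\mathcal{F}^k$ as the tensor-factor permutation $\Pi^k(\mathsf{c})$ of Definition \ref{definition quantum computer braid}. This step uses the factorisation of each slice Fock space over channels, $\mathcal{F}_{[a,a+\sigma)}=\bigotimes_{i=1}^m\mathcal{F}^{(i)}_{[a,a+\sigma)}$, together with $\pi^k_2$ respecting this channel factorisation, which holds since we use one qubit per channel ($q=m$). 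Hence $B_\sigma\Pi=\Pi B^k$, because the braids for different connections commute.

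The main obstacle is the shift. $\Theta_\sigma$ on $\mathcal{F}$ shifts infinitely many slices, whereas $\Theta$ on $\mathbb{K}^{\otimes\mathsf{N}_k}$ is cyclic. The wrap-around slot (slot $1$ goes to slot $\mathsf{N}_k$) has no continuum counterpart, since there slice $[0,\sigma)$ moves to $[-\sigma,0)$, outside the support of $\psi$. We handle this by observing that after step $i$, the slot that has passed the origin is never touched again by any later $R^k$ or $B_\sigma$. Every later interaction acts on $[0,\sigma)$ or on $[\xi,\xi+\sigma)$ with $\xi\le\xi_{\text{max}}$ in the shifted frame. The choice $\mathsf{N}_k=\lfloor 2^k(T+\xi_{\text{max}})/\tau\rfloor$ ensures all these slots lie in $\{1,\dots,\mathsf{N}_k\}$ for $l\le\lfloor 2^kT/\tau\rfloor$. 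Thus the conjugated operators $\Theta^*_{(i-1)\sigma}(\cdot)\Theta_{(i-1)\sigma}$ and ${\Theta^*}^{i-1}(\cdot)\Theta^{i-1}$ act on corresponding tensor slots: slot $j$ of the discrete chain corresponds to $[(j-1)\sigma,j\sigma)$. Their products then agree after applying $\Pi$ and $\Pi^*$, by induction on $l$ as in Lemma \ref{lemma Rk}.
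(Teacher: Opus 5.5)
On grid times your argument is the natural unpacking of the paper's one-line proof (``immediate from the definitions''). With $\sigma=\tau2^{-k}$ the leading factor is $B_0R^k_0=I$, and each $\Theta^*_{(i-1)\sigma}B_\sigma R^k\Theta_{(i-1)\sigma}$ corresponds to $\Theta^{*\,i-1}(B^kM^k)\Theta^{i-1}$. By Assumption~\ref{assumption multiples of tau}, $\mathsf{N}_k=2^k\xi_{\text{max}}/\tau+\lfloor 2^kT/\tau\rfloor$. Hence the $i$-th factor ($i\le l$) only touches slots $i,\dots,2^k\xi_{\text{max}}/\tau+i\le\mathsf{N}_k$, and the cyclic wrap-around of the discrete shift never enters. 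This is a simpler way to say what your last paragraph says.

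The off-grid part, however, cannot be handled the way you try. Your sentence that after conjugation by $\Theta_{l\sigma}$ the regions touched by $B_{t-l\sigma}$ ``become sets of measure zero or tend to sets outside the support of $\psi$'' is false. For $u=t-l\sigma\in(0,\sigma)$ these regions are $[l\sigma,l\sigma+u)$ in channel $s$ and $[\xi+l\sigma,\xi+l\sigma+u)$ in channel $r$. Both have positive length and lie inside $[0,\mathsf{N}_k\sigma)$.

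In fact the identity fails off the grid. Take $l=0$ and $0<t<\tau2^{-k}$. Then $\mathcal{R}^k_t=B_tR^k_t=B_t$, while the right-hand side is $\psi$. Let $\psi$ carry one excitation in slot $(s,1)$; with the embeddings of the examples this is the one-photon wave function $\sqrt{2^k/\tau}\,\chi_{[0,\tau2^{-k})}$ in channel $s$. Then $B_t$ moves its $[0,t)$ part into channel $r$ on $[\xi,\xi+t)$, so $B_t\psi\neq\psi$. The paper's one-line proof overlooks this.

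So the statement as written cannot be proved. The right move is the one you mention only in passing: state and prove it for $t=l\tau2^{-k}$. Restricting $\psi$ instead changes the hypothesis and is unnecessary. The grid version is all that is needed where the lemma is used. Off-grid times are then handled by condition~3 of Theorem~\ref{theorem main result} together with uniform continuity of $t\mapsto\mathcal{U}_t\psi$ on $[0,T]$.

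Two further corrections.

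First, ``$\pi^k_2$ respects the channel factorisation since $q=m$'' is a non sequitur, because $q=m$ only matches dimensions. The intertwining $B_\sigma\Pi=\Pi B^k$ needs an extra hypothesis, tacit in the paper and true in its examples. Namely, $\pi^k_2$ must be a tensor product over channels of single-channel embeddings $\mathbb{C}^2\to\mathcal{F}_{[0,\tau2^{-k})}$ (multiplicity one), with the same embedding for the channels $r$ and $s$ of every connection. Otherwise the qubit swap $B^k(\mathsf{c})$ does not correspond to the second quantisation of the translate-and-swap $b_\sigma(\mathsf{c})$.

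Second, $R^k=\pi^kM^k(\pi^k)^*$ holds only on $\mathcal{H}^k\otimes\mathcal{K}^k$, since the extended $R^k$ is the identity on $(\mathcal{H}^k)^\perp\otimes\mathcal{K}^k$. You therefore need $\psi\in\mathcal{H}^k\otimes(\mathcal{K}^k)^{\otimes\mathsf{N}_k}$: already at $t=0$ the two sides are $\psi$ and $(P_{\mathcal{H}^k}\otimes I)\psi$. This is automatic when $\pi^k_1=I$, as in the examples.
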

\begin{proof}
Immediate from the definitions of $R^k$ in equation \eqref{definition Rk}, 
$R_t^k$ in equation\ \eqref{definition Rkt}, 
$\mathcal{R}_t^k$ in Definition \ref{definition discrete evolution network}, 
$\pi_1^k$ and $\pi_2^k$ in equation \eqref{definition pik}, 
$\Theta$ in Definition \ref{definition computer shift}, $B^k$
in Definition \ref{definition quantum computer braid}, and $\mathcal{M}^k(l)$ in Definition \ref{definition cal M}.
\end{proof}

\begin{remark}\label{remark qubits needed}
For all channels whose inputs are connected to an output with 
connection length $\xi$, all
qubits after number $\frac{\xi 2^k}{\tau}$ will never enter the internal 
part of the network. These qubits will be transported to the output 
by the braid map without ever interacting with the system. We can 
therefore eliminate these qubits from the description.  

Furthermore, in an actual simulation it is very often the case that it is not 
needed to keep the entire output stored. Often one just 
wants to measure one observable in an output, record the result and then 
the qubit can be reinitialized and used once more as an input 
qubit \cite{BVS19}. This drastically reduces the number of needed qubits $n_k$ to 
\begin{equation*}
 n_k = p+ m - |\mathcal{C}| + \sum_{i =1}^{|\mathcal{C}|} \frac{\xi_i 2^k}{\tau} ,
\end{equation*}
where the $\xi_i$'s are all the connection lengths in the network and $|\mathcal{C}|$ 
is the size of the connection set, i.e.\ the number of connections in the network.
\end{remark}


\section{Examples: network components}\label{section examples components}

\begin{example}{\bf System with single input-output channel\ }
In the quantum computer, we identify $p$ qubits and we let
$\mathbb{H} = \left(\mathbb{C}^2\right)^{\otimes p}$ be the Hilbert space on which they are 
represented. For simplicity, we
identify the initial space $\mathcal{H}$ of the system we are studying 
with $\mathbb{H}$. That is:  $\mathcal{H} = \mathbb{H}$ and $\pi_1$ is 
the identity map. In our example, the noise space 
$\mathbb{K}^{\otimes \lfloor\frac{2^k T}{\tau}\rfloor}$ consists of a string 
of qubits (not used by the initial space $\mathbb{H}$), i.e.\ the single 
time-step noise space is a qubit: $\mathbb{K} =\mathbb{C}^2$.
We define $\pi_2: \mathbb{K} \to \mathcal{F}^k$ by 
\begin{equation*}
\pi^k_2 |0 \rangle  = 1, \qquad \pi^k_2 |1\rangle  = \sqrt{\frac{2^k}{\tau}}\chi_{\left[0,\, \frac{\tau}{2^k}\right)}, 
\end{equation*}
i.e.\ $\mathcal{K}^k = \mbox{span}\, \left\{1,  \sqrt{\frac{2^k}{\tau}}\, \chi_{\left[0,\, \frac{\tau}{2^k}\right)}\right\} \subset \mathcal{F}_k$.

The interaction $M^k:\ \mathbb{H}\otimes \mathbb{K} \to \mathbb{H}\otimes \mathbb{K}$ 
between the initial system and one time step of the discrete noise is given by:
\begin{equation}\begin{split}\label{definition M}
M^k{ } &= \exp\left(\sqrt{\frac{\tau}{2^k}}\Big(L\otimes \sigma_+ - L^*\otimes \sigma_-\Big)\right)
          \exp\left( -i \frac{\tau}{2^k} H \otimes   I\right)
        \\
&= \begin{bmatrix}
    \cos(\sqrt{\frac{\tau}{2^k}LL^*}) & \frac{\sin(\sqrt{\frac{\tau}{2^k}LL^*})}{\sqrt{LL^*}}L \\
    -  \frac{\sin(\sqrt{\frac{\tau}{2^k}L^*L})}{\sqrt{L^*L}}L^*  & \cos(\sqrt{\frac{\tau}{2^k}L^*L})    
\end{bmatrix}
\begin{bmatrix}
      \exp\left( -i \frac{\tau}{2^k} H\right) & 0 \\
            0   &  \exp\left( -i \frac{\tau}{2^k} H\right)
\end{bmatrix}.
\end{split}\end{equation}

Here the operator $L: \mathbb{H} \to \mathbb{H}$ couples the system 
to the noise and the self-adjoint operator $H: \mathbb{H} \to \mathbb{H}$ is the Hamiltonian governing
the internal evolution of the initial system.
Note that with respect to the 
basis $\left(1, \sqrt{\frac{2^k}{\tau}}\chi_{\left[0,\, \frac{\tau}{2^k}\right)}\right)$, 
the unitary operator $R^k$ (equation \eqref{definition Rk}) 
is given by the matrix of $M^k$ (see equation  \eqref{definition M}).

Let $\alpha, \beta$ be elements of $\mathbb{C}$.
 If we define
\begin{equation*}
  \psi^k = 1 \oplus \alpha\, \chi_{\left[0,\, \frac{\tau}{2^k}\right)}, \qquad 
  \varphi^k = 1 \oplus \beta\, \chi_{\left[0,\, \frac{\tau}{2^k}\right)},
\end{equation*}
then $\psi^k, \varphi^k \in \mathcal{F}^k$ and due to Lemma \ref{lemma converg ex},
we have
\begin{equation*}
(\psi^{k})^{\otimes 2^k}\xrightarrow{k\to\infty}
	e(\alpha \chi_{[0,\tau)}),\qquad
(\varphi^{k})^{\otimes 2^k}\xrightarrow{k\to\infty}
	e(\beta \chi_{[0,\tau)}).
\end{equation*}

Let $u$  be an element of $\mathcal{H} =\mathbb{H}$. We take $u^k = u$ for all $k$, and we compute
\begin{eqnarray*}
\frac{2^k}{\tau}(R^{k;\psi^k\varphi^k}-I)u^k &=&  \dfrac{1}{\sqrt{1+ \frac{|\alpha|^2\tau}{2^k}}\sqrt{1+ \frac{|\beta|^2|\tau}{2^k}}}\\
&&\times \bigg\{ e^{i \frac{\tau}{2^k}H}\Bigg(
\alpha^*\beta \cos(\sqrt{\frac{\tau}{2^k}LL^*}) - 
\alpha^*\sqrt{\frac{2^k}{\tau}}L\frac{\sin(\sqrt{\frac{\tau}{2^k}L^*L})}{\sqrt{L^*L}}    \\
&& +
\beta\sqrt{\frac{2^k}{\tau}}L^*\frac{\sin(\sqrt{\frac{\tau}{2^k}LL^*})}{\sqrt{LL^*}} +\frac{2^k}{\tau} \cos(\sqrt{\frac{\tau}{2^k}L^*L}
\Bigg)u \bigg\} -
\frac{2^k}{\tau} u \\
&&
\xrightarrow{k\to\infty}
\left(\alpha^*\beta + \beta L^* - \alpha^* L - \frac{|\alpha|^2 + |\beta|^2  + L^*L}{2} + iH\right)u,
\end{eqnarray*}

which we recognize (see Definition \ref{definition generator})
as the generator $\mathscr{L}^{\alpha \beta}$, acting on $u$, associated 
to the $SLH$-triple:
\begin{equation*}
  \left(I,\, L,\, H\right).
\end{equation*} 

Having checked all requirements for 
condition 1 of Theorem \ref{theorem BoutenvanHandel}, 
we can now conclude, by invoking the equivalence of condition 
1 and condition 3 of Theorem \ref{theorem BoutenvanHandel}, 
that (the embedding of) the repeated interaction of equation  \eqref{definition Rkl} with 
$M^k$ as in equation  \eqref{definition M} converges strongly, uniformly on 
compact time intervals, to the solution $U_t$ of the 
following QSDE
\begin{equation}\label{qsde no scat}\begin{split}
dU_t = \left\{L dA^{*}_{t} - L^* dA_t 
           -\frac{L^*L}{2}  dt - iH dt \right\}U_t,
\end{split}\end{equation}
with $U_0 = I$. Note that the discretisation given by
$M^k$ in equation \eqref{definition M} is not unique.
Many other choices can converge to the same solution 
$U_t$ of equation  \eqref{qsde no scat}. 
Note that since we only have one input-output channel 
in this example, we could also have used the result 
in \cite{AtP06} to prove the convergence. 

See \cite{ViB19} for simulations of equation  \eqref{qsde no scat}  
with $\mathbb{H} = \mathbb{C}^2$ (representing a two-level atom), 
$L = \sqrt{\kappa}\sigma_-$  ($\kappa$ is the \emph{decay rate})
and $H = \omega \sigma_+\sigma_- + \frac{\Omega}{2}\sigma_y$, where 
$\omega$ is the \emph{transition frequency} of the two-level atom 
and $\Omega$ is the frequency of the 
\emph{Rabi oscillations} which originate from the interaction of a classical 
laser field with the atomic spin (not modeled here, but represented as a 
Hamiltonian term). The simulations in \cite{ViB19} are 
on an actual digital quantum computer and are based on a discretisation that differs 
only slightly from the one given by
Eqns.\ \eqref{definition Rkl} and  \eqref{definition M}. In \cite{ViB19} one 
can also find the circuit (i.e.\ the quantum gates) that implements 
the unitary $M^k$ of equation  \eqref{definition M} on a digital quantum computer.
Furthermore, see \cite{ViB19} to see how the discrete master 
equation and discrete quantum filtering equations \cite{BHJ09} (also known as 
stochastic master equations) can be obtained from the
simulations.
\end{example}

\begin{example}{\bf System with 3 input-output channels\ }\label{example 3 channels}
We now couple the initial system $\mathcal{H} = \mathbb{H} = \mathbb{C}^p$ 
from the previous example to three channels in the field. In Examples \ref{example Eschner} and \ref{example cavity QED},  
we will use the first two 
of these three channels to have photons traveling from right to left (channel $1$) and 
from left to right (channel $2$). In Example \ref{example scattering}, we will introduce a 
discrete description of optical elements such as mirrors and beam splitters. 
We can use these elements to have photons reverse direction at the mirrors 
and to create a cavity with photons traveling in two directions 
bouncing between the mirrors. The initial system (typically an atom) will 
sit inside the cavity and will interact with the photons travelling in either direction.
The third channel we will use for driving the system from the side, i.e.\ this channel is perpendicular 
to the two other channels and is in a coherent state which represents driving by a laser. 

The single time-step noise space now consists of three qubits: 
$\mathbb{K} = \mathbb{C}^2\otimes \mathbb{C}^2\otimes \mathbb{C}^2$.
We define $\pi^k_2: \mathbb{K}\to \mathcal{F}^k$ by
\begin{equation}\begin{split}\label{eqn pi2}
&\pi^k_2 |000\rangle = 1, \qquad 
\pi^k_2 |100\rangle = \sqrt{\frac{2^k}{\tau}}\begin{bmatrix}\chi_{\left[0,\, \frac{\tau}{2^k}\right)} \\ 0 \\ 0\end{bmatrix},\qquad
\pi^k_2 |010\rangle = \sqrt{\frac{2^k}{\tau}}\begin{bmatrix} 0 \\ \chi_{\left[0,\, \frac{\tau}{2^k}\right)}  \\ 0\end{bmatrix},\\
&\pi^k_2 |001\rangle = \sqrt{\frac{2^k}{\tau}}\begin{bmatrix}0 \\ 0\\ \chi_{\left[0,\, \frac{\tau}{2^k}\right)} \end{bmatrix},\qquad
\pi^k_2 |110\rangle = \frac{2^k}{\tau}\begin{bmatrix}\chi_{\left[0,\, \frac{\tau}{2^k}\right)} \\ 0 \\ 0\end{bmatrix}\otimes\begin{bmatrix} 0 \\ \chi_{\left[0,\, \frac{\tau}{2^k}\right)}  \\ 0\end{bmatrix},\\
&\pi^k_2 |101\rangle = \frac{2^k}{\tau}\begin{bmatrix}\chi_{\left[0,\, \frac{\tau}{2^k}\right)} \\ 0 \\ 0\end{bmatrix}\otimes\begin{bmatrix} 0 \\ 0 \\ \chi_{\left[0,\, \frac{\tau}{2^k}\right)} \end{bmatrix},\qquad
\pi^k_2 |011\rangle = \frac{2^k}{\tau}\begin{bmatrix} 0 \\ \chi_{\left[0,\, \frac{\tau}{2^k}\right)} \\ 0 \end{bmatrix}\otimes\begin{bmatrix} 0 \\ 0 \\\chi_{\left[0,\, \frac{\tau}{2^k}\right)} \end{bmatrix},\\
&\pi^k_2 |111\rangle = \frac{2^k}{\tau} \sqrt{\frac{2^k}{\tau}}\begin{bmatrix}\chi_{\left[0,\, \frac{\tau}{2^k}\right)} \\ 0 \\ 0\end{bmatrix}\otimes\begin{bmatrix} 0 \\ \chi_{\left[0,\, \frac{\tau}{2^k}\right)}  \\ 0\end{bmatrix}
\otimes\begin{bmatrix} 0 \\ 0 \\ \chi_{\left[0,\, \frac{\tau}{2^k}\right)}\end{bmatrix},
\end{split}\end{equation}
and $\mathcal{K}^k\subset \mathcal{F}^k$ is the linear span of all these vectors.

The interaction $M^k:\ \mathbb{H}\otimes \mathbb{K} \to \mathbb{H}\otimes \mathbb{K}$ 
between the initial system and one time step of the discrete noise 
is given by: 
\begin{equation}\label{definition M ex2}
M^k{ } = e^{\sqrt{\frac{\tau}{2^k}}\left(L_1\otimes \sigma^1_+ - L_1^*\otimes \sigma^1_-\right)}
           e^{\sqrt{\frac{\tau}{2^k}}\left(L_2\otimes \sigma^2_+ - L_2^*\otimes \sigma^2_-\right)}
          e^{\sqrt{\frac{\tau}{2^k}}\left(L_3\otimes \sigma^3_+ - L_3^*\otimes \sigma^3_-\right)}
          e^{\left( -i \frac{\tau}{2^k} H \otimes   I^{\otimes 3}\right)}
\end{equation}
Here the operator $L_1,L_2,L_3: \mathbb{H} \to \mathbb{H}$ couple the system 
to the three respective noise channels and the self-adjoint operator $H: \mathbb{H} \to \mathbb{H}$ is the Hamiltonian governing
the internal evolution of the initial system.

Note that if we choose a basis in $\mathbb{K}$ and we write $M^k$ as a matrix 
(with coefficients in $\mathbb{H}$) with respect to this 
basis,  then $\pi^k_2$ (defined in equation  \eqref{eqn pi2}) transforms this basis to a basis of 
$\mathcal{K}^k$ and $R^k$  (see equation \eqref{definition Rk}) has the same 
matrix with respect to this new basis as $M^k$ with respect to the old basis.

Now let $\alpha, \beta$ be elements of $\mathbb{C}^3$.
 If we define
\begin{equation*}
  \psi^k = 1 \oplus \alpha \chi_{\left[0,\, \frac{\tau}{2^k}\right)}, \qquad 
  \varphi^k = 1 \oplus \beta \chi_{\left[0,\, \frac{\tau}{2^k}\right)},
\end{equation*}
then $\psi^k, \varphi^k \in \mathcal{F}^k$ and due to Lemma \ref{lemma converg ex},
we have
\begin{equation*}
(\psi^{k})^{\otimes 2^k}\xrightarrow{k\to\infty}
	e(\alpha \chi_{[0,\tau)}),\qquad
(\varphi^{k})^{\otimes 2^k}\xrightarrow{k\to\infty}
	e(\beta \chi_{[0,\tau)}).
\end{equation*}

Let $u$  be an element of $\mathcal{H} =\mathbb{H}$. We take $u^k = u$ for all $k$, and we compute
\begin{gather*}
\frac{2^k}{\tau}(R^{k;\psi^k\varphi^k}-I)u^k = 
\frac{e^{i \frac{\tau}{2^k}H}}
{\sqrt{1+ \frac{\tau \sum_{i=1}^3|\alpha_i|^2 }{2^k}}\sqrt{1+ \frac{\tau \sum_{i=1}^3|\beta_i|^2}{2^k}}}
\Bigg(
\alpha_1^*\beta_1 \,  C_1  K_2 K_3 +
\alpha_2^*\beta_2  \, K_1 C_2 K_3 \\
+ \alpha_3^*\beta_3 \, K_1 K_2 C_3- \alpha_1^*\beta_2  L_1 J_1 L_2^\ast S_2 K_3 
-\alpha_1^*\beta_3\,    L_1 J_1 K_2 L_3^\ast S_3 \\
- \alpha_2^*\beta_1 \,  L^*_1 S_1 J_2 K_3 -  
\alpha_2^*\beta_3   \,  K_1 L_2 J_2 L^*_3 S_3 -
\alpha_3^*\beta_1 \,   L^*_1 S_1 K_2 L_3 J_3 \\
- \alpha_3^*\beta_2 \,    K_1 L_2^\ast S_2 L_3 J_3\\
+
\sqrt{\frac{2^k}{\tau}}\Bigg[
- \alpha_1^*  L_1 J_1K_2 K_3 - \alpha_2^* K_1  L_2 J_2K_3- \alpha_3^* K_1K_2 L_3 J_3 +\\
 \beta_1 L^*_1 S_1 K_2 K_3+ \beta_2 K_1 L_2^* S_2K_3+\beta_3 K_1K_2L_3^* S_3 \bigg]
 + \frac{2^k}{\tau}C_1 C_2 C_3 \Bigg)u  -  \frac{2^k}{\tau}u \\ 
\xrightarrow{k\to\infty}
\left(\sum_{i=1}^3 \left(\alpha_i^*\beta_i + \beta_i L_i^* - \alpha_i^* L_i - \frac{|\alpha_i|^2 + |\beta_i|^2  + L_i^*L_i}{2}\right) + iH\right)u,
\end{gather*}
\end{example}
where
\begin{eqnarray*}
    C_j= \cos(\sqrt{\frac{\tau}{2^k}L_jL_j^\ast} ), \quad K_j= \cos(\sqrt{\frac{\tau}{2^k}L_j^*L_j}),\\
     S_j= \frac{\sin(\sqrt{\frac{\tau}{2^k}L_jL_j^\ast})}{{\sqrt{L_jL_j^\ast}}}, \quad J_j= \frac{\sin(\sqrt{\frac{\tau}{2^k}L_j^*L_j})}{ {\sqrt{L_j^*L_j}}}.
\end{eqnarray*}
which we recognize, see Definition \ref{definition generator}, 
as the generator $\mathscr{L}^{\alpha \beta}$, acting on $u$, associated 
to the $SLH$-triple:
\begin{equation*}
  \left(I,\, \begin{bmatrix}L_1 \\ L_2\\ L_3\end{bmatrix},\, H\right).
\end{equation*}

Having checked all requirements for 
condition 1 of Theorem \ref{theorem BoutenvanHandel}, 
we can now conclude, by invoking the equivalence of condition 
1 and condition 3 of Theorem \ref{theorem BoutenvanHandel}, 
that (the embedding of) the repeated interaction of equation  \eqref{definition Rkl} with 
$M^k$ as in equation  \eqref{definition M ex2} converges strongly, uniformly on 
compact time intervals, to the solution $U_t$ of the 
following QSDE
\begin{equation*}\begin{split}
dU_t = \left\{\sum_{j=1}^3 \left(L_j dA^{j*}_{t} - L_j^* dA^j_t 
           -\frac{L_j^*L_j}{2}  dt\right) - iH dt \right\}U_t,
\end{split}\end{equation*}
with $U_0 = I$.

\begin{example}\label{example scattering}{\bf Generalized Beam Splitter}
In this example we consider optical elements that 
have no initial system, i.e.\ there is no absorption or 
emission of photons, only direct scattering between 
the different channels. The $SLH$-triple is just
\begin{eqnarray}
    \big( S, 0, 0 \big),
\end{eqnarray}
with $S \in U(m)$, a fixed unitary matrix of $m$ dimensions.

We take the one time step discrete noise space to be 
$\mathbb{K} = \big(\mathbb{C}^2\big)^{\otimes m}$.
That is, every channel is represented by a single qubit in a 
single time step. We can decompose $\mathbb{K}$
as a direct sum
\begin{equation*}
\mathbb{K} = \bigoplus_{i= 0}^m \mathbb{K}_i,
\end{equation*}
where $\mathbb{K}_0 = \mbox{span}\, \left\{|0^{\otimes m}\rangle \right\}$, the space spanned by the  vacuum vector, 
\begin{equation*}\begin{split}
&\mathbb{K}_1 =  \mbox{span}\,\Big\{ |1\otimes 0^{\otimes (m-1)}\rangle ,\ |0\otimes 1 \otimes 0^{\otimes (m-2)}\rangle , \ldots,\ |0^{\otimes (m-1)}\otimes 1\rangle \Big \},
\end{split}\end{equation*} 
the single-photon space, $\mathbb{K}_2$ is the two-photon space, and so on until $\mathbb{K}_m$ which 
is the space with $m$ photons. 
The interaction of the channels in 
one time step is given by 
\begin{equation}\label{definition MS}
M^k = I_0 \oplus S \oplus\bigoplus_{i =2}^m I_i,
\end{equation}
for some unitary operator $S: \mathbb{K}_1 \to \mathbb{K}_1$.
We now introduce the embedding  
$\pi^k:\ \mathbb{K} \to \mathcal{F}^k$ by
\begin{equation}\label{definition embedding}\begin{split}
&\pi^k\big(|0^{\otimes m}\rangle \big) := 1,\\ 
&\pi^k\big(|1\otimes 0^{\otimes (m-1)}\rangle \big) := \begin{bmatrix} \sqrt{\frac{2^k}{\tau}}\chi_{\left[0,\, \frac{\tau}{2^k}\right)} \\ 0 \\ \vdots \\ 0\end{bmatrix},\\
&\ \ \vdots\\
&\pi^k\big(|1^{\otimes m}\rangle \big) := \begin{bmatrix} \sqrt{\frac{2^k}{\tau}}\chi_{\left[0,\, \frac{\tau}{2^k}\right)}  \\ 0 \\ \vdots \\  0\end{bmatrix}\otimes\ldots\otimes
 \begin{bmatrix} 0  \\ \vdots \\ 0 \\ \sqrt{\frac{2^k}{\tau}}\chi_{\left[0,\, \frac{\tau}{2^k}\right)}\end{bmatrix}.
\end{split}\end{equation}
Note that with respect to the basis of equation  \eqref{definition embedding}
the embedded one-time-step interaction $R^k$ of equation \eqref{definition Rk}
is given by 
\begin{equation*}
R^k = I_0 \oplus S \oplus\bigoplus_{i =2}^m I_i,
\end{equation*}
Now, let $\alpha,\beta \in \mathbb{C}^m$ and let
\begin{equation*}
\psi^k := 1\oplus \begin{bmatrix} \alpha_1 \chi_{\left[0,\, \frac{\tau}{2^k}\right)} \\  \vdots \\  \alpha_m\chi_{\left[0,\, \frac{\tau}{2^k}\right)} \end{bmatrix},\ \ 
\varphi^k := 1\oplus \begin{bmatrix} \beta_1\chi_{\left[0,\, \frac{\tau}{2^k}\right)} \\  \vdots \\  \beta_m\chi_{\left[0,\, \frac{\tau}{2^k}\right)} \end{bmatrix}, 
\end{equation*}
then $\psi^k, \varphi^k \in \mathcal{F}^k$ and due to Lemma \ref{lemma converg ex}
we have
\begin{equation*}
(\psi^{k})^{\otimes 2^k}\xrightarrow{k\to\infty}
	e(\alpha \chi_{[0,\tau)}),\qquad
(\varphi^{k})^{\otimes 2^k}\xrightarrow{k\to\infty}
	e(\beta \chi_{[0,\tau)}).
\end{equation*}
We can now compute
\begin{equation*}\begin{split}
    \frac{2^k}{\tau}(R^{k;\psi^k\varphi^k}-I) = 
    & \frac{\frac{2^k}{\tau}+ \sum_{i,j=1}^m \alpha^*_iS_{ji}^*\beta_j     }  {\sqrt{(1+ \frac{\tau\sum_{i=1}^m|\alpha_i|^2}{2^k})(1+\frac{\tau\sum_{i=1}^m|\beta_i|^2}{2^k})}} -\frac{2^k}{\tau}\\
    &\xrightarrow{k\to\infty}\sum_{i,j=1}^m \alpha^*_jS_{ij}^*\beta_i - \frac{\sum_{i=1}^m \left(|\alpha_i|^2  + |\beta_i|^2\right)}{2},
\end{split}\end{equation*}
which we recognize, see Definition \ref{definition generator}, 
as the generator $\mathscr{L}^{\alpha \beta}$, associated 
to the triple $(S, 0, 0)$. Having checked all requirements for 
condition 1 of Theorem \ref{theorem BoutenvanHandel}, 
we can now conclude, by invoking the equivalence of condition 
1 and condition 3 of Theorem \ref{theorem BoutenvanHandel}, 
that (the embedding of) the repeated interaction of equation  \eqref{definition Rkl} with 
$M^k$ as in equation  \eqref{definition MS} converges strongly, uniformly on 
compact time intervals, to the solution $U_t$ of the 
following QSDE
\begin{equation}\label{equation scattering}
dU_t = \left\{\sum_{i,j=1}^m\big(S_{ij}-\delta_{ij}\big)d\Lambda^{ij}_t\right\}U_t, \qquad U_0= I.
\end{equation}
Note that since the interaction is non-trivial only 
on the single-photon space, we could also have 
used the result in \cite{AtP06} to prove the 
convergence. 

We will now look in more detail into the case where 
we have two channels: one in which the photons move
from right to left (labeled by $1$)  and one in which the 
photons move from left to right (labeled by $2$). 
Note that the $2\times 2$ matrix $S$ is unitary, i.e.\
in general it will be of the form:
\begin{equation*}
S = e^{i\gamma}\begin{bmatrix}e^{i\delta} \cos(\theta) & e^{-i\lambda}\sin(\theta) \\ -e^{i\lambda}\sin(\theta)& e^{-i\delta} \cos(\theta)\end{bmatrix},
\end{equation*}
for some real angles $\gamma, \delta,\lambda$ and $\theta$. 
We will set the global angle $\gamma$ equal to zero. We will now 
look at a perfect mirror: photons from channel $1$ arrive at the 
mirror and bounce off the mirror, continuing in channel $2$ while 
picking up a phase $\pi$. It is clear that this means $\theta =  \pi\slash 2$, 
$\delta$ is irrelevant, and $\lambda= 0$. Therefore, the scattering matrix
$S_{\text{mirror}}$ for a one-sided perfect mirror is given by
\begin{equation}\label{eqn S mirror}
S_{\mbox{mirror}} = \begin{bmatrix}0 & 1 \\ -1 & 0\end{bmatrix}.
\end{equation}
We can allow for some driving (from the left via channel $2$) via this mirror 
by taking
\begin{equation}\label{eqn S drivable mirror}
S_{\mbox{drivable mirror}} = \begin{bmatrix}\cos(\theta) & \sin(\theta) \\ -\sin(\theta) & \cos(\theta)\end{bmatrix}.
\end{equation}
We have chosen $\gamma =0$ for simplicity: in the 
next section we will build a cavity using two mirrors. We want the photons bouncing 
off the mirrors from the inside of the cavity to pick up a $\pi$ phase shift. The unitarity of $S$ then 
forces the driving photons that bounce off the mirrors from the other side not to pick up 
any phase shift.
\end{example}

\section{Examples: networks}\label{section examples networks}

\begin{example}\label{example Eschner}{\bf Light interference from a single ion and its mirror image \cite{ERSB01, BVS19}\ }
We consider the experiment described in \cite{ERSB01}. A single Barium ion $Ba^+$ sits 
in a Paul trap. The ion can be described as a two level atom: the ground state 
corresponds to the $S_{1/2}$-state and the excited state corresponds to the 
$P_{1/2}$-state. In the experiment \cite{ERSB01} the $D_{3/2}$ state and the 
transition $P_{1/2}\leftrightarrow D_{3/2}$ is used to reveal the 
population in the $P_{1/2}$ state. For simplicity, we will omit this part of the 
experiment from our simulated description. This means that our initial system 
$\mathcal{H} = \mathbb{H} = \mathbb{C}^2$ is a two level system, 
represented by a single qubit in a quantum computer. 

The Barium ion is driven from the side on its $S_{1/2}\leftrightarrow P_{1/2}$
transition and emits light (via focussing optics) towards a photo detector 
to the right of the ion and a mirror towards the left of the ion.
The light reflects from the mirror establishing a second light path towards 
the photo detector. The mirror position can be actuated  by a
piezoelectric stage, varying the difference in optical path lengths 
for the two light paths revealing an interference pattern at the photo detector \cite{ERSB01}.
Furthermore, the experiment showed the effect of the mirror position on 
the emission rate of the Barium ion: the $P_{1/2}$ state occupation probability 
was shown (via the $P_{1/2}\leftrightarrow D_{3/2}$ transition) to 
anti-correlate with the photon count rate at the detector \cite{ERSB01}. 

The experiment was simulated on a classical simulator of a digital 
quantum computer in \cite{BVS19}. There, the  model 
of example \ref{example 3 channels} was used to describe the 
Barium ion. Channel $1$ represents the photons moving from right 
to left, channel $2$ represents the photons moving from left to 
right and channel $3$ is perpendicular to channel $1$ and $2$ and 
represented the channel with which the Barium ion was driven by a 
laser. 

The time evolution of the Barium ion is given by 
\begin{equation*}\begin{split}
dU^{\text{Ba}}_t = \Bigg\{\sqrt{\kappa}&\sigma_-dA^{1*}_t -\sqrt{\kappa}\sigma_+dA^1_t +  
\sqrt{\kappa}\sigma_-dA^{2*}_t - \sqrt{\kappa}\sigma_+ dA^2_t  \\
&+ \sqrt{\kappa_3}\sigma_-dA^{3*}_t -\sqrt{\kappa_3}\sigma_+dA^3_t  
- \left(\kappa+\frac{\kappa_3}{2} - i\omega\right)\sigma_+\sigma_-dt \Bigg\}U^{\text{Ba}}_t, \qquad U^{\text{Ba}}_0 = I.
\end{split}\end{equation*}
That is, we are in the situation of Example \ref{example 3 channels} 
where the $SLH$-triple is given by
\begin{equation*}
 \left(I,\, \begin{bmatrix}L_1 \\ L_2\\ L_3\end{bmatrix},\, H\right) =  \left(I,\, \begin{bmatrix}\sqrt{\kappa}\sigma_- \\ \sqrt{\kappa}\sigma_-\\ \sqrt{\kappa_3}\sigma_-\end{bmatrix},\, \omega\sigma_+\sigma_-\right).
\end{equation*}
Here $\kappa$ is the \emph{decay rate} of the ion 
into channel $1$ and also the decay rate into channel $2$, $\kappa_3$ 
is the \emph{decay rate} into channel $3$ and $\omega$ is the \emph{transition 
frequency} of the transition  $S_{1/2}\leftrightarrow P_{1/2}$.

In the quantum computer the one-time step evolution of the Barium ion is given 
by the unitary of equation  \eqref{definition M ex2} on the space $\mathbb{H}_{\text{Ba}} \otimes \mathbb{K}_{\text{Ba}}$:
\begin{equation}\label{equation MkBa}
M^k_{\text{Ba} } = e^{\sqrt{\frac{\tau\kappa}{2^k}}\left(\sigma_-\otimes \sigma^1_+ -\sigma_+\otimes \sigma^1_-\right)}
           e^{\sqrt{\frac{\tau\kappa}{2^k}}\left(\sigma_-\otimes \sigma^2_+ - \sigma_+\otimes \sigma^2_-\right)}
          e^{\sqrt{\frac{\tau\kappa_3}{2^k}}\left(\sigma_-\otimes \sigma^3_+ - \sigma_+\otimes \sigma^3_-\right)}
          e^{\left( -i \frac{\tau\omega}{2^k} \sigma_+\sigma_- \otimes   I^{\otimes 3}\right)}
\end{equation}

The mirror couples two channels in the field. One channel 
consists of photons traveling from right to left (channel $4$) and the 
other consists of photons traveling from left to right (channel $5$) 
The time evolution of the mirror is given by
\begin{equation*}
dU^{\text{mirror}}_t = \left(d\Lambda^{45}_ t- d\Lambda^{54}_t -d\Lambda^{44}_t - d\Lambda^{55}_t\right)U^{\text{mirror}}_t, \qquad U^{\text{mirror}}_0 = I.
\end{equation*}
That is, we are in the situation of Example \ref{example scattering} with $SLH$-triple
\begin{equation*}
\left(\begin{bmatrix}0 & 1 \\ -1 & 0\end{bmatrix}, 0, 0\right).
\end{equation*}

In the quantum computer the one-time step evolution of the mirror is given by 
the unitary of equation  \eqref{definition MS} on the space $\mathbb{K}^{\text{mirror}} = \mathbb{C}^2\otimes \mathbb{C}^2$:
\begin{equation}\label{equation Mkmirror}
M^k_{\text{mirror}} = I_{\mathbb{K}^{\text{mirror}}_0} \oplus \begin{bmatrix} 0 & 1 \\ -1 & 0\end{bmatrix} \oplus I_{\mathbb{K}^{\text{mirror}}_2},
\end{equation}

We can describe the Barium ion and the mirror within one model 
(without making any connections between inputs and outputs yet) 
by the $SLH$-triple
\begin{equation}\label{equation Example4SLH}
\left(\begin{bmatrix}          1 & 0 & 0 & 0 & 0\\
                                       0 & 1 & 0 & 0 & 0\\
                                       0 & 0 & 1 & 0 & 0\\ 
                                       0 & 0 & 0 & 0 & 1 \\ 
                                       0 & 0 & 0 & -1 & 0\end{bmatrix},\ \begin{bmatrix}\sqrt{\kappa}\sigma_- \\ \sqrt{\kappa}\sigma_-\\ \sqrt{\kappa_3}\sigma_- \\ 0 \\0 \end{bmatrix},\, \omega\sigma_+\sigma_-\right).
\end{equation}
This means that the time evolution of the ion and mirror together (with no connections yet, i.e.\ $\mathcal{C} = \emptyset$) 
is given by 
\begin{equation}\label{equation Example4U}\begin{split}
dU_t = \Bigg\{d\Lambda^{45}_ t- d\Lambda^{54}_t& -d\Lambda^{44}_t - d\Lambda^{55}_t+ \sqrt{\kappa}\sigma_-dA^{1*}_t -\sqrt{\kappa}\sigma_+dA^1_t +  
\sqrt{\kappa}\sigma_-dA^{2*}_t - \sqrt{\kappa}\sigma_+ dA^2_t  \\
&+ \sqrt{\kappa_3}\sigma_-dA^{3*}_t -\sqrt{\kappa_3}\sigma_+dA^3_t  
- \left(\kappa+\frac{\kappa_3}{2} - i\omega\right)\sigma_+\sigma_-dt \Bigg\}U_t, \qquad U_0 = I.
\end{split}\end{equation}
We now introduce the following set of connections (Note that Assumption \ref{assumption multiples of tau} is satisfied):
\begin{equation}\label{equation Example4ConnectionSet}
\mathcal{C}=  \Big\{(4,\, 1,\, n\tau),\  (2,\, 5,\, n\tau) \Big\}, \qquad \qquad \mbox{for some}\ n \in \mathbb{N}\backslash \{0\}.
\end{equation} 
This means that the photons going from right to left, after having interacted with the Barium ion, are being fed into the 
input channel $4$ of the mirror. That is, instead of disappearing to infinity on the left, the photons move towards the mirror where 
they reflect into output channel $5$ picking up a $\pi$-phase
flip. Instead of moving straight away to infinity on the right, they are then 
subsequently fed into channel $2$. That is, they continue moving from left to right, but they are now moving 
towards the Barium ion where they will interact with the ion once more before disappearing to infinity on the right.

The time evolution $\hat{\mathcal{U}}_t$ of the network is defined in equation \eqref{definition CalUhat} 
with $\mathcal{U}_t$ given by Definition \ref{definition time evolution network}. Note that in 
Definition \ref{definition time evolution network} the cocycle $U_t$ is given by equation \eqref{equation Example4U} 
or, equivalently, by the $SLH$-triple of equation \eqref{equation Example4SLH}. The set of 
connections $\mathcal{C}$ given in equation \eqref{equation Example4ConnectionSet} determines 
the braid map $B_\sigma$ in Definition \ref{definition time evolution network}. 

Channels $1,2,4$ and $5$ are 
at $t=0$ in the vacuum state $\Phi: = e(0)$, see equation \eqref{def exponential vector}. At $t=0$, channel 
$3$ is in a coherent state $\psi(f)$, where 
\begin{equation}\label{equation coherent state}
\psi(f) = \exp(-\tfrac{1}{2}\|f\|^2)e(f),\qquad
f(t) = A\exp(i\omega_{l} t)\chi_{[0,T]}(t), 
\end{equation}
representing a laser driving the system at 
frequency $\omega_l$ with amplitude $A$ over a time interval $[0,T]$. Note that 
the laser is resonant when $\omega_l = \omega$.

Let $\tau$ be the interval from Assumption \ref{assumption multiples of tau}. 
 For simplicity let us assume $T$ is an integer multiple of $\tau$. We then have
\begin{equation*}
\mathsf{N}_k = \frac{2^k(T+\xi_{\text{max}})}{\tau}.
\end{equation*} 
In the quantum computer, one qubit is used for the Barium ion $\mathbb{H} = \mathbb{C}^2$. 
One time step of the noise consists of $5$ qubits: $\mathbb{K} = (\mathbb{C}^2)^{\otimes 5}$.
There are $\mathsf{N}_k$ time steps, i.e.\ in the quantum computer we reserve $1+5*\mathsf{N}_k$ 
qubits to obtain the computational space $\mathbb{H}\otimes \mathbb{K}^{\mathsf{N}_k}$.
We define 
\begin{equation*}\begin{split}
& \pi_1^k:\ \mathbb{H} \to \mathcal{H} = \mathbb{H}: \ v \mapsto v,\\
&\pi_2^k:\ \mathbb{K} \to \mathcal{F}^k,\qquad \pi^k_2 := \pi^k_{2,\, \text{Ex.\ref{example 3 channels}}}\otimes \pi^k_{2,\, \text{Ex.\ref{example scattering}}}, 
\end{split}\end{equation*}  
where $\pi^k_{2,\, \text{Ex.\ref{example 3 channels}}}$ is the $\pi_2^k$ of Example \ref{example 3 channels} which acts 
non-trivially on the first three components of the tensor product $\mathbb{K}$ and  $\pi^k_{2,\, \text{Ex.\ref{example scattering}}}$ is the
$\pi^k$ of Example \ref{example scattering} which acts non-trivially on the last two components of the tensor product $\mathbb{K}$.

At the start of the computation, all qubits in $\mathbb{H}\otimes\mathbb{K}^{\mathsf{N}_k}$ 
are in the down state $|0>$. Using single qubit gates we can bring the qubit 
that represents the Barium ion in any state that we desire: the initial state of the 
Barium ion can be chosen freely. We leave all the qubits in channels $1,2,4$ and 
$5$ in the $|0>$ state. Let $f^1,f^2,\ldots \in L^2(\mathbb{R})$ be a sequence of simple functions 
that approximate $f$ (see equation \eqref{equation coherent state})
in the sense of Lemma \ref{lemma converg ex}, i.e.\ 
there exist $\alpha^k_1,\alpha^k_2,\dots,\alpha^k_{\mathsf{N}_k} \in \mathbb{C}$ such that
\begin{equation*}
f^k =  \sum_{l=1}^{\frac{2^kT}{\tau}} \alpha^k_l \chi_{\left[\frac{(l-1)\tau}{ 2^k},\, \frac{l\,\tau}{2^k}\right)},\qquad \left\| f- f^k\right\|_2  \xrightarrow{k\to\infty} 0.
\end{equation*}
Using single qubit gates we bring the $l^{th}$ qubit, for $1\le l\le \frac{2^kT}{\tau}$, in channel $3$ in the state 
\begin{equation*}
\psi^k_l =  \frac{|0> + \alpha^k_l\sqrt{\frac{\tau}{2^k}}|1>}{\sqrt{1+|\alpha^k_l|^2\frac{\tau}{2^k} }}.
\end{equation*}
All the other qubits in channel $3$ are left in the $|0>$ state, i.e. $\alpha^k_l = 0$ for $\frac{2^kT}{\tau} < l \le \mathsf{N}_k$. Lemma \ref{lemma converg ex} yields
\begin{equation*}\begin{split}
&(\pi^k_2)^{\otimes \mathsf{N}_k}\left(
\bigotimes_{l=1}^{\mathsf{N_k}}
|00>\otimes\left(\frac{|0> + \alpha^k_l\sqrt{\frac{\tau}{2^k}}|1>}{\sqrt{1+|\alpha^k_l|^2\frac{\tau}{2^k} }}\right)\otimes|00>
\right) = { }
\bigotimes_{l=1}^{\mathsf{N_k}}
\frac{1 \oplus
 \alpha^k_l{\tiny\begin{bmatrix}
0\\
0\\
\chi_{\left[\frac{(l-1)\tau}{2^k},\, \frac{l\tau}{2^k}\right)}\\
0\\
0
\end{bmatrix}}
}{\sqrt{1+|\alpha^k_l|^2\frac{\tau}{2^k}   }}\\
& \xrightarrow{k\to\infty} \exp(-\tfrac{1}{2}\|f\|^2)e\left(
\begin{bmatrix}
0 \\
0 \\
f \\
0\\
0
\end{bmatrix}\right)
= 1\otimes 1 \otimes \psi(f)\otimes 1\otimes 1,
\end{split}\end{equation*}
 where $\psi(f)$ is the coherent state from equation \eqref{equation coherent state}.

On the one time step quantum computer space $\mathbb{H}\otimes\mathbb{K}$
we define
\begin{equation*}
M^k := M^k_{\text{Ba}}\otimes M^k_{\text{mirror}},
\end{equation*}
where $M^k_{\text{Ba}}$ is given by equation \eqref{equation MkBa} and 
$M^k_{\text{mirror}}$ is given by equation \eqref{equation Mkmirror}.
$M^k_{\text{Ba}}$ acts on $\mathbb{H}$ and the first three copies of $\mathbb{C}^2$
in $\mathbb{K}$. $M^k_{\text{mirror}}$ acts on the last two copies of $\mathbb{C}^2$ 
in $\mathbb{K}$. Using Definition \ref{definition cal M} we obtain a 
unitary process $\mathcal{M}^k(l)$ for $0 \le l \le \lfloor\frac{2^k}{\tau}\rfloor$
on the quantum computer space $\mathbb{H}\otimes \mathbb{K}^{\otimes \mathsf{N}_k}$. 
The process $\mathcal{M}^k(l)$ is the digital twin of the network dynamics $\mathcal{U}_t$ given by 
equation \eqref{equation Example4U} and Definition \ref{definition time evolution network}. 
We will now show that $\mathcal{M}^k(l)$ indeed converges to $\mathcal{U}_t$.

Note that equation \eqref{definition Rk} defines $R_k$ on $\mathcal{H}\otimes\mathcal{K}^k \subset \mathcal{H} \otimes\mathcal{F}^k$.
Next, equation \eqref{definition Rkt} defines $R^k_t$ on $\mathcal{H}\otimes \mathcal{F}$ and then Definition  \ref{definition discrete evolution network}
defines the discrete network time evolution $\mathcal{R}_t^k$ on $\mathcal{H}\otimes \mathcal{F}$.
Lemma \ref{lemma CalRk} then summarizes what we have thus far: for all $0 \le t \le T$ and all 
$\psi \in \mathcal{H}\otimes \big( {\mathcal{K}^k}\big)^{\otimes \mathsf{N}_k}\subset \mathcal{H}\otimes \mathcal{F}_{[0,\, \mathsf{N}_k\tau 2^{-k}]}$
\begin{equation*}
    \mathcal{R}_t^k\psi = 
\bigg( \pi_1^k\otimes \big({\pi^k_2}\big)^{\otimes \mathsf{N}_k} \bigg)
\mathcal{M}^k\left(\left\lfloor \frac{t2^k}{\tau} \right\rfloor\right)
\bigg( \pi_1^{k}\otimes \big( {\pi^{k}_2}\big) ^{\otimes \mathsf{N}_k}\bigg)^\ast \, \psi.
\end{equation*}

Having proved convergence for the components of the network in 
Example \ref{example 3 channels} (Barium ion) and Example \ref{example scattering}
(mirror) using \cite[Thm 1]{BvH08},  
we conclude from Corollary \ref{corollary} that 
\begin{equation*}
	\lim_{k\to\infty}\sup_{0\le t\le T}
	\left\|\bigg( \pi_1^k\otimes \big({\pi^k_2}\big)^{\otimes \mathsf{N}_k} \bigg)
\mathcal{M}^k\left(\left\lfloor \frac{t2^k}{\tau} \right\rfloor\right)
\bigg( \pi_1^{k}\otimes \big( {\pi^{k}_2}\big) ^{\otimes \mathsf{N}_k}\bigg)^\ast \, \psi-\mathcal{U}_t\psi\right\|=0, \qquad \forall \psi \in \mathcal{H}\otimes\mathcal{F},
\end{equation*}
which indeed shows that the discrete process $\mathcal{M}^k(l),\ 0\le l \le \lfloor \frac{2^kT}{\tau}\rfloor$ running on the 
quantum computer converges to the network 
time evolution $\mathcal{U}_t,\ 0\le t\le T$ after embedding it in the Fock space.

We have not optimized our description with respect to 
qubit use. Every qubit in our description 
that is never acted on by $M^k$ at some moment in the 
computation can be eliminated from the description. This 
means that all channels that have their input connected to 
an output can be eliminated up to a length corresponding 
to the length of the connection in question, see Remark \ref{remark qubits needed}. 
Furthermore, 
due to re-initialization and recycling of qubits, 
the number of qubits used in the simulation on a 
classical simulator in \cite{BVS19} was
drastically reduced to only $6$ qubits. One 
for the Barium ion, one for the laser driving from 
the side, and four for the internal line. Nevertheless, 
the simulation reproduced the interference pattern 
and modified emission rate of the ion 
seen in the experiment \cite{ERSB01, BVS19}.
\end{example}

\begin{example}\label{example cavity QED}{\bf Cavity QED\ }
In this example we remain in the situation of Example \ref{example Eschner} 
but add an extra mirror to the description and we change the Barium 
ion for a generic two-level atom. The extra mirror is placed 
to the right of the two-level atom. The mirror interacts with 
two channels in the field representing photons moving from 
left to right (channel $6$) and photons moving from right 
to left (channel $7$). We allow the mirror to be driven, i.e.\ it 
is described by (see equations \eqref{equation scattering} and \eqref{eqn S drivable mirror}):
\begin{equation*}
dU^{\text{driv mir}}_t = \left(\sin(\theta)\Big(d\Lambda^{76}_ t-  d\Lambda^{67}_t\Big) + (\cos(\theta)-1)\Big( d\Lambda^{66}_t + d\Lambda^{77}_t\Big)\right)U^{\text{driv mir}}_t, \qquad  U^{\text{driv mir}}_0 = I.
\end{equation*}
This means that the entire system of two-level atom and 
mirrors together, with no connections implemented yet, is given by: 
\begin{equation}\label{equation Example5U}\begin{split}
dU_t = \Bigg\{d\Lambda^{45}_ t- d\Lambda^{54}_t& -d\Lambda^{44}_t - d\Lambda^{55}_t + 
\sin(\theta)\Big(d\Lambda^{76}_ t-  d\Lambda^{67}_t\Big) + (\cos(\theta)-1)\Big( d\Lambda^{66}_t + d\Lambda^{77}_t\Big)\\
&+ \sqrt{\kappa}\sigma_-dA^{1*}_t -\sqrt{\kappa}\sigma_+dA^1_t +  
\sqrt{\kappa}\sigma_-dA^{2*}_t - \sqrt{\kappa}\sigma_+ dA^2_t  \\
&+ \sqrt{\kappa_3}\sigma_-dA^{3*}_t -\sqrt{\kappa_3}\sigma_+dA^3_t  
- \left(\kappa+\frac{\kappa_3}{2} - i\omega\right)\sigma_+\sigma_-dt \Bigg\}U_t, \qquad U_0 = I.
\end{split}\end{equation}
Or, equivalently, it is given by the $SLH$-triple:
\begin{equation}\label{equation Example5SLH}
\left(\begin{bmatrix}          1 & 0 & 0 & 0 & 0 & 0 & 0\\
                                       0 & 1 & 0 & 0 & 0 & 0 & 0\\
                                       0 & 0 & 1 & 0 & 0 & 0 & 0\\ 
                                       0 & 0 & 0 & 0 & 1 & 0 & 0\\ 
                                       0 & 0 & 0 & -1 & 0 & 0 & 0 \\
                                       0 & 0 & 0 & 0 & 0 & \cos(\theta) & -\sin(\theta)\\ 
                                       0 & 0 & 0 & 0 & 0 & \sin(\theta) & \cos(\theta)\end{bmatrix},\ \begin{bmatrix}\sqrt{\kappa}\sigma_- \\ \sqrt{\kappa}\sigma_-\\ \sqrt{\kappa_3}\sigma_- \\ 0 \\0 \\0 \\0\end{bmatrix},\, \omega\sigma_+\sigma_-\right).
\end{equation}
Let $n,n_1$ and $n_2$ be elements of $\mathbb{N}\backslash \{0\}$ such that $n_1 + n_2 =n$. 
We let the connection set be given by
\begin{equation}\label{equation Example5ConnectionSet}
\mathcal{C}=  \Big\{(4,\, 1,\, n_1\tau),\  (2,\, 5,\, n_1\tau),\ (7,\, 2,\, n_2\tau),\ (1,\, 6,\, n_2\tau)  \Big\},
\end{equation}
which satisfies Assumption \ref{assumption multiples of tau}. 
The $SLH$-network given by the $SLH$-triple of equation \eqref{equation Example5SLH} and 
connection set of equation \eqref{equation Example5ConnectionSet} describes a cavity where 
light bounces between two mirrors. Situated inside the cavity there is a two-level atom that 
interacts with the light in the cavity. The position of the two-level atom in the cavity is variable 
and determined by how $n_1$ and $n_t$ add up to the fixed number $n$. 
The right mirror can be driven by e.g.\ a laser on channel $6$ and 
the two-level atom can be driven from the side (channel $3$) by e.g.\ a laser. 
The network that we have created, a two level atom in a cavity, is the archetypical 
example studied in cavity QED.

Let $\tau$ be the interval from Assumption \ref{assumption multiples of tau}. 
 For simplicity let us assume $T$ is an integer multiple of $\tau$. We then have
\begin{equation*}
\mathsf{N}_k = \frac{2^k(T+\xi_{\text{max}})}{\tau}.
\end{equation*} 
In the quantum computer, one qubit is used for the two-level atom $\mathbb{H} = \mathbb{C}^2$. 
One time step of the noise consists of $7$ qubits: $\mathbb{K} = (\mathbb{C}^2)^{\otimes 7}$.
There are $\mathsf{N}_k$ time steps, i.e.\ in the quantum computer we reserve $1+7*\mathsf{N}_k$ 
qubits to obtain the computational space $\mathbb{H}\otimes \mathbb{K}^{\mathsf{N}_k}$.
We define 
\begin{equation*}\begin{split}
& \pi_1^k:\ \mathbb{H} \to \mathcal{H} = \mathbb{H}: \ v \mapsto v,\\
&\pi_2^k:\ \mathbb{K} \to \mathcal{F}^k,\qquad \pi^k_2 := \pi^k_{2,\, \text{Ex.\ref{example 3 channels}}}
\otimes \pi^k_{2,\, \text{Ex.\ref{example scattering}}}\otimes \pi^k_{2,\, \text{Ex.\ref{example scattering}}}, 
\end{split}\end{equation*}  
where $\pi^k_{2,\, \text{Ex.\ref{example 3 channels}}}$ is the $\pi_2^k$ of Example \ref{example 3 channels} which acts 
non-trivially on the first three components of the tensor product $\mathbb{K}$ and  $\pi^k_{2,\, \text{Ex.\ref{example scattering}}}$ is the
$\pi^k$ of Example \ref{example scattering} which acts non-trivially on the fourth and fifth component of the tensor product $\mathbb{K}$ 
and the sixth and seventh component of the tensor product.

At the start of the computation, all qubits in $\mathbb{H}\otimes\mathbb{K}^{\mathsf{N}_k}$ 
are in the down state $|0>$. Using single qubit gates we can bring the qubit 
that represents the two level atom in any state that we desire: the initial state of the 
two level system can be chosen freely. We leave all the qubits in channels $1,2,3,4,5$ and 
$7$ in the $|0>$ state (no driving from the side, only on the drivable mirror). 
Let $f^1,f^2,\ldots \in L^2(\mathbb{R})$ be a sequence of simple functions 
that approximate $f$ (see equation \eqref{equation coherent state})
in the sense of Lemma \ref{lemma converg ex}, i.e.\ 
there exist $\alpha^k_1,\alpha^k_2,\dots,\alpha^k_{\mathsf{N}_k} \in \mathbb{C}$ such that
\begin{equation*}
f^k =  \sum_{l=1}^{\frac{2^kT}{\tau}} \alpha^k_l \chi_{\left[\frac{(l-1)\tau}{ 2^k},\, \frac{l\,\tau}{2^k}\right)},\qquad \left\| f- f^k\right\|_2  \xrightarrow{k\to\infty} 0.
\end{equation*}
Using single qubit gates we bring the $l^{th}$ qubit, for $1\le l\le \frac{2^kT}{\tau}$, in channel $6$ in the state 
\begin{equation*}
\psi^k_l =  \frac{|0> + \alpha^k_l\sqrt{\frac{\tau}{2^k}}|1>}{\sqrt{1+|\alpha^k_l|^2\frac{\tau}{2^k} }}.
\end{equation*}
All the other qubits in channel $6$ are left in the $|0>$ state, i.e. $\alpha^k_l = 0$ for $\frac{2^kT}{\tau} < l \le \mathsf{N}_k$. Lemma \ref{lemma converg ex} yields
\begin{equation*}\begin{split}
&(\pi^k_2)^{\otimes \mathsf{N}_k}\left(
\bigotimes_{l=1}^{\mathsf{N_k}}
|00000>\otimes\left(\frac{|0> + \alpha^k_l\sqrt{\frac{\tau}{2^k}}|1>}{\sqrt{1+|\alpha^k_l|^2\frac{\tau}{2^k} }}\right)\otimes|0>
\right) = { }
\bigotimes_{l=1}^{\mathsf{N_k}}
\frac{1 \oplus
 \alpha^k_l{\tiny\begin{bmatrix}
0\\
0\\
0\\
0\\
0\\
\chi_{\left[\frac{(l-1)\tau}{2^k},\, \frac{l\tau}{2^k}\right)}\\
0
\end{bmatrix}}
}{\sqrt{1+|\alpha^k_l|^2\frac{\tau}{2^k}   }}\\
& \xrightarrow{k\to\infty} \exp(-\tfrac{1}{2}\|f\|^2)e\left(
\begin{bmatrix}
0 \\
0 \\
0 \\
0 \\
0 \\
f \\
0
\end{bmatrix}\right)
= 1\otimes 1 \otimes 1\otimes 1\otimes 1\otimes \psi(f)\otimes 1,
\end{split}\end{equation*}
 where $\psi(f)$ is the coherent state from equation \eqref{equation coherent state}.

On the one time step quantum computer space $\mathbb{H}\otimes\mathbb{K} = 
\mathbb{H}\otimes\mathbb{K}^{\text{atom}}\otimes\mathbb{K}^{\text{mirror}}\otimes \mathbb{K}^{\text{driv mir}}$, 
where $\mathbb{K}^{\text{atom}} = \mathbb{C}^2\otimes\mathbb{C}^2\otimes  \mathbb{C}^2,\  
\mathbb{K}^{\text{mirror}} = \mathbb{C}^2\otimes\mathbb{C}^2$, and $\mathbb{K}^{\text{driv mir}}= \mathbb{C}^2\otimes\mathbb{C}^2$,
we define
\begin{equation*}
M^k := M^k_{\text{atom}}\otimes M^k_{\text{mirror}} \otimes M^k_{\text{drivable mirror}},
\end{equation*}
where $M^k_{\text{atom}}$ is given by equation \eqref{definition M ex2} and 
\begin{equation*}\begin{split}
&M^k_{\text{mirror}} = I_{\mathbb{K}^{\text{mirror}}_0} \oplus \begin{bmatrix} 0 & 1 \\ -1 & 0\end{bmatrix} \oplus I_{\mathbb{K}^{\text{mirror}}_2},\\
&M^k_{\text{drivable mirror}} = I_{\mathbb{K}^{\text{driv mir}}_0} \oplus 
\begin{bmatrix}\cos(\theta) & \sin(\theta) \\ -\sin(\theta) & \cos(\theta)\end{bmatrix} \oplus I_{\mathbb{K}^{\text{driv mir}}_2},
\end{split}\end{equation*}
$M^k_{\text{atom}}$ acts on $\mathbb{H}$ and the first three copies of $\mathbb{C}^2$
in $\mathbb{K}$. $M^k_{\text{mirror}}$ acts on the fourth and fifth copies of $\mathbb{C}^2$ 
in $\mathbb{K}$. $M^k_{\text{drivable mirror}}$ acts on the sixth and seventh copies of $\mathbb{C}^2$ 
in $\mathbb{K}$.
Using Definition \ref{definition cal M} we obtain a 
unitary process $\mathcal{M}^k(l)$ for $0 \le l \le \lfloor\frac{2^k}{\tau}\rfloor$
on the quantum computer space $\mathbb{H}\otimes \mathbb{K}^{\otimes \mathsf{N}_k}$. 
The process $\mathcal{M}^k(l)$ is the digital twin of the network dynamics $\mathcal{U}_t$ given by 
equation \eqref{equation Example5U} and Definition \ref{definition time evolution network}. 
We will now show that $\mathcal{M}^k(l)$ indeed converges to $\mathcal{U}_t$.

Note that equation \eqref{definition Rk} defines $R_k$ on $\mathcal{H}\otimes\mathcal{K}^k \subset \mathcal{H} \otimes\mathcal{F}^k$.
Next, equation \eqref{definition Rkt} defines $R^k_t$ on $\mathcal{H}\otimes \mathcal{F}$ and then Definition  \ref{definition discrete evolution network}
defines the discrete network time evolution $\mathcal{R}_t^k$ on $\mathcal{H}\otimes \mathcal{F}$.
Lemma \ref{lemma CalRk} then summarizes what we have thus far: for all $0 \le t \le T$ and all 
$\psi \in \mathcal{H}\otimes \big( {\mathcal{K}^k}\big)^{\otimes \mathsf{N}_k}\subset \mathcal{H}\otimes \mathcal{F}_{[0,\, \mathsf{N}_k\tau 2^{-k}]}$
\begin{equation*}
    \mathcal{R}_t^k\psi = 
\bigg( \pi_1^k\otimes \big({\pi^k_2}\big)^{\otimes \mathsf{N}_k} \bigg)
\mathcal{M}^k\left(\left\lfloor \frac{t2^k}{\tau} \right\rfloor\right)
\bigg( \pi_1^{k}\otimes \big( {\pi^{k}_2}\big) ^{\otimes \mathsf{N}_k}\bigg)^\ast \, \psi.
\end{equation*}

Having proved convergence for the components of the network in 
Example \ref{example 3 channels} (two-level atom) and Example \ref{example scattering}
(mirror and drivable mirror) using \cite[Thm 1]{BvH08},  
we conclude from Corollary \ref{corollary} that 
\begin{equation*}
	\lim_{k\to\infty}\sup_{0\le t\le T}
	\left\|\bigg( \pi_1^k\otimes \big({\pi^k_2}\big)^{\otimes \mathsf{N}_k} \bigg)
\mathcal{M}^k\left(\left\lfloor \frac{t2^k}{\tau} \right\rfloor\right)
\bigg( \pi_1^{k}\otimes \big( {\pi^{k}_2}\big) ^{\otimes \mathsf{N}_k}\bigg)^\ast \, \psi-\mathcal{U}_t\psi\right\|=0, \qquad \forall \psi \in \mathcal{H}\otimes\mathcal{F},
\end{equation*}
which indeed shows that the discrete process $\mathcal{M}^k(l),\ 0\le l \le \lfloor \frac{2^kT}{\tau}\rfloor$ running on the 
quantum computer converges to the network 
time evolution $\mathcal{U}_t,\ 0\le t\le T$ after embedding it in the Fock space.

We could again use similar methods as in Remark \ref{remark qubits needed} \cite{BVS19} to 
reduce the number of qubits needed: e.g. one for the atom, one for the driving on the drivable 
mirror (measuring and re-initializing the qubit in each time step)
and e.g.\ $16$ or $32$ qubits for the loop representing the cavity. This would be feasible on a classical 
simulator of a digital quantum computer.  One could e.g.\ investigate exciting the lowest resonance modes 
of the cavity by changing the frequency of the driving on the mirror, the dependency of the 
atom field coupling $g$ on the position of the atom in the cavity or the Lorentzian response around 
a cavity resonance. It would be even more interesting to see this on real quantum 
hardware.
\end{example}

\section{Proof of the main result: Theorem \ref{theorem main result}}\label{section proofs}

The proof of our main result, i.e.\ Theorem \ref{theorem main result}, 
depends heavily on Theorem \ref{theorem BoutenvanHandel} \cite[Thm 1]{BvH08}, 
which itself depends heavily on a result due to 
Thomas Kurtz \cite[Thm 2.13]{Kur69}, see also \cite[Ch.\,1, Thm 6.5]{KuE86}.
In the proof of Theorem \ref{theorem main result} we will also 
need to rely directly on the result \cite[Ch.\,1, Thm 6.5]{KuE86} (see 
also \cite{Gol76}) which we reproduce here in 
a form convenient to us, i.e.\ specialized to our particular 
use case.

\begin{theorem}{\bf \cite[Ch.\,1, Thm 6.5]{KuE86}}
\label{theorem Kurtz}
Let $\mathcal{H}$ be a fixed Hilbert space.  For $k\in\mathbb{N}$, let 
$T^k$ be a linear contraction on $\mathcal{H}$ and let $\varepsilon_k$ be a positive 
number such that $\lim_{k\to \infty} \varepsilon_k = 0$. Let $T_t$ be a strongly 
continuous contraction semigroup on $\mathcal{H}$ with generator
$\mathscr{L}$.  Let $\mathcal{D}$ be a core for $\mathscr{L}$.  The 
following conditions are equivalent:
\begin{enumerate}
\item For every $u\in\mathcal{D}$, there exist $u^k\in\mathcal{H}$ such
that
$$
	u^k\xrightarrow{k\to\infty}u,\qquad
	\varepsilon_k^{-1}(T^k-I)u^k\xrightarrow{k\to\infty}\mathscr{L}u.
$$
\item For every $\psi\in\mathcal{H}$ and $t<\infty$
$$
	\lim_{k\to\infty}
	\|(T^k)^{\lfloor t\slash\varepsilon_k\rfloor}\psi-T_t\psi\|=0.
$$
\item For every $\psi\in\mathcal{H}$ and $t<\infty$
$$
	\lim_{k\to\infty}\sup_{s\le t}
	\|(T^k)^{\lfloor s\slash\varepsilon_k\rfloor}\psi-T_s\psi\|=0.
$$
\end{enumerate}
\end{theorem}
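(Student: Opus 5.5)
The plan is to prove the cycle $1\Rightarrow 3\Rightarrow 2\Rightarrow 1$. The implication $3\Rightarrow 2$ is trivial. The two nontrivial directions go through the bounded "discrete generators" $A_k:=\varepsilon_k^{-1}(T^k-I)$ and through resolvents.

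\emph{Step $1\Rightarrow 3$.} Since $T^k$ is a contraction, $e^{tA_k}=e^{-t/\varepsilon_k}e^{(t/\varepsilon_k)T^k}$ is a contraction semigroup with bounded generator $A_k$.

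First I compare $(T^k)^n$ with $e^{n(T^k-I)}$ using Chernoff's inequality
\begin{equation*}
\|(T^k)^n u-e^{n(T^k-I)}u\|\le \sqrt{n}\,\|(T^k-I)u\|.
\end{equation*}
This follows by writing $e^{n(T-I)}-T^n=e^{-n}\sum_j \tfrac{n^j}{j!}(T^j-T^n)$ and then applying $\|T^ju-T^nu\|\le |j-n|\,\|(T-I)u\|$ together with Cauchy--Schwarz for the Poisson distribution. With $n=\lfloor s/\varepsilon_k\rfloor$ and $u=u^k$, the right-hand side is at most $\sqrt{s\varepsilon_k}\,\|A_ku^k\|$. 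This tends to $0$ uniformly in $s\le t$, because $A_ku^k\to\mathscr{L}u$ is bounded.

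Replacing $n\varepsilon_k$ by $s$ in the exponent costs at most $\varepsilon_k\|A_ku^k\|\to 0$.

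Next I show $e^{sA_k}\psi\to T_s\psi$ uniformly on $[0,t]$, which is a Trotter--Kato argument. For $\lambda>0$ we have $\|(\lambda-A_k)^{-1}\|\le\lambda^{-1}$. For $v=(\lambda-\mathscr{L})u$ with $u\in\mathcal{D}$,
\begin{equation*}
(\lambda-A_k)^{-1}v-(\lambda-\mathscr{L})^{-1}v=(\lambda-A_k)^{-1}\big[(\lambda-\mathscr{L})u-(\lambda-A_k)u^k\big]+(u^k-u)\to 0.
\end{equation*}
Since $\mathcal{D}$ is a core, $(\lambda-\mathscr{L})\mathcal{D}$ is dense, so equicontinuity gives strong resolvent convergence on all of $\mathcal{H}$. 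The Trotter--Kato theorem then gives convergence of the semigroups uniformly on compacts.

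Finally, combining the three estimates gives condition 3 for $\psi=u\in\mathcal{D}$. The approximations are replaced by $u^k$ at the cost of $\|u^k-u\|$, which is harmless because all operators are contractions. Density of $\mathcal{D}$ and another use of contractivity then extend condition 3 to all $\psi$. The step I expect to be the main obstacle is the Trotter--Kato step, specifically getting uniformity on compact time intervals. If citing Trotter--Kato is not allowed, I would prove it directly via the Duhamel formula on $\mathcal{D}$: $e^{sA_k}u-T_su=\int_0^s e^{(s-r)A_k}(A_k-\mathscr{L})T_ru\,dr$ after a suitable approximation.

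\emph{Step $2\Rightarrow 1$.} Fix $\lambda>0$ and $\psi\in\mathcal{H}$, and use discrete Laplace transforms. Set
\begin{equation*}
u^k:=\varepsilon_k\sum_{n\ge0}e^{-\lambda n\varepsilon_k}(T^k)^n\psi=\varepsilon_k\big(I-e^{-\lambda\varepsilon_k}T^k\big)^{-1}\psi .
\end{equation*}
This equals $\int_0^\infty e^{-\lambda\varepsilon_k\lfloor s/\varepsilon_k\rfloor}(T^k)^{\lfloor s/\varepsilon_k\rfloor}\psi\,ds$. By condition 2 and dominated convergence with the bound $e^{-\lambda(s-\varepsilon_k)}\|\psi\|$, it converges to $\int_0^\infty e^{-\lambda s}T_s\psi\,ds=(\lambda-\mathscr{L})^{-1}\psi=:u$.

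The defining relation gives $T^ku^k=e^{\lambda\varepsilon_k}(u^k-\varepsilon_k\psi)$. Hence
\begin{equation*}
A_ku^k=\frac{e^{\lambda\varepsilon_k}-1}{\varepsilon_k}u^k-e^{\lambda\varepsilon_k}\psi\to\lambda u-\psi=\mathscr{L}u .
\end{equation*}
As $\psi$ ranges over $\mathcal{H}$, $u$ ranges over all of $\mathrm{Dom}(\mathscr{L})\supset\mathcal{D}$, which yields condition 1.
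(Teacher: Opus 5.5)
Your proof is correct and takes the standard route. The paper gives no proof of its own; it imports the result from Ethier--Kurtz. There, as in your argument, $1\Rightarrow 3$ comes from Chernoff's estimate $\|T^n u-e^{n(T-I)}u\|\le\sqrt{n}\,\|(T-I)u\|$ together with the Trotter--Kato theorem for the bounded generators $\varepsilon_k^{-1}(T^k-I)$, and your discrete Laplace-transform argument for $2\Rightarrow 1$ is a resolvent argument of the usual kind.

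The only part I would not rely on is your fallback Duhamel sketch. Condition 1 supplies approximating vectors $u^k$, not $\varepsilon_k^{-1}(T^k-I)T_ru\to\mathscr{L}T_ru$, so that route would need the resolvent-sandwiched identity used in the usual proof of Trotter--Kato. Your main line cites Trotter--Kato directly, so the fallback is not needed.
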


{\bf Proof of Theorem \ref{theorem main result}}\\
$1\Rightarrow 2:$ Let $\varphi,\varphi_k \in \mathcal{H}\otimes\mathcal{F}$ be such that 
$\varphi_k \xrightarrow{k\to\infty} \varphi$. Note that then
\begin{equation}\begin{split}\label{eq convergence Rtkphik}
\|R_t^{k}\varphi_k -U_t\varphi\|{ } &= \|R_t^k\varphi_k  -U_t\varphi  + R_t^k\varphi -R_t^k\varphi\| \\
                                    &\le \|R_t^k(\varphi_k-\varphi)\| + \|(R_t^k-U_t)\varphi\| =  \|\varphi_k-\varphi\| + \|(R_t^k-U_t)\varphi\| \xrightarrow{k\to\infty} 0,
\end{split}\end{equation}
where we have used in the last step that condition 1 combined with Theorem \ref{theorem BoutenvanHandel}
implies that $\|(R_t^k-U_t)\varphi\| \xrightarrow{k\to\infty} 0$. 

Let $\psi \in \mathcal{H}\otimes\mathcal{F}$. If we define $\varphi_k = \Theta_\sigma B_\sigma R_\sigma^k\psi$ and 
$\varphi = \Theta_\sigma B_\sigma U_\sigma\psi$, then 
due to the unitarity of $\Theta_\sigma$ and $B_\sigma$ and due to condition 1 combined with Theorem \ref{theorem BoutenvanHandel}, 
we find $\varphi_k  \xrightarrow{k\to\infty} \varphi$. equation  \eqref{eq convergence Rtkphik} then yields 
that $R_\sigma^k\varphi_k \xrightarrow{k\to\infty} U_\sigma\phi$. We can now redefine 
$\varphi_k =  \Theta_\sigma B_\sigma R_\sigma^k\Theta_\sigma B_\sigma R_\sigma^k\psi$ and $\varphi = \Theta_\sigma B_\sigma U_\sigma\Theta_\sigma B_\sigma U_\sigma\psi$
and we immediately have $\varphi_k \xrightarrow{k\to\infty} \varphi$. Using equation  \eqref{eq convergence Rtkphik} as before, 
we can then conclude $R_\sigma^k\varphi_k \xrightarrow{k\to\infty} U_\sigma\varphi$. By iterating this 
procedure sufficiently many times, we find that 
\begin{equation*}
	\lim_{k\to\infty}
	\|\mathcal{R}_t^{k}\psi-\mathcal{U}_t\psi\|=0,\qquad \forall \psi \in\mathcal{H}\otimes\mathcal{F}.
\end{equation*}

$2\Rightarrow 1:$
Note that it follows from the definition of $R_t^k$ in equation  \eqref{definition Rkt} and the 
co-cycle property of $U_t$ equation  \eqref{cocycle Ut} that
\begin{equation*}\begin{split}
&R_t^k =  \Theta_{\lfloor \frac{t}{\sigma }\rfloor \sigma }^* R^k_{t-\lfloor\frac{t}{\sigma }\rfloor \sigma }\Theta_{\lfloor \frac{t}{\sigma }\rfloor \sigma }
                     \prod_{i=1}^{\lfloor \frac{t}{\sigma }\rfloor} \Theta_{(i-1)\sigma }^*R^k_\sigma  \Theta_{(i-1)\sigma },\\
&U_t = \Theta_{\lfloor \frac{t}{\sigma }\rfloor \sigma }^* U_{t-\lfloor\frac{t}{\sigma }\rfloor \sigma }\Theta_{\lfloor \frac{t}{\sigma }\rfloor \sigma }
                     \prod_{i=1}^{\lfloor \frac{t}{\sigma }\rfloor} \Theta_{(i-1)\sigma }^*U_\sigma  \Theta_{(i-1)\sigma },\\
\end{split}\end{equation*}
for all $0< \sigma \le \xi_{\text{min}}$ such that $\sigma $ is an integral multiple of $\tau 2^{-k}$. We will take 
$\sigma  = \tau$ in the following. Since for all $0\le s\le \tau :$ $R_s^k = B_s^* \mathcal{R}^k_s$ due to Definition 
\ref{definition discrete evolution network} and $U_s = B_s^* \mathcal{U}_s$ due to 
Definition \ref{definition time evolution network}, we find
\begin{equation}\begin{split}\label{eq expressions Rtk U}
&R_t^k =  \Theta_{\lfloor \frac{t}{\tau }\rfloor \tau }^* B^*_{t-\lfloor\frac{t}{\tau }\rfloor\tau }\mathcal{R}^k_{t-\lfloor\frac{t}{\tau }\rfloor \tau }\Theta_{\lfloor \frac{t}{\tau }\rfloor \tau }
                     \prod_{i=1}^{\lfloor \frac{t}{\tau }\rfloor} \Theta_{(i-1)\tau }^*B_{\tau }^*\mathcal{R}^k_{\tau } \Theta_{(i-1)\tau },\\
&U_t = \Theta_{\lfloor \frac{t}{\tau }\rfloor \tau }^* B_{t-\lfloor\frac{t}{\tau }\rfloor \tau }^*\mathcal{U}_{t-\lfloor\frac{t}{\tau }\rfloor \tau }\Theta_{\lfloor \frac{t}{\tau }\rfloor \tau }
                     \prod_{i=1}^{\lfloor \frac{t}{\tau }\rfloor} \Theta_{(i-1)\tau }^*B_{\tau }^*\mathcal{U}_{\tau } \Theta_{(i-1)\tau }.\\
\end{split}\end{equation}

Now let $\varphi,\varphi_k \in \mathcal{H}\otimes\mathcal{F}$ be such that 
$\varphi_k \xrightarrow{k\to\infty} \varphi$. Note that then
\begin{equation}\begin{split}\label{eq convergence mathcalRtkphik}
\|\mathcal{R}_t^{k}\varphi_k -\mathcal{U}_t\varphi\|{ } &= 
                                         \|\mathcal{R}_t^k\varphi_k  -\mathcal{U}_t\varphi  + \mathcal{R}_t^k\varphi -\mathcal{R}_t^k\varphi\| \\
                                    &\le \|\mathcal{R}_t^k(\varphi_k-\varphi)\| + \|(\mathcal{R}_t^k-\mathcal{U}_t)\varphi\| 
                                            =  \|\varphi_k-\varphi\| + \|(\mathcal{R}_t^k-\mathcal{U}_t)\varphi\| \xrightarrow{k\to\infty} 0,
\end{split}\end{equation}
where in the last step we have used condition 2.

Let $\psi \in \mathcal{H}\otimes\mathcal{F}$. If we define $\varphi_k = \Theta_{\tau  } B_{\tau }^*\mathcal{R}_{\tau }^k\psi$ and 
$\varphi = \Theta_{\tau } B_{\tau }^*\mathcal{U}_{\tau }\psi$, then 
due to the unitarity of $\Theta_{\tau }$ and $B_{\tau }^*$ and due to the assumed condition 2, 
we find $\varphi_k  \xrightarrow{k\to\infty} \varphi$. Equation \eqref{eq convergence mathcalRtkphik} then yields 
that $\mathcal{R}_{\tau}^k\varphi_k \xrightarrow{k\to\infty} \mathcal{U}_{\tau }\phi$. We can now redefine 
$\varphi_k =  \Theta_{\tau } B_{\tau }^*\mathcal{R}_{\tau }^k\Theta_{\tau  }B_{\tau }^*\mathcal{R}_{\tau }^k\psi$ 
and $\varphi = \Theta_{\tau } B_{\tau }^* \mathcal{U}_{\tau }\Theta_{\tau } B_{\tau }^* \mathcal{U}_{\tau } \psi$
and we immediately have $\varphi_k \xrightarrow{k\to\infty} \varphi$. Using equation \eqref{eq convergence mathcalRtkphik} as before, 
we can then conclude $\mathcal{R}_{\tau }^k\varphi_k \xrightarrow{k\to\infty} \mathcal{U}_{\tau } \varphi$. By iterating this 
procedure sufficiently many times, we find using the expressions in equation  \eqref{eq expressions Rtk U} 
\begin{equation*}
	\lim_{k\to\infty}
	\|R_t^{k}\psi- U_t\psi\|=0,\qquad \forall \psi \in\mathcal{H}\otimes\mathcal{F}.
\end{equation*}
The implication $2\Rightarrow 1$ now follows from the implication $2 \Rightarrow 1$ of Theorem \ref{theorem BoutenvanHandel}.

$2 \Rightarrow 3:$
We define the following 
unitary operator on $\mathcal{H}\otimes \mathcal{F}$
\begin{equation*}
\hat{\mathcal{R}}^k = \Theta_{\tfrac{\tau}{2^k}}B_{\tfrac{\tau}{2^k}}R^k, 
\end{equation*}
where $R^k$ is the unitary of Definition \ref{definition discrete evolution}. Note 
that 
\begin{equation*}
(\hat{\mathcal{R}}^k)^{\left\lfloor \tfrac{t 2^k}{\tau }\right\rfloor} = 
\Theta_{\left\lfloor \tfrac{t 2^k}{\tau }\right\rfloor \tfrac{\tau }{2^k}}\mathcal{R}_{\left\lfloor \tfrac{t 2^k}{\tau }\right\rfloor \tfrac{\tau }{2^k}}^k.
\end{equation*}
This means that for any $t$ of the form $t= r\tau $ with a 
dyadic rational $r = \tfrac{p}{2^q}$ with $p,q\in\mathbb{N}$, we have 
that for sufficiently large $k$
\begin{equation*}
\|\mathcal{R}_t^k\psi - \mathcal{U}_t\psi\| = \|\Theta_t \mathcal{R}_t^k\psi - \Theta_t\mathcal{U}_t\psi\| = \|(\hat{\mathcal{R}}^k)^{\lfloor r2^k\rfloor}\psi - \hat{\mathcal{U}}_t\psi\|, \qquad \forall \psi \in \mathcal{H}\otimes\mathcal{F}
\end{equation*}
since $\Theta_t$ is unitary and $2^{k-q} \in \mathbb{N}$ for sufficiently large $k$. 
Using the strong convergence of the assumed condition 2, we find 
that for every $t= r\,\tau$ with $r$ a dyadic rational, we have
\begin{equation}\label{eq hatmathcalRk}
\left\|(\hat{\mathcal{R}}^k)^{\left\lfloor t\tfrac{2^k}{\tau }\right\rfloor}\psi - \hat{\mathcal{U}}_t\psi\right\|\xrightarrow{k\to\infty} 0,\qquad \forall \psi \in \mathcal{H}\otimes \mathcal{F}.
\end{equation}
Since $\hat{U}_t$ is strongly continuous in $t$ and the dyadic rationals are dense in $\mathbb{R}$, 
we see that equation  \eqref{eq hatmathcalRk} holds for every fixed $t \ge 0$. We can now apply the implication 
$2 \Rightarrow 3$ of Theorem \ref{theorem Kurtz}.

$3 \Rightarrow 2:$ Trivial.
\qed

{\bf Acknowledgement.}
LB thanks Bas Janssens for stimulating discussion and a critical reading of an early version of this article.

\bibliography{discrete_SLH}
\end{document}